\definecolor{bleu_sombre}{rgb}{0,0,0.6}  \definecolor{rouge_sombre}{rgb}{0.8,0,0}\definecolor{vert_sombre}{rgb}{0,0.6,0}
\theoremstyle{plain}
\newtheorem{theorem}{{Theorem}}[section] 
\newtheorem*{theorem*}{{Theorem}}
\newtheorem{proposition}[theorem]{Proposition}
\newtheorem*{proposition*}{Proposition}
\newtheorem{corollary}[theorem]{Corollary}
\newtheorem*{corollary*}{Corollary}
\newtheorem{lemma}[theorem]{Lemma}
\newtheorem*{lemma*}{Lemma}
\theoremstyle{definition}
\newtheorem*{definition*}{Definition}
\theoremstyle{remark}
\newtheorem{remark}[theorem]{Remark}
\newcommand{\abs}[1]{\left\vert #1\right\vert}      
\newcommand{\nr}[1]{\left\Vert #1\right\Vert}         
\newcommand{\innp}[2]{\left< #1 , #2 \right>}          
\newcommand{\pppg}[1] {\left< #1 \right>}
\newcommand{\set}[1]{\left\{ #1 \right\}}		
\newcommand{\Ii}[2] {\left\{ #1,\dots,#2 \right\}}
\renewcommand{\leq}{\leqslant}	\renewcommand{\geq}{\geqslant}
\renewcommand{\bar}[1]{\overline{#1}}
\newcommand{\ie}{{\it{i.e. }}}
\newcommand{\inv}{^{-1}}
\newcommand{\bigo}[2]{\mathop{O}\limits_{#1 \to #2}}
\newcommand{\st}{\,:\,}
\newcommand{\restr}[2]{\left.#1\right|_{#2}}         
\renewcommand{\Re}{\mathsf{Re}}        
\renewcommand{\Im}{\mathsf{Im}}  
\newcommand{\trsp}{^{\intercal}}
\newcommand{\eqv}{\Longleftrightarrow}
\newcommand{\Dom}{\mathsf{Dom}}
\newcommand{\Sp}{\mathsf{Sp}}
\renewcommand{\ker}{\mathsf{ker}} 
\newcommand{\Ran}{\mathsf{Ran}}
\newcommand{\Id}{\mathsf{Id}} 
\newcommand{\divg}{\mathop{\rm{div}}\nolimits}
\newcommand{\loc}{{\mathsf{loc}}}
\newcommand{\R}{\mathbb{R}}		\newcommand{\C}{\mathbb{C}}
\newcommand{\N}{\mathbb{N}}	\newcommand{\Z}{\mathbb{Z}}	
    \newcommand{\T}{\mathbb{T}}
\renewcommand{\a}{\alpha}\renewcommand{\b}{\beta}\newcommand{\g}{\gamma}\newcommand{\G}{\Gamma}\renewcommand{\d}{\delta}\newcommand{\D}{\Delta}\newcommand{\e}{\varepsilon}\newcommand{\z}{\zeta} \newcommand{\y}{\eta}\renewcommand{\th}{\theta}\newcommand{\Th}{\Theta}\renewcommand{\k}{\kappa}\renewcommand{\l}{\lambda}\renewcommand{\L}{\Lambda}\newcommand{\m}{\mu}\newcommand{\n}{\nu}\newcommand{\x}{\xi}\newcommand{\s}{\sigma}\renewcommand{\t}{\tau}\newcommand{\f}{\varphi}\newcommand{\vf}{\phi}\newcommand{\h}{\chi}\newcommand{\p}{\psi}
\newcommand{\Ac}{{\mathcal A}}\newcommand{\Bc}{{\mathcal B}}\newcommand{\Dc}{{\mathcal D}}\newcommand{\Hc}{{\mathcal H}}\newcommand{\Kc}{{\mathcal K}}\newcommand{\Lc}{{\mathcal L}}\newcommand{\Oc}{{\mathcal O}}\newcommand{\Pc}{{\mathcal P}}\newcommand{\Rc}{{\mathcal R}}\newcommand{\Sc}{{\mathcal S}}\newcommand{\Tc}{{\mathcal T}}\newcommand{\Uc}{{\mathcal U}}\newcommand{\Vc}{{\mathcal V}}
\newcommand{\stepp}{\noindent {\bf $\bullet$}\quad }
\begin{document}

\newcommand{\HH}{\mathcal H}\newcommand{\LL}{\mathcal L}\newcommand{\EE}{\mathscr E}

\newcommand{\AAc}{\Ac} \newcommand{\tAAc}{\widetilde \Ac} \newcommand{\AAp}{\Ac_\mathbf{p}} \newcommand{\tAAp}{\widetilde \Ac_\mathbf{p}} \newcommand{\AAs}{\Ac_\s} \newcommand{\AAss}{\Ac_\s^*} \newcommand{\AAso}{\Ac_0} \newcommand{\AAsso}{\Ac_0^*} 

\newcommand{\Pg}{P}\newcommand{\PG}{P_G}\newcommand{\Pgper}{\tilde P_{\mathbf{p}}}
\newcommand{\Pp}{P_{\mathbf{p}}}\newcommand{\Pm}{P_{\mathbf{h}}}\newcommand{\Po}{P_{0}}
\newcommand{\Psp}{P_{\mathbf{p}}^\s}
\newcommand{\Psnp}{P_{\mathbf{p}}^{\s_n}}\newcommand{\Pop}{P_{\mathbf{p}}^{0}}
\newcommand{\Gp}{G_{\mathbf{p}}}\newcommand{\Go}{G_{0}}\newcommand{\Gm}{G_{\mathbf{h}}}
\newcommand{\ap}{a_{\mathbf{p}}}\newcommand{\aps}{a_{\mathbf{p}}^*}\newcommand{\ao}{a_{0}}
\renewcommand{\wp}{w_{\mathbf{p}}}\newcommand{\wps}{w_{\mathbf{p}}^*}\newcommand{\wo}{w_{0}}
\newcommand{\bm}{b_{\mathbf{h}}}\newcommand{\bp}{b_{\mathbf{p}}}
\newcommand{\Rp}{R_{\mathbf{p}}} 
\newcommand{\Ilow}{I_{\low}}
\newcommand{\tIlow}{\widetilde I_{\mathbf{p}}}
\newcommand{\Ilowp}{I_{\mathbf{p}}}
\newcommand{\Ps}{\Pi_\s}\newcommand{\Pso}{\Pi_0}
\newcommand{\Up}{U_{\mathbf{p}}} \newcommand{\upp}{u_{\mathbf{p}}} \newcommand{\um}{u_{\mathbf{h}}}
\newcommand{\ls}{\l_\s}\newcommand{\fs}{\f_\s} \newcommand{\fo}{\f_0} 
\newcommand{\low}{{\rm{low}}}
\newcommand{\high}{{\rm{high}}}
\newcommand{\Ulow}{U_{\low}^{\e}} \newcommand{\Uhigh}{U_{\high}^{\e}}
\newcommand{\per}{\#}
\newcommand{\Thbxbt}{\Th^{\b_x}_{\b_t}} \newcommand{\Thbxbts}{\Th^{\b_x}_{\b_t}(\s)}
\newcommand{\Thbxo}{\Th^{\b_x}_{0}}\newcommand{\Thoo}{\Th_{0}^{0}}
\newcommand{\Thuo}{\Th_{0}^{1}} \newcommand{\Thobt}{\Th^0_{\b_t}}
\newcommand{\tThbx}{\widetilde \Th_{\tilde \b_x}}\newcommand{\tThbxs}{\widetilde \Th_{\tilde \b_x}(\s)}
\newcommand{\Tha}{\Th^\a_0}\newcommand{\tTha}{\widetilde \Th_{\tilde \a}}
\newcommand{\Fzper}{F_1^{\mathbf{p}}(z)}
\newcommand{\Tooper}{\Tc_{0,0}^{\mathbf{p}}(t)}
\newcommand{\rr}{r}
\renewcommand{\ggg}{\g}
\newcommand{\To}{\Tc_0} \newcommand{\tTo}{\tilde \Tc_0}\newcommand{\Toper}{\Tc_{0,\per}} \newcommand{\Tinf}{\Tc_\infty}

\title[The asymptotically periodic damped wave equation]{Energy decay and diffusion phenomenon for the asymptotically periodic damped wave equation}

\author{Romain Joly}
\address[R. Joly]{Institut Fourier - UMR5582
CNRS/Universit\'e Grenoble Alpes - 100, rue des Maths -
CS 40700
F-38058 Grenoble cedex 9, France}
\email{romain.joly@univ-grenoble-alpes.fr} 

\author{Julien Royer}
\address[J. Royer]{Institut de Math\'ematiques de Toulouse, Universit\'e Toulouse 3, 118 route de Narbonne - F31062 Toulouse c\'edex 9, France}
\email{julien.royer@math.univ-toulouse.fr}

\subjclass[2010]{35L05, 35B40, 47B44, 35B27, 47A10}
\keywords{Damped wave equation, energy decay, diffusive phenomenon, periodic media.}

\begin{abstract}
We prove local and global energy decay for the asymptotically periodic damped wave equation on the Euclidean space. Since the behavior of high frequencies is already mostly understood, this paper is mainly about the contribution of low frequencies. We show in particular that the damped wave behaves like a solution of a heat equation which depends on the H-limit of the metric and the mean value of the absorption index.
\end{abstract}

\maketitle

\section{Introduction and statement of the main results}

In this paper we are interested in the asymptotic behavior for large times of the damped wave equation in an asymptotically periodic setting in $\R^d$, $d \geq 1$. In particular, the damping is effective at infinity but it is not assumed to be greater than a positive constant outside some compact subset of $\R^d$. Our original motivation is the local energy decay. We also obtain some results for the global energy. However, because of the contribution of low frequencies, there is no exponential decay for the corresponding semigroup, even under the usual Geometric Control Condition. More precisely, we will prove that the contribution of low frequencies behaves like a solution of an explicit heat equation. This will explain the rate of decay for the local energy decay.

\subsection{The damped wave equation in an asymptotically periodic setting}

We consider on $\R^d$ the damped wave equation
\begin{equation} \label{wave}
\begin{cases}
\partial_t ^2 u  + \Pg u + a(x) \partial_t u = 0 & \text{on } \R_+ \times  \R^d,\\
\restr{(u,\partial_t u)}{t=0} = (u_0,u_1) & \text{on } \R^d,
\end{cases}
\end{equation}
where $(u_0,u_1) \in H^1(\R^d) \times L^2(\R^d)$. 

The function $a$ is the absorption index. It is bounded, continuous, and takes non-negative values.

The operator $\Pg$ is a general Laplace operator. More explicitely, we consider a metric $G(x) = (G_{j,k}(x))_{1\leq j,k\leq d}$ on $\R^d$ and a positive function $w$ such that, for some $G_{\max} \geq G_{\min} > 0$ and $w_{\max} \geq w_{\min} > 0$ and for all $x \in \R^d$ and $\x \in \R^d$,
\begin{equation} \label{hyp-GbC}
G_{\min} \abs \x^2 \leq \innp{G(x) \x}{\x} \leq G_{\max} \abs \x^2 \quad \text{and} \quad w_{\min} \leq w(x) \leq w_{\max}.
\end{equation}
We also assume that $G$ and $w$ are smooth with bounded derivatives. Then we set
\begin{equation} \label{def-P}
\Pg := -\frac 1 {w(x)} \divg G(x) \nabla.
\end{equation}
This includes in particular the case of the standard Laplace operator (with $G(x) = \Id$ and $w(x) = 1$), a Laplacian in divergence form (with $w(x) = 1$) or the Laplacian associated with a metric $g(x)$ (with $w(x) = \det(g(x))^{\frac 12}$ and $G(x) = \det(g(x))^{\frac 12} g(x)\inv$).\\

The purpose of this paper is to consider the case where $G$, $w$ and $a$ are asymptotically periodic. This means that we can write  
\[
G(x) = \Gp(x) + \Go(x), \quad w(x) = \wp(x) + \wo(x)  \quad \text{and} \quad a(x) = \ap(x) + \ao(x),
\]
where $\Gp$, $\wp$ and $\ap$ are $\Z^d$-periodic and $\Go$, $\wo$ and $\ao$ go to 0 at infinity. More precisely, we assume that there exist $\rho_G, \rho_a > 0$ and $C_G,C_a \geq 0$ such that 
\begin{equation} \label{hyp-vanishing}
\abs{\Go(x)} \leq C_G \pppg x^{-\rho_G} \quad \text{and} \quad \abs{\wo(x)} + \abs{\ao(x)} \leq C_a \pppg x^{-\rho_a},
\end{equation}
where $\pppg x$ stands for $\big( 1 + \abs x^2 \big)^{\frac 12}$. 
The periodic part $\ap$ of the absorption index is allowed to vanish but it is not identically zero, so that the damping is effective at infinity. 
Notice that if $G(x)$ and $w(x)$ are periodic and $a(x)$ is constant, then we recover the setting of \cite{OrivePaZu01}.\\

Let $u$ be a solution of \eqref{wave}. We can check that if $a = 0$ then the energy 
\begin{equation} \label{def-E}
E(t) := \int_{\R^d} \left(w(x) \abs{\partial_t u(t,x)}^2 + G(x) \nabla u(t,x) \cdot \nabla \bar u(t,x) \right) \, dx
\end{equation}
is constant. However, with the damping this is a non-increasing function of time. More precisely, for $t_1 \leq t_2$ we have
\begin{equation} \label{easy-decay}
E(t_2) - E(t_1) = - 2 \int_{t_1}^{t_2} \int_{\R^d} a(x) \abs{\partial_t u(t,x)}^2 w(x) \, dx \, dt \leq 0.
\end{equation}
Our purpose in this paper is to say more about the decay of this quantity. We are also interested in the decay of the local energy 
\[
E_\d(t) := \int_{\R^d} \pppg x^{-2\d} \left( w(x)\abs{\partial_t u(t,x)}^2 + G(x) \nabla u(t,x) \cdot \nabla \bar u(t,x) \right) \, dx,
\]
where $\d \geq 0$.

\subsection{The geometric damping condition on classical trajectories}

The local energy decay for the wave equation in unbounded domains and the global energy decay for the damped wave equation in compact domains are two problems which have quite a long history.

In the first case the global energy is conserved but, at least for the free setting, the energy escapes to infinity. In perturbed settings, it is then important to know wether some energy can be trapped, to estimate the dependance of the decay of the local energy with respect to the initial condition, etc. We refer for instance to \cite{morawetzrs77,melrose79,burq98,bonyh12,bouclet11} for different results in various asymptotically free settings.

For the damped wave equation we really have a loss of energy. Then the goal of stabilisation results is to understand where the damping should be effective to make this energy go to 0 (with the same kind of questions about the rates of decay). We refer for instance to \cite{raucht74,bardoslr92,lebeau96,lebeaur97}. 

The behavior of the energy of a wave depends on its frequency. The main difficulties usually come from the contributions of high and low frequencies. It is now well known that for high frequencies the behavior of the wave depends on the geometry of the domain. More precisely, the wave basically propagates following the classical trajectories for the corresponding Hamiltonian problem. 
Then the local energy decays uniformly in unbounded domains if and only if all these trajectories go to infinity (this is the so-called non-trapping condition), while for the damped wave equation in compact domains, the global energy decays uniformly if and only if all the classical trajectories meet the damping region (this is the geometric control condition, G.C.C. for short). 
The problems with the contributions of low frequencies only appear in unbounded domains. The local energy for the contribution of low frequencies decays uniformly without assumption, but it can be slower than for high frequencies. Typically, for compactly supported perturbations of the free setting in even dimension, the local energy for the contribution of low frequencies decays like $t^{-2d}$, while the contribution of high frequencies decays faster than any power of $t$ under the non-trapping condition.\\

In this paper we analyse the local energy decay for damped wave equation in an unbounded domain. In this case the criterion for the contribution of high frequencies combines the non-trapping and the geometric control conditions: each bounded trajectories should either go through the damping region or escape to infinity. 

For a compactly supported or asymptotically vanishing damping, we recover with this assumption the same kind of results as for the undamped analog under the non-trapping condition. See \cite{alouik02,khenissi03, boucletr14, royer-dld-energy-space}. This is basically due to the fact that the part which escapes to infinity is no longer influenced by the damping and behaves as in the free case. In this kind of setting the trajectories at infinity never see the damping, so we cannot expect a global energy decay.\\

The situation is quite different when the damping is effective at infinity. In the asymptotically periodic case, we have at least the property that all the points in $\R^d$ are uniformly close to the damping region. 

For the contribution of high frequencies we will use the results of \cite{BurqJo}, where the damped Klein-Gordon equation is considered in a similar setting. We recall that the Klein-Gordon equation is analogous to the wave equation, except that the non-negative operator $P$ is replaced by $P + 1$. In this case there is no difficulty with the low frequencies (0 is no longer in the spectrum), but this does not make any significant difference for the contribution of high frequencies. So for high frequencies it is equivalent to look at the wave or at the Klein-Gordon equation.

Thus, we can first deduce from \cite{BurqJo} that we have at least a logarithmic decay with loss of regularity for the contribution of high frequencies . If $\Pg = -\D$ and $a$ is periodic, then by \cite{Wunsch} we obtain a polynomial decay (still with loss of regularity). The best decay is obtained when all the classical trajectories go uniformly through the damping. Since our main purpose is the analysis of the contribution of low frequencies, we assume that this is the case in this paper.  

For a more precise statement, we introduce on $\R^{2d} \simeq T^* \R^d$ the symbol 
\[
p : (x,\x) \mapsto \frac {\innp{G(x)\x}{\x}}{w(x)}
\]
and the corresponding classical flow: for $(x_0,\x_0) \in \R^{2d}$ we denote by $\vf^t(x_0,\x_0)$ the solution of the Hamiltonian problem 
\[
\begin{cases}
\frac d {dt} \vf^t(x_0,\x_0) =  \big(\nabla_\x p (\vf^t(x_0,\x_0)), -\nabla_x p (\vf^t(x_0,\x_0)) \big), \\
\vf^0(x_0,\x_0) = (x_0,\x_0).
\end{cases}
\]
We recall that $\vf^t(x_0,\x_0) = (x_0 + 2t\x_0, \x_0)$ if $\Pg = -\D$ and $\vf^t$ is the geodesic flow corresponding to the metric $g$ if $\Pg = -\D_g$. For a review about semiclassical analysis, we refer to \cite{zworski}. 

We assume that there exist $T > 0$ and $\a > 0$ such that 
\begin{equation} \label{hyp-GCC}
\forall (x_0,\x_0) \in p\inv(\set 1), \quad \int_0^T a \big( \vf^t(x_0,\x_0) \big) \, dt \geq \a,
\end{equation}
where we have extended $a$ to a function on $\R^{2d}$ which only depends on the first $d$ variables.
Under this assumption, we know from Theorem 1.2 in \cite{BurqJo} that the global (and therefore local) energy of the contribution of high frequencies decays uniformly (without loss of regularity) exponentially. Thus, in all the results of this paper, the restrictions in the rates of decay are due to the contributions of low frequencies.

\subsection{Energy decay for the damped wave equation in the periodic setting}

After multiplication by $w(x)$, the problem \eqref{wave} reads
\begin{equation} \label{wave-w}
\begin{cases}
w(x) \partial_t ^2 u + \PG u + b(x) \partial_t u = 0 & \text{on } \R_+ \times  \R^d,\\
\restr{(u,\partial_t u)}{t=0} = (u_0,u_1) & \text{on } \R^d,
\end{cases}
\end{equation}
where $b(x) := w(x) a(x)$ and $\PG$ is a Laplacian in divergence form:
\[
\PG := -\divg G(x) \nabla.
\]

We denote by $\Sc$ the Schwartz space of smooth functions whose derivatives decay faster than any polynomial at infinity. For $\d \in \R$ we denote by $L^{2,\d}(\R^d)$ the weighted space $L^2(\pppg x^{2\d} \, dx)$ and by $H^{k,\d}(\R^d)$, $k\in\N$, the corresponding Sobolev space. Then we set 
\[
\LL := L^2(\R^d) \times L^2(\R^d), \quad \HH := H^1(\R^d) \times L^2(\R^d) \quad \text{and} \quad \HH^\d := H^{1,\d}(\R^d) \times L^{2,\d}(\R^d).
\]

We begin with the purely periodic case. Thus, for $(u_0,u_1) \in \HH$ we first consider the problem 
\begin{equation} \label{wave-per}
\begin{cases}
\wp(x) \partial_t ^2 \upp  + \Pp \upp + \bp(x) \partial_t \upp = 0 & \text{on } \R_+ \times  \R^d,\\
\restr{(\upp,\partial_t \upp)}{t=0} = (u_0,u_1) & \text{on } \R^d,
\end{cases}
\end{equation}
where
\[
\Pp := - \divg \Gp(x) \nabla \quad \text{and} \quad \bp(x) := \wp(x) \ap(x).
\]
In the following result we describe the local and global energy decay for the solution of \eqref{wave-per}.

\begin{theorem}[Local and global energy decay in the periodic setting] \label{th-energy-decay-periodic}
Assume that the damping condition \eqref{hyp-GCC} holds. Let $s_1, s_2 \in \big[0,\frac d 2\big]$ and $\k > 1$. Let $s \in [0,1]$. Then there exists $C \geq 0$ such that for $t \geq 0$ and $U_0 = (u_0,u_1) \in \HH^{\k s_2+s}$ we have 
\begin{eqnarray*}
\nr{\upp(t)}_{L^{2,-\k s_1}} &\leq& C \pppg t^{- \frac {s_1 + s_2} 2} \nr{U_0}_{\HH^{\k s_2}},\\
\nr{\partial_t \upp(t)}_{L^{2,-\k s_1}} &\leq& C \pppg t^{-1 - \frac {s_1 + s_2} 2} \nr{U_0}_{\HH^{\k s_2}},\\
\nr{\nabla \upp(t)}_{L^{2,-\k s_1 -s}} &\leq& C \pppg t^{- \frac {1 + s} 2 - \frac {s_1 + s_2} 2} \nr{U_0}_{\HH^{\k s_2 + s}},
\end{eqnarray*}
where $\upp(t)$ is the solution of \eqref{wave-per}.
\end{theorem}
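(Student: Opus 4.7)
I denote by $\AAp$ the generator on $\HH$ of the semigroup solving \eqref{wave-per}, and split the data with a smooth spectral cut-off $U_0 = \chi_\low(\AAp) U_0 + \chi_\high(\AAp) U_0$, where $\chi_\low$ isolates a neighborhood of $0$ in the spectrum of $\AAp$. Under the damping condition \eqref{hyp-GCC}, Theorem 1.2 of \cite{BurqJo} (applied to the Klein--Gordon version of \eqref{wave-per}, which coincides with $\AAp$ at non-zero frequencies) gives uniform exponential decay of $e^{-t\AAp}\chi_\high(\AAp) U_0$ in $\HH$. Since $\pppg x^{-\k s_1} \leq 1$, this piece dominates all three target bounds trivially, so it remains to analyze the low-frequency piece $U_\low(t) = e^{-t\AAp}\chi_\low(\AAp) U_0$.

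\textbf{Floquet--Bloch reduction.} Since $\wp$, $\Gp$ and $\ap$ are $\Z^d$-periodic, the Floquet transform diagonalizes $\AAp$ as a direct integral
\[
\AAp \simeq \int_B^\oplus \AAp(\th) \, d\th
\]
over the Brillouin zone $B = [-\pi,\pi]^d$, where each fiber $\AAp(\th)$ is a maximal dissipative operator on the cell with $\th$-quasi-periodic boundary conditions. For $\th$ away from $0$, the spectrum of $\AAp(\th)$ sits in a half-plane $\set{\Re z \leq -c_0 < 0}$ uniformly, which yields exponential decay of the corresponding portion of $U_\low$. The delicate zone is a neighborhood $\abs \th \leq \d_0$, where $\AAp(0)$ has $0$ as a semisimple eigenvalue of multiplicity two, with eigenspace spanned by $(1,0)$ and $(0,1)$. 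By Kato--Rellich perturbation theory, two eigenvalues $\l_\pm(\th)$ branch holomorphically from $0$; a direct computation using the periodic profile equations gives the expansion
\[
\l_\pm(\th) = -\frac{\innp{G^H \th}{\th}}{\langle \bp \rangle} + O(\abs \th^3),
\]
where $G^H$ is the H-limit of $\Gp$ and $\langle \bp \rangle = \int_{[0,1]^d} \bp$. In other words, to leading order the low-frequency part of $\AAp$ is conjugate to the generator of the effective heat semigroup $e^{-tH}$ with $H = -\langle \bp\rangle\inv \divg G^H \nabla$.

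\textbf{From Bloch asymptotics to the weighted estimates.} Writing $e^{-t\AAp}\chi_\low(\AAp) = e^{-tH}\Pi + \Rc(t)$, where $\Pi$ is the projection on the branch space and $\Rc(t)$ gathers the cubic and transverse errors, the three target estimates reduce to standard weighted $L^2$-decay bounds for the constant-coefficient heat semigroup $e^{-tH}$ of the form $\nr{e^{-tH} v_0}_{L^{2,-\s_1}} \lesssim \pppg t^{-(\s_1+\s_2)/2} \nr{v_0}_{L^{2,\s_2}}$ for $\s_1,\s_2 \in [0,\frac d 2]$, together with the gain of $\pppg t^{-1/2}$ per spatial derivative and $\pppg t^{-1}$ for $\partial_t v = -Hv$. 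Under the Floquet transform, $L^{2,\s}(\R^d)$ corresponds to $\th$-sections with $H^\s$-regularity on $B$, but matching $\pppg x^{-\k s_1}$-weights with the cut-off $\abs \th \leq \d_0$ and absorbing $\Rc(t)$ forces a tiny loss in the Sobolev exponent; this is the origin of the factor $\k > 1$ in the statement.

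\textbf{Main obstacle.} The heart of the proof is the spectral step: explicitly identifying the Hessian of $\l_\pm$ at $\th = 0$ with $-\langle \bp\rangle\inv G^H$ (which is what justifies the heat-equation picture announced in the abstract), and controlling the remainder $\Rc(t)$ uniformly in $\th$ on a scale fine enough to recover the sharp rates. Verifying that the transverse spectrum of $\AAp(\th)$ stays uniformly in a half-plane as $\th \to 0$, and propagating the weighted control through the inverse Floquet transform, is where the real work lies.
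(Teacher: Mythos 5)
Your overall skeleton (low/high frequency split, Floquet--Bloch reduction, perturbation theory near the zero quasi-momentum, comparison with an effective heat flow) is the same as the paper's, but there is a genuine error in the spectral step that would derail the argument.

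You claim that $\AAp(0)$ (in the paper's notation, $\Ac_0$, the $\s=0$ fiber) has $0$ as a \emph{semisimple eigenvalue of multiplicity two} with eigenspace spanned by $(1,0)$ and $(0,1)$, producing two branches $\l_\pm(\th)$. This is wrong. Writing the fiber operator as
\[
\Ac_0 = \begin{pmatrix} 0 & w_{\mathbf p}^{-1} \\ P_{\mathbf p} & -i a_{\mathbf p}\end{pmatrix},
\]
one sees that $\Ac_0 (u,v) = 0$ forces $v = 0$ and $P_{\mathbf p} u = 0$, so $u$ is constant: the kernel is one-dimensional, spanned by $(1,0)$ only. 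The vector $(0,1)$ is not in the kernel since $\Ac_0 (0,1) = (w_{\mathbf p}^{-1}, -i a_{\mathbf p}) \neq 0$. More intuitively, at frequency $0$ the scalar ODE $w_{\mathbf p}\,\ddot u + b_{\mathbf p}\, \dot u = 0$ has one root at $0$ and a second root that is uniformly decaying (it sits at a non-zero distance from the origin), so only a single Bloch branch $\l_\s$ emanates from $0$; the other eigenvalues live in a half-plane $\Im \l \leq -\g_2 < 0$ uniformly for small $\s$ (Proposition \ref{prop-eigenvalues}). Moreover, because $\Ac_0$ is not selfadjoint, you also need to rule out a Jordan block at $0$: the paper establishes that $\ker \Ac_0^2 \subset \ker \Ac_0$ using periodicity and $\int_{\T} b_{\mathbf p} \neq 0$, so $0$ is \emph{algebraically} simple. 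Your proposal silently assumes semisimplicity by positing a two-dimensional eigenspace that does not exist.

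A second issue is technical but not cosmetic: you invoke a spectral cut-off $\chi_{\rm low}(\Ac_{\mathbf p})$ and $\chi_{\rm high}(\Ac_{\mathbf p})$, which is not available as such for the non-selfadjoint, non-normal operator $\Ac_{\mathbf p}$. The paper instead separates frequencies by a Fourier--Laplace transform of the propagator, a cut-off $\h(\Re(z))$ in the spectral \emph{parameter}, and a contour deformation to $\G_\m$ and $\widetilde \G$ (Propositions \ref{prop-high-freq}--\ref{prop-eAAc-Ilow}, \ref{prop-reduc-I0}). You should either adopt that device or justify the functional calculus you are using. Finally, your leading term $\l_\pm(\th) = -\langle G^H\th,\th\rangle/\langle b_{\mathbf p}\rangle + O(|\th|^3)$ is missing the factor $i$ appearing in \eqref{dec-lambda-s}; this is a sign/convention issue if your $\Ac_{\mathbf p}$ is $i$ times the paper's, but you should make your convention explicit since it affects which half-plane is dissipative. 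The identification of the Hessian with $G_{\mathbf h}/b_{\mathbf h}$ that you flag as the main obstacle is indeed the content of Proposition \ref{prop-DL-ls}, but it is carried out for a single branch, not two.
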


Notice that we give decay estimates for the energy of the wave (\ie for the time and spatial derivatives of the solution), but also for the solution itself.

We will see that these estimates are sharp. When $s_1 = s_2 = s = 0$, we obtain estimates for the global energy (notice, however, that in the right-hand side $\nr{U_0}_{\Hc}$ is not the initial energy, see Remark \eqref{rem-energy-space} below). When $s_1$ is positive, we are estimating the local energy (which decays faster than the global energy). On the other hand, the parameter $s_2$ measures the localization of the initial data. 
We notice that even the global energy decays faster if the initial data is assumed to be localized. Finally we observe that the spatial derivatives do not play the same role as the time derivative, which is unusual for a wave equation. However, if we can take $s = 1$ (this is the case if we are interested in the local energy decay for localized initial data) then we recover for the spatial derivatives the same estimates as for the time derivative.

\subsection{Comparison with the solution of a heat equation}

As mentioned above, the rates of decay in Theorem \ref{th-energy-decay-periodic} are not usual for a wave equation. This is due to the contribution of low frequencies, which under a strong damping behaves like a solution of a heat equation.

This phenomenon has already been observed in earlier papers. The simplest case is the standard wave equation with constant damping 
\begin{equation} \label{eq-wave-model}
\partial_t^2 u - \D u + \partial_t u = 0.
\end{equation}
The energy decay for the solutions of \eqref{eq-wave-model} has been first studied in \cite{Matsumura76}. More precise results have then be given in \cite{nishihara03,marcatin03, hosonoo04, narazaki04}. In these papers it is proved that a solution of \eqref{eq-wave-model} behaves for large times like a solution of the heat equation
\begin{equation} \label{eq-heat-model}
- \D v + \partial_t v = 0.
\end{equation}
This phenomenon can be understood as follows. Since G.C.C. is satisfied when $a \equiv 1$, the behavior of the wave for large times is governed by the contribution of low frequencies. But for very slowly oscillating solutions, we expect that the contribution of the term $\partial_t^2 u$ in \eqref{eq-wave-model} will be very small compared to $\partial_t u$, and then $u$ will look like a solution of \eqref{eq-heat-model}. 
The same phenomenon has been observed in an exterior domain (see \cite{Ikehata02} for a constant absorption index and \cite{AlouiIbKh15} for an absorption index equal to 1 outside some compact) and in a wave guide (see \cite{royer-diss-wave-guide} for a constant dissipation at the boundary and \cite{MallougRo} for an asymptotically constant absorption index). For a slowly decaying absorption index ($a(x) = \pppg x^{-\rho}$ with $\rho \in (0,1]$) we refer to \cite{TodorovaYo09,
IkehataToYo13, Wakasugi14} (we recall from \cite{royer-dld-energy-space} that if $a(x) \lesssim \pppg x^{-\rho}$ with $\rho > 1$ then we recover the behavior of the undamped wave equation). For the problem in an exterior domain with possibly slowly decaying damping, we refer to \cite{SobajimaWak16}. These questions are also of interest for the semilinear damped wave equation (see \cite{Wakasugi17} and references therein). Finally, results on an abstract setting can be found in \cite{Chill-Ha-04,Radu-To-Yo-11,Nishiyama16,Radu-To-Yo-16}.\\

The same phenomenon occurs in our periodic setting. We can be more precise than in Theorem \ref{th-energy-decay-periodic} and prove that our wave can indeed be written as the sum of the solution of some heat equation on $\R^d$ and a smaller term (in the sense that it decays faster when $t$ goes to $+\infty$). Notice that this problem has already been studied in \cite{OrivePaZu01} (see the discussion after Theorem \ref{th-heat}).

As already said, this diffusive phenomenon is due to the contribution of low frequencies. Assume (at least formally) that $u$ is a solution of \eqref{wave-w} oscillating at a frequency $\t$ with $\abs \t \ll 1$. If for $t \geq 0$ and $x \in \R^d$ we set 
\[
u_\t (t,x) = u \left(\frac t \t, \frac x \t \right),
\]
then the function $u_\t$ oscillates at frequency 1 and is solution of 
\[
\wp \left( \frac x \t \right) \partial_t^2 u_\t - \divg \Gp \left( \frac x \t \right) \nabla u_\t + \frac 1 \t \bp \left( \frac x \t \right) \partial_t u_\t = 0.
\]
This suggests that the first term should not play any role when $\t \to 0$. Moreover, at the limit the wave should only see the mean value of the highly oscillating damping $\bp\big( \frac x \t \big)$. We set 
\begin{equation} \label{def-bm}
\bm = \int_{\T} \wp (y) \ap(y) \, dy,
\end{equation}
where 
\[
\T = \left( -\frac 12, \frac 12 \right]^d.
\]
Similarly, for the second term, we consider the effective operator which describes the asymptotic behavior of the operator $- \divg \Gp \big( \frac x \t \big) \nabla$ at the limit $\t \to 0$. This is given by the periodic homogenization theory (see for instance \cite{BensoussanLionsPapanicolaou,allaire,tartar}). Let $\Gm$ be the H-limit of $\Gp(\frac x \t \big)$ when $\t$ goes to 0. This means that if $v_\t,v \in H^1(\R^d)$ and $f \in H\inv(\R^d)$ are such that
\[
-\divg \Gp\left(\frac x \t \right) \nabla v_\t = f \quad \text{and} \quad - \divg \Gm \nabla v = f,
\]
then, as $\t$ goes to 0,
\[
v_\t \rightharpoonup v \quad  \text{in $H^1(\R^d)$,} \quad \text{and} \quad   \Gp\left(\frac x \t \right) \nabla v_\t \rightharpoonup \Gm \nabla v \quad \text{in $L^2(\R^d)$.}
\]
In general, the matrix $\Gm$ is not the mean value of $\Gp$. If for $\x \in \R^d$ we denote by $\p_\x$ the $\Z^d$-periodic solutions of
\begin{equation} \label{def-psi}
-\divg  \Gp(x) (\x + \nabla \p_\x) = 0
\end{equation}
($\p_\x$ is defined up to a constant), and if we denote by $W(x)$ the $\Z^d$-periodic matrix such that  
\begin{equation} \label{def-J}
W(x) \x = \x + \nabla \p_\x(x),
\end{equation}
then $\Gm$ is in fact the mean value of $W(x)\trsp \Gp(x) W(x)$:
\begin{equation} \label{def-G-star}
\innp{\Gm \x}{\x} = \int_\T \innp{\Gp(x) (\x + \nabla \p_\x(x))}{(\x + \nabla \p_\x(x))} \, dx.
\end{equation}
Notice that it is natural to introduce all these quantities from the homogenization point of view (see \cite{ConcaVa97, OrtegaZu00, OrivePaZu01, ConcaOrVa02} for closely related contexts), but our proofs will be purely spectral. We will see in Section \ref{sec-periodic} how $\bm$, $\Gm$ and the functions $\p_\x$ naturally appear in this context.\\

Let 
\[
\Pm := -\divg \Gm \nabla.
\]
We now compare the solution $\upp$ of the dissipative wave equation \eqref{wave-per} with the solution $\um$ on $\R_+ \times  \R^d$ to the heat equation
\begin{equation} \label{heat-Q}
\bm \partial_t \um + \Pm \um = 0
\end{equation}
with initial condition
\begin{equation} \label{heat-Q-CI}
\restr{\um}{t=0} = \frac {\wp} {\bm} (\ap u_0 + u_1).
\end{equation}

After a linear change of variables, the estimates of \cite{MallougRo} for the standard heat equation read as follows.

\begin{proposition} \label{prop-heat}
Let $s_1,s_2 \in \big[ 0, \frac d 2 \big]$ and $\k > 1$. Let $s \in [0,1]$. Then there exists $C \geq 0$ such that for all $t \geq 1$ we have 
\begin{eqnarray*}
\nr{\pppg x^{-\k s_1} e^{-\frac {t \Pm}\bm} \pppg x^{- \k s_2}}_{\mathscr{L}(L^2(\R^d))} &\leq& C \pppg t^{-\frac {s_1 + s_2} 2},\\
\nr{\pppg x^{-\k s_1} \partial_t e^{-\frac {t \Pm}\bm}\pppg x^{-\k s_2}}_{\mathscr{L}(L^2(\R^d))} &\leq& C \pppg t^{- 1 - \frac {s_1 + s_2} 2},\\
\nr{\pppg x^{-\k s_1 - s} \nabla e^{-\frac {t \Pm}\bm} \pppg x^{- \k s_2 - s}}_{\mathscr{L}(L^2(\R^d))} &\leq& C \pppg t^{- \frac {1 +s} 2 - \frac {s_1 + s_2} 2}.
\end{eqnarray*}
\end{proposition}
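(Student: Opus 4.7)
The plan is to reduce the three weighted estimates to the analogous ones for the Euclidean heat semigroup on $\R^d$, which are the content of \cite{MallougRo}. Since the matrix $\Gm$ defined in \eqref{def-G-star} is constant, symmetric, and positive definite, and $\bm > 0$, set $A := (\Gm / \bm)^{1/2}$ and define the unitary operator $S : L^2(\R^d) \to L^2(\R^d)$ by $(Sf)(y) := (\det A)^{1/2} f(Ay)$. A direct computation with the chain rule yields
\[
S\, \frac{\Pm}{\bm} \, S\inv = -\D_y, \qquad \nabla_x S\inv = A^{-\intercal} \, S\inv \nabla_y,
\]
while $\partial_t$ trivially commutes with $S$. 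In particular, $e^{- t \Pm / \bm} = S\inv e^{t \D_y} S$, so each of the three operators in the statement is, up to conjugation by $S$ and multiplication by the bounded constant matrix $A^{-\intercal}$ in the gradient case, a weighted operator built from $e^{t \D_y}$, $\partial_t e^{t \D_y}$, or $\nabla_y e^{t \D_y}$, respectively.

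Next I would check that the change of variables respects the weights: since $A$ is invertible, there exists $c > 0$ such that $c\inv \pppg y \leq \pppg{Ay} \leq c \pppg y$ for every $y \in \R^d$, and therefore multiplication by $\pppg x^{\a}$ corresponds, up to a multiplicative constant depending only on $\a$, $\Gm$, and $\bm$, to multiplication by $\pppg y^{\a}$ after conjugation by $S$. It therefore suffices to prove the three estimates with $\Pm/\bm$ replaced by $-\D$ and $\pppg x$ replaced by $\pppg y$, uniformly in $t \geq 1$.

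These estimates for the free heat semigroup on $\R^d$ are precisely what is established in \cite{MallougRo}, obtained from the explicit Gaussian kernel together with weighted $L^2$ bounds on Gaussian convolutions. The only delicate point is the borderline behavior at $s_i = d/2$, which is exactly what forces the strict inequality $\k > 1$ (allowing one to absorb a logarithmic divergence through the extra power in the weights); this is the technical heart of the estimates in \cite{MallougRo} and is the step I would expect to be the main obstacle if one were to redo the proof directly. Everything else is routine, and the proposition follows.
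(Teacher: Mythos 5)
Your proposal is correct and is exactly the paper's argument: the paper dispatches Proposition \ref{prop-heat} with the single sentence ``After a linear change of variables, the estimates of \cite{MallougRo} for the standard heat equation read as follows,'' and your conjugation by the unitary $S$ associated with $A = (\Gm/\bm)^{1/2}$, together with the observation that the weights $\pppg{x}^{-\a}$ and $\pppg{Ax}^{-\a}$ are comparable, is precisely that change of variables made explicit.
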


Here and everywhere below, we denote by $\mathscr L (\Kc_1,\Kc_2)$ the space of bounded operators from $\Kc_1$ to $\Kc_2$. We also write $\mathscr{L}(\Kc_1)$ for $\mathscr{L}(\Kc_1,\Kc_1)$.\\

The main result of this paper is the following. We prove that the difference between the solution $\upp$ of \eqref{wave-per} and the solution $\um$ of \eqref{heat-Q}-\eqref{heat-Q-CI} decays faster that $\um$ (except for the gradient if $s = 1$, in which case we have the same estimate).

\begin{theorem}[Comparison with the heat equation] \label{th-heat}
Assume that the damping condition \eqref{hyp-GCC} holds. 
Let $s_1,s_2 \in \big[0,\frac d 2]$ and $\k > 1$. Then there exists $C \geq 0$ such that for $t \geq 0$ and $U_0 = (u_0,u_1) \in \HH^{\k s_2}$ we have 
\begin{eqnarray*}
\nr{\upp(t) - \um(t)}_{L^{2,-\k s_1}} &\leq& C \pppg t^{- \frac 12 - \frac {s_1 + s_2} 2} \nr{U_0}_{\HH^{\k s_2}},\\
\nr{\partial_t \big(\upp(t) - \um(t) \big)}_{L^{2,-\k s_1}} &\leq& C \pppg t^{- \frac 32 - \frac {s_1 + s_2} 2} \nr{U_0}_{\HH^{\k s_2}},\\
\nr{\nabla \upp(t) - W \nabla \um(t)}_{L^{2,-\k s_1}} &\leq& C \pppg t^{- 1 - \frac {s_1 + s_2} 2} \nr{U_0}_{\HH^{\k s_2}},
\end{eqnarray*}
where $\upp(t)$ and $\um$ are the solutions of \eqref{wave-per} and \eqref{heat-Q}-\eqref{heat-Q-CI}, respectively, and $W(x)$ is defined by \eqref{def-J}. Moreover $W(x)$ is bounded.
\end{theorem}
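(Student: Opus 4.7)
The plan is to carry out a spectral analysis of the first-order formulation of \eqref{wave-per}. Setting $U = (\upp,\partial_t \upp)$ and letting $\AAp$ be the natural generator on $\HH$ such that $\partial_t U = -i \AAp U$, I would split the problem into high-frequency and low-frequency parts via a spectral cutoff $\h(\AAp)$ supported near $0$. For the high-frequency part, the geometric control condition \eqref{hyp-GCC} and the results of \cite{BurqJo} provide exponential decay, which is faster than any polynomial rate in the statement. Since the semigroup $e^{-t\Pm/\bm}$ smooths extremely fast, the same cutoff applied to $\um$ produces an exponentially small error. It therefore suffices to compare the two low-frequency contributions.

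The low-frequency analysis proceeds via the Floquet--Bloch transform adapted to the $\Z^d$-periodic coefficients $\Gp, \wp, \bp$. This decomposes $\AAp$ as a direct integral over the Brillouin zone $\T = (-1/2,1/2]^d$ of fiber operators $\AAp(\th)$ on the cell. At $\th = 0$, the constant mode $(1,0)$ belongs to $\ker \AAp(0)$, with the rest of the spectrum of $\AAp(0)$ bounded away from $0$. Kato's analytic perturbation theory thus yields, for $\abs \th$ small, an isolated simple eigenvalue $z_1(\th)$ and associated rank-one Riesz projector $\Pi_1(\th)$, both analytic in $\th$. Plugging the Ansatz $1 + \p_\th(x) + O(\abs\th^2)$ with $\p_\x$ the corrector from \eqref{def-psi}, expanding the dispersion relation, and averaging over the cell gives
\begin{equation*}
z_1(\th) = -i \, \frac{\innp{\Gm \th}{\th}}{\bm} + O(\abs \th^3),
\end{equation*}
with $\Gm$ given by \eqref{def-G-star} and $\bm$ by \eqref{def-bm}. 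The limiting projector $\Pi_1(0)$ then converts $U_0 = (u_0, u_1)$ into the function $\wp(\ap u_0 + u_1)/\bm$, which is exactly the initial condition \eqref{heat-Q-CI}.

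Comparison proceeds term by term on the Bloch side. The low-frequency part of $\upp(t)$ equals $e^{-it z_1(\th)} \Pi_1(\th) U_0$ plus the exponentially decaying contribution of the complementary spectral projector; on the heat side, $\um(t)$ contributes $e^{-t\innp{\Gm \th}{\th}/\bm} \Pi_1(0) U_0$. The difference splits into: the phase error $e^{-it z_1(\th)} - e^{-t \innp{\Gm \th}{\th}/\bm} = O(t\abs\th^3 e^{-ct\abs\th^2})$; the projector error $\Pi_1(\th) - \Pi_1(0) = O(\abs\th)$; and the complementary projector contribution, which decays exponentially and is uniformly small in small $\th$. Each of these carries at least one extra power of $\abs \th$ compared to the heat symbol, which by Plancherel in $\th$ translates into an extra $\pppg t^{-1/2}$ factor over the rates of Proposition \ref{prop-heat}. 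The appearance of $W \nabla \um$ rather than $\nabla \um$ in the gradient estimate comes from the fact that differentiating in $x$ the leading Bloch mode $e^{2\pi i \th \cdot x}(1 + \p_\th(x))$ yields the symbol $2\pi i W(x)\th$ by combining \eqref{def-psi} and \eqref{def-J}, and the boundedness of $W$ follows from elliptic regularity for \eqref{def-psi}. Finally, the weighted $L^{2,-\k s_1}$ bounds are obtained by trading powers of $\pppg x$ for $\th$-derivatives of the Bloch integrand, exactly as in the parallel heat estimates of \cite{MallougRo}.

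The main technical obstacle lies in this last step: controlling $\th$-derivatives of $z_1(\th)$, $\Pi_1(\th)$ and the normalised Bloch eigenfunction uniformly in a neighbourhood of $0$ and in $x$ on the cell, up to the required order. The strict inequality $\k > 1$ is precisely what absorbs the logarithmic losses appearing in the non-integer exchange between weights and $\th$-regularity in Plancherel, matching the threshold already present in Proposition \ref{prop-heat}. The structural ingredients from homogenization---the boundedness of $W$ and the periodicity and smoothness of $\p_\x$---are what make these high-order derivatives well-defined on the cell.
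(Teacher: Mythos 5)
Your overall blueprint matches the paper's: high/low frequency splitting, Floquet--Bloch diagonalization of $\AAp$ into fiber operators, analytic perturbation of the simple zero eigenvalue at $\s=0$ giving $\ls = -\tfrac{i}{\bm}\innp{\Gm\s}{\s} + O(|\s|^3)$, the identification of the initial datum via the adjoint eigenvector $\Psi_0$, and the mechanism by which $W$ appears in the gradient estimate. Two points, however, need attention.

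First, you announce the high/low frequency split as ``a spectral cutoff $\h(\AAp)$ supported near $0$.'' This does not make sense as written: $\AAp$ is maximal dissipative but not selfadjoint (indeed one of the paper's explicit motivations for not following \cite{OrivePaZu01} is that there is not even a Riesz basis of eigenvectors), so there is no functional calculus delivering $\h(\AAp)$. The paper circumvents this by writing $e^{-it\AAp}$ as a Fourier--Laplace (contour) integral of $(\AAp-z)^{-1}$, localizing in $\Re(z)$ along a deformed contour $\G_\m$ (Section \ref{sec-high-freq}, Propositions \ref{prop-high-freq}--\ref{prop-eAAc-Ilow}), and then pushing the contour below the real axis where the resolvent is uniformly bounded at high frequency (Theorem \ref{th-high-freq}). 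You need some such replacement; the device you name would fail at the very first step.

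Second, your mechanism for the weighted estimates is different from the paper's, and as sketched it is the weaker of the two. You propose ``trading powers of $\pppg x$ for $\th$-derivatives of the Bloch integrand'' --- i.e.\ a non-stationary-phase-type argument. In the Floquet--Bloch picture this is delicate: the transform $u\mapsto u_\per^\s$ is defined on a torus in $\s$, $\s$-differentiation of $\ls$, $\Ps$ and $\f_\s$ brings down mixed powers of $t$ and $|\s|$ that have to be rebalanced against the Gaussian, and boundary behavior on $\partial(2\pi\T)$ must be handled. The paper avoids all of this: Corollary \ref{cor-Floquet-Bloch} is a single interpolation estimate
\[
\bigl\|\innp{u_\per^\s}{\p_\s}_{L^2_\per}\bigr\|_{L^{p}_\s(2\pi\T)} \lesssim \|u\|_{L^{2,\k s}} \|\p_\s\|_{L^\infty_\s L^2_\per},
\]
obtained by interpolating Plancherel ($s=0$, $p=2$) with an $L^1$-type bound ($s=d/2$, $p=\infty$), the convergence of the lattice sum $\sum_n\pppg{y+n}^{-\k d}$ being exactly where $\k>1$ enters. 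One then applies H\"older over $\s\in B(r)$ with exponents $p_0,p_1,p_2$ (as in the proof of Proposition \ref{prop-decay-uv}), and the $L^{p_0}$ norm of $(t|\s|^2)^{m}e^{-ct|\s|^2}$ delivers the extra $\pppg{t}^{-(s_1+s_2)/2}$ after the scaling $\s\mapsto\s/\sqrt t$. No $\s$-differentiation of the spectral data is needed, and no ``logarithmic losses'' appear: $\k>1$ is purely a lattice-summability threshold, not an artifact of an exchange between weights and regularity. If you insist on the derivative route you must address the boundary/commutator issues and explain how the same threshold emerges; otherwise I would adopt the interpolation argument.
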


Here we compare the solution $\upp$ of the damped wave equation \eqref{wave-per} (depending on the metric $\Gp(x)$) with the solution $\um$ of a heat equation with the constant (homogenized) metric $\Gm$. We can also say that, at the first order, $u$ behaves like a solution of the heat equation with the metric $\Gp(x)$. Indeed, it is known that the solution of the heat equation with the periodic metric $\Gp(x)$ behaves itself at the first order like the solution of the heat equation with $\Gm$. See \cite{OrtegaZu00}.\\

We notice that the gradient of $\upp$ does not exactly behave like that of $\um$. We have to use the corrector matrix $W(x)$, but it is bounded, so it does not alter the estimate of $\nabla \um$.\\

With Proposition \ref{prop-heat}, Theorem \ref{th-heat} implies Theorem \ref{th-energy-decay-periodic}. More precisely, it confirms the energy decay estimates, it proves that they are sharp, and it shows that, as for the heat equation, we would not get better results by taking stronger (for instance, compactly supported) weights.
Thus, for compactly supported weights, we obtain the following estimates. For $R > 0$ there exists $C_R$ such that for $U_0$ supported in the ball $B(R)$ and $t \geq 0$ we have 
\begin{equation}
\nr{u(t)}_{L^2(B(R))} \leq C_R \pppg t^{-\frac d 2} \nr{U_0}_\HH
\end{equation}
and 
\begin{equation}
\nr{\partial_t u(t)}_{L^2(B(R))} + \nr{\nabla u(t)}_{L^2(B(R))} \leq C_R \pppg t^{-\frac d 2 -1} \nr{U_0}_\HH.
\end{equation}

The comparison between the damped wave equation and the corresponding heat equation with a periodic metric has already been analysed in \cite{OrivePaZu01}. Theorem \ref{th-heat} improves the result in different directions.

The main improvements concern the absorption index. First, it is not necessarily constant. This is an important difference for the spectral analysis of the operator corresponding to the wave equation, since in this case we do not necessarily have a Riesz basis. Moreover, this absorption index is allowed to vanish, which also makes some arguments used in \cite{OrivePaZu01} unavailable.

On the other hand, the main result of \cite{OrivePaZu01} provides an asymptotic developpement for localized initial data. More precisely, $(u_0,u_1) \in L^2(\R^d) \times H\inv(\R^d)$ belongs to some weighted $L^1$ space, and the more decay we have at infinity, the more precise the developpement is. Here we give estimates which are uniform in the energy of the initial data (however we still get better results for more localized initial data, and the dual remark is that the rate of decay will be better for the localized energy, even if the wave is dissipated at infinity).

However, compared to \cite{OrivePaZu01}, we give a less precise developpement. We only give the leading term, given by the solution $\um$ of \eqref{heat-Q}-\eqref{heat-Q-CI}. However, it may happen that $\ap u_0 + u_1 = 0$ (then $\um = 0$) or that its Fourier transform vanishes near 0 (then $\um$ decays exponentially). In these cases, we could get better estimates for the damped wave $u$ in Theorem \ref{th-energy-decay-periodic}.

In fact, we could continue the developpement for the purely periodic setting, but not for the general setting which we consider in this paper. Indeed, we allow a perturbation of all the periodic coefficients by asymptotically vanishing terms, which would invalidate the developpement. However, we will see that this does not alter the main term, so the estimates of Theorem \ref{th-energy-decay-periodic} remain valid. This is described in the following paragraph.

\subsection{Perturbation of the periodic setting}

In Theorems \ref{th-energy-decay-periodic} and \ref{th-heat} we have considered a purely periodic problem. Now we can state the generalizations of these results for the perturbed setting. 

\begin{theorem}[Perturbation of the periodic wave] \label{th-diff-decay}
Assume that the damping condition \eqref{hyp-GCC} holds. 
Let $\k >1$ and $s_1, s_2 ,\y \geq 0$ be such that 
\begin{equation} \label{hyp-s12-bis}
\max(s_1, s_2) + \y < \min\left(\frac d 2, \rho_G, \rho_a + 1 \right).
\end{equation}
Then there exists $C \geq 0$ such that for $U_0 = (u_0,u_1) \in \HH^{\k s_2}$ and $t \geq 0$ we have 
\begin{eqnarray*}
\nr{u(t) - \upp(t)}_{L^{2,-\k s_1}} &\leq& C \pppg t^{- \frac {s_1 + s_2} {2} - \frac \y 2 } \nr{U_0}_{\HH^{\k s_2}},\\
\nr{\partial_t \big(u(t) - \upp(t) \big)}_{L^{2,-\k s_1}} &\leq& C \pppg t^{-1 - \frac {s_1 + s_2} {2}- \frac \y 2} \nr{U_0}_{\HH^{\k s_2}},\\
\nr{\nabla \big(u(t) - \upp(t) \big)}_{L^{2,-\k s_1}} &\leq& C \pppg t^{ - \frac 1 2 - \frac {s_1 + s_2} {2}- \frac \y 2 } \nr{U_0}_{\HH^{\k s_2}},
\end{eqnarray*}
where $u(t)$ and $\upp(t)$ are the solutions of \eqref{wave} and \eqref{wave-per}, respectively.
\end{theorem}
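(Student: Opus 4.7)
The plan is to view the perturbation as a source term in the first-order formulation. I would write both \eqref{wave} and \eqref{wave-per} as $\partial_t U = \Ac U$ and $\partial_t U_\mathbf{p} = \AAp U_\mathbf{p}$ on $\HH$ with the common initial datum $U_0 = (u_0,u_1)$, so that
\[
B := \Ac - \AAp = \begin{pmatrix} 0 & 0 \\ \Pp - P & -\ao \end{pmatrix}
\]
involves only the decaying coefficients $\Go, \wo, \ao$. Since these decay at rates $\pppg x^{-\rho_G}$ and $\pppg x^{-\rho_a}$, and since $\ao$ multiplies the bounded second component of $U$ while $\Go, \wo$ come with the differentiation $H^1 \to H^{-1}$, $B$ gains an extra spatial weight $\pppg x^{-\eta}$ for any $\eta < \min(\rho_G, \rho_a + 1)$ when mapping between suitable weighted Sobolev spaces. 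The starting point is then the second resolvent identity $R(\xi) - R_\mathbf{p}(\xi) = -R_\mathbf{p}(\xi) B R(\xi)$ for the resolvents $R$ and $R_\mathbf{p}$ of $\Ac$ and $\AAp$, or equivalently its time counterpart $U(t) - U_\mathbf{p}(t) = \int_0^t e^{(t-s)\Ac} B U_\mathbf{p}(s)\, ds$.

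I would then split the contributions into a high-frequency and a low-frequency part via a smooth spectral cutoff on $\Ac$ (and $\AAp$). On the high-frequency side, the damping condition \eqref{hyp-GCC} ensures exponential decay for both $e^{t\Ac}$ and $e^{t\AAp}$ (by the \cite{BurqJo}-type analysis already invoked for Theorem \ref{th-energy-decay-periodic}), so the difference decays exponentially there, much faster than the target rate. The estimate thus reduces to the low-frequency regime, where I would represent that part of $e^{t\Ac} - e^{t\AAp}$ as a contour integral near $\tau = 0$ of $R_\mathbf{p}(i\tau) B R(i\tau)$ against a cutoff. The sharp low-frequency asymptotics of $R_\mathbf{p}(i\tau)$, which is the main technical ingredient underlying Theorem \ref{th-heat} and compares $R_\mathbf{p}$ to the heat resolvent $(\Pm/\bm + i\tau)^{-1}$, yields mapping properties between the weighted spaces $L^{2,\kappa s_2 + \eta}$ and $L^{2,-\kappa s_1}$. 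The additional factor $\pppg x^{-\eta}$ brought by $B$ is then precisely what upgrades the $\pppg t^{-(s_1+s_2)/2}$ rate of Theorem \ref{th-energy-decay-periodic} into $\pppg t^{-(s_1+s_2)/2-\eta/2}$, along the lines of Proposition \ref{prop-heat}.

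The hard part will be to propagate this gain through the remaining factor $R(i\tau)$ in the resolvent identity, i.e., to invert $I + R_\mathbf{p}(i\tau) B$ uniformly between the relevant weighted spaces as $\tau \to 0$. This requires in particular ruling out a zero resonance of $\Ac$, which calls for a unique continuation argument, since $B$ is a relatively compact perturbation of $\AAp$ in those spaces. The restriction $\max(s_1,s_2) + \eta < d/2$ is intrinsic to the low-frequency analysis on $\R^d$: it is the threshold at which the zero-frequency heat resolvent $\Pm^{-1}$ remains bounded between $L^{2,\kappa s_2 + \eta}$ and $L^{2,-\kappa s_1}$, the same phenomenon responsible for the constraint $s_1, s_2 \in [0,d/2]$ in Theorem \ref{th-energy-decay-periodic}. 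The gradient estimate is handled in the same framework, using that the corrector matrix $W$ from \eqref{def-J} is bounded.
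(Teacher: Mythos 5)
Your overall strategy is close in spirit to the paper's: both reduce the theorem to a low-frequency resolvent comparison via Proposition~\ref{prop-eAAc-Ilow}, both exploit the resolvent identity for the difference $(\Ac-z)\inv-(\AAp-z)\inv$, and both derive the extra $\pppg t^{-\y/2}$ gain from the extra spatial weight $\pppg x^{-\y}$ brought in by the vanishing perturbation. But there are two genuine gaps between your sketch and a working argument.

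First, you do not say how resolvent bounds near $z=0$ produce \emph{polynomial} time decay. A single contour integral against a low-frequency cutoff will not do it: the time factor $e^{-itz}$ on the contour only gives boundedness or exponential decay. The paper's mechanism is repeated integration by parts on the contour, which trades $t^{-k}$ for $k$ derivatives of the resolvent difference, i.e.\ for estimates on $(\AAc-z)^{-1-m}-(\AAp-z)^{-1-m}$ (Proposition~\ref{prop-res-general}); these are then fed into the oscillatory-integral estimate of Lemma~\ref{lem-res-time}. The quantity whose $z\to0$ blow-up rate you actually need to control is the whole family of powers $(\AAp-z)^{-1-m}$ (Proposition~\ref{prop-res-Rp}, proved via Floquet-Bloch), not just the resolvent itself. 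That is a significant portion of the proof that your plan skips entirely.

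Second, your proposed handling of the perturbed resolvent $R(i\t)$ on the right of $R-R_{\mathbf p}=-R_{\mathbf p}BR$ is not what the paper does and raises a spurious difficulty. You propose inverting $I+R_{\mathbf p}(i\t)B$ in weighted spaces and say this "requires in particular ruling out a zero resonance of $\Ac$", invoking unique continuation. The paper never needs such an argument for this theorem: it works throughout with $\Im z>0$ (where $(\Ac-z)\inv$ exists by maximal dissipativity, Proposition~\ref{prop-Ac-max-diss}), proves weighted bounds with an explicit blow-up rate, and only lets $\Im z\to 0$ at the very end. The perturbed resolvent on the right is controlled by a bootstrap/absorption argument: the resolvent identity yields an inequality in which the unknown quantity appears on both sides, multiplied on one side by a factor $O(|z|^{\e})$ that can be absorbed for $|z|$ small (see the proofs of Lemmas~\ref{lem-estim-P0} and~\ref{lem-res-D0-single}). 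Unique continuation is used in the paper only for the spectral analysis of the Bloch operators $\AAs$ in the purely periodic problem (Proposition~\ref{prop-eigenvalues}), not for the perturbed operator $\Ac$. So while your intuition that the weight $\pppg x^{-\y}$ converts into $\pppg t^{-\y/2}$ is correct, the route you outline would require substantial reworking to close, and the unique-continuation/zero-resonance step is a red herring.
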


With Theorems \ref{th-energy-decay-periodic} and \ref{th-diff-decay} we deduce the following estimates in the general setting:

\begin{corollary}[Energy estimates in the general setting] \label{cor-energy-decay-general}
Assume that the damping condition \eqref{hyp-GCC} holds. 
Let $\k > 1$, $s_1, s_2 \in \big[0,\frac d 2\big]$ and $s \in [0,1]$ be such that 
\[
\max(s_1,s_2) + s < \min\left(\frac d 2, \rho_G, \rho_a + 1 \right).
\]
Then there exists $C \geq 0$ such that for $U_0 = (u_0,u_1) \in \HH^{\k s_2+s}$ and $t \geq 0$ we have 
\begin{eqnarray*}
\nr{u(t)}_{L^{2,-\k s_1}} &\leq& C \pppg t^{- \frac {s_1 + s_2} 2} \nr{U_0}_{\HH^{\k s_2}},\\
\nr{\partial_t u(t)}_{L^{2,-\k s_1}} &\leq& C \pppg t^{-1 - \frac {s_1 + s_2} 2} \nr{U_0}_{\HH^{\k s_2}},\\
\nr{\nabla u(t)}_{L^{2,-\k s_1 -s}} &\leq& C \pppg t^{- \frac {1 + s} 2 - \frac {s_1 + s_2} 2} \nr{U_0}_{\HH^{\k s_2 + s}},
\end{eqnarray*}
where $u(t)$ is the solution of \eqref{wave}.
\end{corollary}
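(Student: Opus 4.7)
The plan is to prove the corollary by the simple splitting $u(t) = \upp(t) + \big(u(t) - \upp(t)\big)$ and applying the triangle inequality, combining the estimates of Theorem \ref{th-energy-decay-periodic} (for $\upp$, in the purely periodic setting) with those of Theorem \ref{th-diff-decay} (for the difference $u - \upp$). The main task is just to verify that the decay rates match, and that the hypotheses of the two input theorems are compatible with the assumption of the corollary.

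For the bounds on $u(t)$ and $\partial_t u(t)$, I first apply Theorem \ref{th-energy-decay-periodic} with parameter $s = 0$ to get
\[
\nr{\upp(t)}_{L^{2,-\k s_1}} \leq C \pppg t^{-(s_1+s_2)/2} \nr{U_0}_{\HH^{\k s_2}}, \qquad \nr{\partial_t \upp(t)}_{L^{2,-\k s_1}} \leq C \pppg t^{-1-(s_1+s_2)/2} \nr{U_0}_{\HH^{\k s_2}}.
\]
Then I apply Theorem \ref{th-diff-decay} with $\y = 0$: the assumption $\max(s_1,s_2) + s < \min(d/2,\rho_G,\rho_a+1)$ of the corollary (together with $s\geq 0$) implies the assumption $\max(s_1,s_2) + \y < \min(d/2,\rho_G,\rho_a+1)$ required with $\y=0$, so we obtain the same rates of decay (or faster, if we had taken $\y > 0$) for $u-\upp$ and $\partial_t(u-\upp)$. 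The triangle inequality then yields the first two inequalities of the corollary.

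For the gradient estimate, there is a small subtlety because the weight in the corollary is $-\k s_1 - s$ while the one in Theorem \ref{th-diff-decay} is $-\k s_1$. Since $\pppg x \geq 1$ one has $L^{2,-\k s_1} \hookrightarrow L^{2,-\k s_1 - s}$ with $\nr{\cdot}_{L^{2,-\k s_1 - s}} \leq \nr{\cdot}_{L^{2,-\k s_1}}$. Applying Theorem \ref{th-diff-decay} with $\y = s$ (permitted precisely by the hypothesis of the corollary) gives
\[
\nr{\nabla\big(u(t) - \upp(t)\big)}_{L^{2,-\k s_1 - s}} \leq \nr{\nabla\big(u(t) - \upp(t)\big)}_{L^{2,-\k s_1}} \leq C \pppg t^{-1/2 - (s_1+s_2)/2 - s/2} \nr{U_0}_{\HH^{\k s_2}},
\]
and the exponent $1/2 + s/2 = (1+s)/2$ coincides with the decay rate provided by Theorem \ref{th-energy-decay-periodic} for $\nabla \upp$ in $L^{2,-\k s_1 - s}$. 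One final application of the triangle inequality finishes the proof.

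There is no genuine obstacle: every step is either the triangle inequality, a trivial weight embedding, or an invocation of a previous theorem. The only point that requires verification — and which motivates the strict inequality in the hypothesis of the corollary — is that one can choose $\y = s$ in Theorem \ref{th-diff-decay} when bounding the gradient of the difference.
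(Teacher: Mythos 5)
Your proof is correct and is exactly the argument the paper intends: the corollary is stated immediately after Theorem \ref{th-diff-decay} as a direct consequence of Theorems \ref{th-energy-decay-periodic} and \ref{th-diff-decay} via the triangle inequality, and your careful choice of $\y \in \{0,s\}$ together with the trivial weight embeddings is precisely what makes the rates match.
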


These estimates are the same as those of Theorem \ref{th-energy-decay-periodic}, even if there is a restriction in the choice of $s_1$ and $s_2$ when the perturbative coefficients $\Go$, $\ao$ and $\wo$ decay slowly at infinity. In particular, we recover exactly the same estimates as in the periodic case for the uniform global energy decay or if the perturbation is compactly supported.

\subsection{Organisation of the paper}

The paper is organized as follows. In Section \ref{sec-spectral} we introduce the wave operator in the energy space and its resolvent. In Section \ref{sec-high-freq} we discuss the contributions of high frequencies and explain how the problem reduces to the analysis of low frequencies. The main part of the paper is Section \ref{sec-periodic}, about the purely periodic case. We prove Theorem \ref{th-heat}, and Theorem \ref{th-energy-decay-periodic} will follow with Proposition \ref{prop-heat}. Finally, we consider the perturbed setting in Section \ref{sec-perturbed}.

\section{The Resolvent of the wave equation} \label{sec-spectral}

\newcommand{\spP}{\mathfrak s}

We will prove all the energy decay estimates from a spectral point of view. In this section we introduce the corresponding operators and give their basic spectral properties. Let 
\[
\C_+ = \set{z \in \C \st \Im(z) > 0}.
\]

We recall that an operator $T$ with domain $\Dom(T)$ on a Hilbert space $\Kc$ is said to be dissipative (respectively accretive) if 
\[
\forall \f \in \Dom(T), \quad \Im \innp{T\f}{\f}_\Kc \leq 0 \quad (\text{respectively,} \quad \Re \innp{T\f}\f_\Kc \geq 0).
\]
Then the operator $T$ is said to be maximal dissipative if $(T-z)$ is boundedly invertible for some (and therefore any) $z \in \C_+$. In this case we have, for all $z \in \C_+$,
\[
\nr{(T-z)\inv}_{\Lc(\Kc)} \leq \frac 1 {\Im(z)}.
\]
Moreover, if $T$ is also accretive, then $(T-z)$ is boundedly invertible when $\Re(z) < 0$ and we have 
\[
\nr{(T-z)\inv}_{\Lc(\Kc)} \leq \frac 1 {\abs{\Re(z)}}.
\]

We recall that $\PG$ and $b$ were defined after \eqref{wave-w}. If $z \in \C$ is such that the operator $\big( \PG - iz b(x) - z^2 w(x) \big) \in \mathscr{L}(H^2(\R^d),L^2(\R^d))$ has a bounded inverse, we set 
\[
R(z) = \big( \PG - iz b(x) - z^2 w(x) \big)\inv.
\]

\begin{proposition} \label{prop-Rz}
For $z \in \C_+$ the resolvent $R(z)$ is well defined and extends to a bounded operator from $H\inv(\R^d)$ to $H^1(\R^d)$. Moreover, we have $R(z)^* = R(-\bar z)$ and there exists $C \geq 0$ such that for $z \in \C_+$ we have 
\[
\nr{R(z)}_{\mathscr L (L^2(\R))} \leq \frac C {\Im (z) \abs z},
\]
\[
\nr{R(z)}_{\mathscr L (L^2(\R),H^1(\R))} + \nr{R(z)}_{\mathscr L (H\inv(\R),L^2(\R))} \leq \frac C {\Im(z)}
\]
and
\[
\nr{R(z)}_{\mathscr L (H\inv(\R),H^1(\R))} \leq \frac {C \abs z} {\Im(z)}.
\]
\end{proposition}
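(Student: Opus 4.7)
I would first dispatch the adjoint identity $R(z)^* = R(-\bar z)$: since $\PG$ is formally self-adjoint and $b, w$ are real-valued, the formal adjoint of $\PG - izb - z^2 w$ is $\PG + i\bar z\, b - \bar z^2 w = \PG - i(-\bar z)\,b - (-\bar z)^2 w$, and $-\bar z \in \C_+$ whenever $z \in \C_+$, so both operators are simultaneously invertible and the identity follows.

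The norm estimates all rest on a coercivity property of the sesquilinear form
\[
q_z(u,v) := \int_{\R^d} G\nabla u \cdot \overline{\nabla v}\,dx - iz \int_{\R^d} b u\bar v\,dx - z^2 \int_{\R^d} w u\bar v\,dx
\]
on $H^1(\R^d)$, which is the variational form of $\PG - izb - z^2 w \colon H^1 \to H^{-1}$. A direct computation of the imaginary part of $\bar z\, q_z(u,u)$ yields
\[
-\Im\bigl(\bar z\,q_z(u,u)\bigr) = \Im(z)\,\|\sqrt{G}\,\nabla u\|_{L^2}^2 + |z|^2\! \int b|u|^2\,dx + |z|^2\,\Im(z)\,\|\sqrt{w}\,u\|_{L^2}^2,
\]
non-negative since $b, w \geq 0$. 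Combined with $|q_z(u,u)| \geq |z|^{-1}|\Im(\bar z\, q_z(u,u))|$ and the ellipticity \eqref{hyp-GbC}, this gives the coercivity
\[
|q_z(u,u)| \geq c\,\frac{\Im(z)}{|z|}\bigl(\|\nabla u\|_{L^2}^2 + |z|^2 \|u\|_{L^2}^2\bigr)
\]
with $c>0$ independent of $z$. Continuity of $q_z$ being clear, Lax–Milgram then provides $R(z)$ as a bounded inverse between the Sobolev pairs in the statement, and testing $v = u$ in $q_z(u,v) = \langle F, v\rangle_{L^2}$ for $F \in L^2$ gives $c|z|\Im(z)\|u\|^2 \leq \|F\|\|u\|$, yielding the first bound $\|R(z)\|_{\Lc(L^2)} \leq C/(|z|\Im z)$.

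For the second display I would pass through the energy space. Define
\[
\AAc := i\begin{pmatrix} 0 & 1 \\ -w^{-1}\PG & -a \end{pmatrix}
\]
on $\HH := H^1(\R^d) \times L^2(\R^d)$ with the energy norm $\|(u_0, u_1)\|_\HH^2 := \int(w|u_1|^2 + G\nabla u_0 \cdot \nabla \bar u_0)\,dx$, equivalent to the standard norm by \eqref{hyp-GbC}. Dissipativity of $\AAc$ on $\HH$ is precisely \eqref{easy-decay}, and maximality follows from the Lax–Milgram argument above at $z = i$, so $\|(\AAc - z)^{-1}\|_{\Lc(\HH)} \leq 1/\Im(z)$ for $z \in \C_+$. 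Eliminating $u_1$ in $(\AAc - z)(u_0,u_1) = (F_0, F_1)$ produces the explicit formula
\[
u_0 = R(z)\bigl[iw F_1 + (ib + zw) F_0\bigr], \qquad u_1 = -iz u_0 - iF_0.
\]
Specialising to $F_0 = 0$, $F_1 = -iw^{-1} F$ with $F \in L^2$ gives $u_0 = R(z) F$ and $\|(0, F_1)\|_\HH \lesssim \|F\|_{L^2}$, whence
\[
\|R(z) F\|_{H^1} \lesssim \|(u_0, u_1)\|_\HH \leq \|(0, F_1)\|_\HH/\Im(z) \lesssim \|F\|_{L^2}/\Im(z).
\]
The $H^{-1} \to L^2$ bound then follows by duality via $R(z)^* = R(-\bar z)$ and $\Im(-\bar z) = \Im(z)$.

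Finally the $H^{-1} \to H^1$ estimate is read off the coercivity in the semiclassical scale $\|u\|_z^2 := \|\nabla u\|^2 + |z|^2\|u\|^2$ (with dual norm $\|\cdot\|_{H^{-1}_z}$): the inequality gives $\|R(z) F\|_z \leq (|z|/c\Im(z))\,\|F\|_{H^{-1}_z}$, and comparing the $z$-weighted norms with the standard Sobolev norms yields the announced bound $C|z|/\Im(z)$. I expect the only real point of care to be the algebraic bookkeeping in the identification $R(z) \leftrightarrow (\AAc - z)^{-1}$: the auxiliary source pair must be specialised so that its $\HH$-norm is controlled solely by $\|F\|_{L^2}$ (with constants depending only on $G, w, b$), and the first component of $(\AAc - z)^{-1}(F_0, F_1)$ must coincide with $R(z) F$, so that the contractive estimate on $(\AAc - z)^{-1}$ transfers cleanly into the advertised $L^2 \to H^1$ bound.
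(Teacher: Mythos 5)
Your coercivity argument for the $L^2\to L^2$ bound is a genuinely different, and cleaner, route than the paper's. The paper writes $T(z)=\PG-izb-z^2w$ and splits into the two cases $\Im(z)\leq 2\Re(z)$ and $\Im(z)\geq 2\Re(z)$, recognising $T(z)$ as a shifted maximal dissipative (resp.\ accretive) perturbation of the self-adjoint $\PG$, and then invoking the standard $1/\abs{\Im}$ (resp.\ $1/\abs{\Re}$) resolvent bound. Your computation of $\Im\bigl(\bar z\,q_z(u,u)\bigr)$ packages both cases into the single inequality
\[
\abs{q_z(u,u)}\geq c\,\frac{\Im(z)}{\abs z}\bigl(\nr{\nabla u}^2+\abs z^2\nr u^2\bigr),
\]
from which every norm estimate in the statement follows directly by Lax--Milgram (e.g.\ for $F\in L^2$ one reads off $\nr u\lesssim\nr F/(\Im(z)\abs z)$ and then $\nr{\nabla u}^2\lesssim(\abs z/\Im z)\nr F\nr u\lesssim\nr F^2/\Im(z)^2$, and the $H^{-1}$ variants are the same computation with the $z$-weighted dual norm); this is an improvement over the paper's approach, which derives the gradient bound by a separate identity $\nr{\nabla R(z)\vf}^2\lesssim\innp{\PG R(z)\vf}{R(z)\vf}$ after the $L^2$ bound is in hand.

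The detour through the first-order system $\AAc$ on the energy space is not only unnecessary in view of the above, it contains a genuine error: the quadratic form $\int\bigl(w\abs{u_1}^2+G\nabla u_0\cdot\nabla\bar u_0\bigr)\,dx$ is \emph{not} equivalent to the standard norm on $H^1(\R^d)\times L^2(\R^d)$ — it does not control $\nr{u_0}_{L^2}$ at all, only $\nr{\nabla u_0}_{L^2}$, and hence is a norm only on $\dot H^1\times L^2$ (the paper makes exactly this point in Remark~\ref{rem-energy-space}). Consequently the step $\nr{R(z)F}_{H^1}\lesssim\nr{(u_0,u_1)}_\HH$ does not follow; what that argument actually gives is the gradient bound $\nr{\nabla R(z)F}\lesssim\nr F/\Im(z)$. (The same caveat applies to the paper's own justification of the $L^2\to H^1$ line, which also only establishes the gradient part directly — but that is the paper's problem, not an excuse to assert a false norm equivalence.) I would simply delete the energy-space passage and derive all the estimates from the single coercivity inequality, which is both shorter and avoids the circularity of appealing to the wave semigroup (whose well-posedness in the paper is established \emph{after}, and via, this proposition).
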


\begin{proof}
Let $z = \t + i \m \in \C$ with $\t \geq 0$ and $\m > 0$. We set 
\[
T(z) := \PG -izb - z^2 w = \big( \PG + \m b + (\m^2-\t^2) w \big) - i (\t b + 2\t \m w).
\]
Assume that $\m \leq 2\t$. Then $\tilde T(z) := T(z) + 2i\t \m w_{\min}$ is a dissipative and bounded perturbation of the selfadjoint operator $\PG$, so it is maximal dissipative. Thus $T(z) = \tilde T(z) - 2i\t\m w_{\min}$ is boundedly invertible and 
\[
\nr{T(z)\inv}_{\mathscr L (L^2(\R))} = \nr{\big(\tilde T(z) - 2i\t \m w_{\min} \big)\inv}_{\mathscr L (L^2(\R))} \leq \frac 1 {2\t \m w_{\min}}. 
\]
Now assume that $\m \geq 2\t$. Then $\tilde T(z) := T(z) - \frac {\m^2} 2 w_{\min}$ is a dissipative and accretive perturbation of the non-negative selfadjoint operator $\PG$, so $T(z) = \tilde T(z) + \frac {\m^2} 2 w_{\min}$ is boundedly invertible and 
\[
\nr{T(z)\inv}_{\mathscr L (L^2(\R))} = \nr{\left(\tilde T(z) + \frac {\m^2} 2 w_{\min}\right)\inv}_{\mathscr L (L^2(\R))} \leq \frac 2 {\m^2 w_{\min}}.
\]
In any case we have 
\[
\nr{T(z)\inv}_{\mathscr L (L^2(\R))} \lesssim \frac 1 {\m \abs z}. 
\]
If $\t < 0$ we observe that $T(z) = T(-\bar z)^*$ to obtain the same results. It only remains to prove the last two estimates. For $z \in \C_+$ and $\vf \in \Sc$ we have 
\[
\nr{\nabla R(z) \vf}_{L^2(\R)}^2 \lesssim \innp{\PG R(z) \vf}{R(z) \vf}  \lesssim \innp{\vf}{R(z) \vf} + \abs z^2 \nr{R(z) \vf}^2 \lesssim \frac 1 {\m^2} \nr{\vf}^2.
\]
This gives the estimate of the first term in the second inequality. The estimate of the second term follows by duality. For the last estimate we write
\begin{align*}
\nr{\nabla R(z) \nabla \vf}^2
& \lesssim \innp{\PG R(z) \nabla \vf}{R(z) \nabla \vf} \lesssim \innp{\nabla \vf}{R(z) \nabla \vf} + \abs z^2 \nr{R(z) \nabla \vf}^2\\
& \lesssim \nr{\nabla R(z) \nabla \vf} \nr \vf + \frac {\abs z^2} {\m^2} \nr \vf^2,
\end{align*}
and the conclusion follows.
\end{proof}

We consider on $\HH$ the operator 
\begin{equation} \label{def-Ac}
 \Ac = \begin{pmatrix} 0 & w\inv \\ \PG & -i a \end{pmatrix}
\end{equation}
with domain 
\begin{equation} \label{dom-A}
\Dom(\Ac) = H^2(\R^d) \times H^1(\R^d).
\end{equation}

Let $F = (u_0,i w u_1) \in \Dom(\Ac)$. Then $u$ is a solution to the problem \eqref{wave-w} if and only if $U = (u,i w \partial_t u)$ is a solution to
\begin{equation} \label{wave-A}
\begin{cases}
(\partial_t  + i \Ac  ) U(t) = 0,\\
U(0) = F.
\end{cases}
\end{equation}

\begin{proposition} \label{prop-Ac-max-diss}
For $z \in \C_+$ the operator $(\Ac-z)$ is boundedly invertible on $\HH$, and we have 
\[
(\Ac-z)\inv =
\begin{pmatrix} 
R(z) (ib + zw) &   R(z)\\
w +  w R(z) (izb + z^2 w)  &  zw R(z)
\end{pmatrix}.
\]
Moreover there exists $C \geq 0$ such that for all $z \in \C_+$ we have 
\[
\nr{(\Ac-z)\inv}_{\mathscr L (\HH)} \leq \frac C {\Im(z)}.
\]
\end{proposition}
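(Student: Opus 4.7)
The plan is to first verify the explicit formula for $(\Ac - z)^{-1}$ by directly solving $(\Ac - z) U = F$, and then obtain the resolvent bound via a maximal dissipativity argument. Given $F = (f, g) \in \HH$, I look for $U = (u, v) \in \Dom(\Ac)$ with $(\Ac - z) U = F$, \ie
\[ -zu + w^{-1} v = f, \qquad \PG u - (ia + z) v = g. \]
The first equation yields $v = w(f + zu)$; substituting into the second and using $b = wa$ reduces matters to the scalar equation $(\PG - izb - z^2 w) u = g + (ib + zw) f$. By Proposition \ref{prop-Rz}, $R(z) = (\PG - izb - z^2 w)\inv$ is bounded from $L^2(\R^d)$ to $H^1(\R^d)$ for $z \in \C_+$. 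Applying $R(z)$ gives $u = R(z)(ib + zw) f + R(z) g$, and substituting back, $v = w f + w R(z)(izb + z^2 w) f + z w R(z) g$; these are the four entries of the claimed matrix. Since the right-hand side of the scalar equation for $u$ lies in $L^2$, elliptic regularity yields $u \in H^2(\R^d)$; then $v \in H^1(\R^d)$ as $f \in H^1$, $u \in H^2$ and $w$ is smooth with bounded derivatives, so $U \in \Dom(\Ac)$.

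For the bound $\nr{(\Ac - z)\inv}_{\mathscr{L}(\HH)} \leq C/\Im(z)$, I would equip $\HH$ with the inner product
\[ \innp{(u,v)}{(u',v')}_\sharp := \innp{G \nabla u}{\nabla u'} + \innp{u}{u'} + \innp{v}{w\inv v'}, \]
which is equivalent to the standard $H^1 \times L^2$ norm by the ellipticity hypothesis \eqref{hyp-GbC} and the two-sided bounds on $w$. Using self-adjointness of $\PG$ on $L^2(\R^d)$, a direct computation gives
\[ \Im \innp{\Ac (u,v)}{(u,v)}_\sharp = \Im \innp{w\inv v}{u} - \innp{w\inv a v}{v}. \]
The term $-\innp{w\inv a v}{v} \leq 0$ encodes the damping, while $\abs{\Im\innp{w\inv v}{u}} \leq C \nr{(u,v)}_\sharp^2$. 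Taking imaginary parts in $\innp{(\Ac - z) U}{U}_\sharp = \innp{F}{U}_\sharp$ then yields $(\Im(z) - C) \nr{U}_\sharp^2 \leq \nr{F}_\sharp \nr{U}_\sharp$ for $\Im(z) > C$, whence $\nr{(\Ac - z)\inv}_\sharp \leq 1/(\Im(z) - C)$. Since invertibility of $(\Ac-z)$ for every $z \in \C_+$ is already in hand from the explicit formula, the bound extends to the entire upper half-plane by the first resolvent identity and the holomorphy of $z \mapsto (\Ac - z)\inv$.

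The main obstacle is the cross term $\Im \innp{w\inv v}{u}$: it arises because $\PG$ has $0$ in its spectrum, so no Poincar\'e-type inequality on $\R^d$ allows one to drop the $\innp{u}{u'}$ summand from $\innp{\cdot}{\cdot}_\sharp$ without losing equivalence with the full $H^1 \times L^2$ norm. Should the shift-based argument prove too coarse near $\Im(z) = 0$, an alternative is to bound each entry of the explicit formula directly, combining the three estimates of Proposition \ref{prop-Rz} with the algebraic identity $z R(z)(ib + zw) = R(z) \PG - \Id$ to trade the unwanted $\abs{z}$ factors against the gain $\abs{z}\inv$ in $\nr{R(z)}_{\mathscr{L}(L^2)}$.
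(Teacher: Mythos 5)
Your derivation of the explicit formula is correct and matches the paper (the paper writes the formula down and verifies it by computation, you derive it by solving the system — same content). The issue is in the bound. Your shifted dissipativity computation is correct as far as it goes, but it only yields $\nr{(\Ac-z)^{-1}}_\sharp \leq 1/(\Im(z)-C)$ for $\Im(z) > C$, and the claimed extension to all of $\C_+$ ``by the first resolvent identity and the holomorphy of $z \mapsto (\Ac-z)^{-1}$'' does not work. Holomorphy only gives \emph{local} boundedness, not a quantitative rate; and the resolvent identity, via the Neumann series, propagates a bound $\nr{(\Ac-z_0)^{-1}}\leq M$ only over a disk of radius $< 1/M$, with the bound degrading to infinity at the edge. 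So one cannot bridge from $\Im(z)=C$ down to $\Im(z)\to 0^+$ with the required $1/\Im(z)$ rate this way. A nilpotent Jordan block already shows that knowing invertibility on all of $\C_+$ plus a bound for $\Im(z)\geq 1$ does \emph{not} imply a $1/\Im(z)$ bound near the real axis: the resolvent of $\begin{pmatrix}0&1\\0&0\end{pmatrix}$ is $O(|z|^{-2})$, which beats $1/\Im(z)$ along the imaginary axis.

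The ``alternative'' you mention at the end is in fact the right (and the paper's) argument, and should be the main argument, not a fallback. Using $zR(z)(ib+zw)=R(z)\PG-\Id$ one rewrites the first column so that
\[
(\Ac-z)^{-1}\begin{pmatrix}f\\ g\end{pmatrix}
=\begin{pmatrix}\tfrac1z R(z)\PG f-\tfrac1z f+R(z)g\\ wR(z)\PG f+zwR(z)g\end{pmatrix},
\]
and then each term is estimated with the three bounds of Proposition \ref{prop-Rz}: the factors $\tfrac1{|z|}$ and $|z|$ cancel against $\nr{R(z)}_{\mathscr L(H^{-1},H^1)}\lesssim |z|/\Im(z)$ and $\nr{R(z)}_{\mathscr L(L^2)}\lesssim 1/(|z|\Im(z))$ respectively (and $1/|z|\leq 1/\Im(z)$ handles the lone $\tfrac1z f$ term), giving $\nr{(\Ac-z)^{-1}}_{\mathscr L(\HH)}\lesssim 1/\Im(z)$ uniformly on $\C_+$. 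I suggest promoting this direct entrywise estimate to be the proof of the resolvent bound, and dropping the dissipativity-plus-extension route.
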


\begin{proof}
Let $z = \t + i \m \in \C_+$, with $\t \in \R$ and $\m > 0$. For $F = (f,g) \in \HH$ we set 
\begin{align*}
\Rc_A(z) F
& =
\begin{pmatrix}
R(z) (ib + zw) f + R(z) g\\
wf + w R(z) (izb+z^2w) f + z w R(z) g
\end{pmatrix}\\
& =
\begin{pmatrix}
\frac 1 z R(z) \PG f - \frac 1 z f + R(z) g \\
w R(z) \PG f + z w R(z) g
\end{pmatrix}.
\end{align*}
With the first expression we see that $\Rc_A(z)$ is a bounded operator from $\HH$ to $\Dom(\Ac)$. By an explicit computation, we check that $\Rc_A(z)$ is an inverse for $(\Ac-z)$. Finally, with the second expression of $\Rc_A(z)$ and the estimates of Proposition \ref{prop-Rz}, we obtain 
\begin{align*}
\nr{\Rc_A(z)F}_{\HH} 
& \lesssim \frac 1 {\abs z} \nr{R(z)}_{\mathscr L(H\inv,H^1)} \nr{f}_{H^1} + \frac 1 {\abs z} \nr{f}_{H^1} + \nr{R(z)}_{\mathscr L (L^2,H^1)} \nr{g}_{L^2}\\
& + \nr{R(z)}_{\mathscr L (H\inv,L^2)} \nr{f}_{H^1} + \abs z  \nr{R(z)}_{\mathscr L (L^2)} \nr{g}_{L^2}\\
& \lesssim \frac {\nr F _{\HH}}{\m}.
\end{align*}
The proposition is proved.
\end{proof}

By the Hille-Yosida Theorem, we now deduce the following result about the propagator of $\Ac$. It ensures in particular that for $F \in \Dom(\Ac)$ the problem \eqref{wave-A} has a unique solution defined for all non-negative times.

\begin{proposition}
The operator $-i\Ac$ generates a semigroup on $\HH$. Moreover there exists $C \geq 0$ such that for all $t \geq 0$ we have 
\[
\nr{e^{-it\Ac}}_{\mathscr L (\HH)} \leq C.
\]
\end{proposition}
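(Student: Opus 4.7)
The plan is to apply the Hille--Yosida theorem, in its Lumer--Phillips form for Hilbert spaces, to the operator $-i\Ac$, exploiting the resolvent estimate of Proposition~\ref{prop-Ac-max-diss}. The routine prerequisites are easy to check: $\Ac$ is densely defined since $\Dom(\Ac) = H^2(\R^d) \times H^1(\R^d)$ is dense in $\HH$, and it is closed because of the matrix structure of $\Ac$ together with the closedness of $\PG$ (self-adjoint on $L^2(\R^d)$ with domain $H^2(\R^d)$). Moreover, for $\lambda > 0$, we have $\lambda + i\Ac = i(\Ac - i\lambda)$ with $i\lambda \in \C_+$, so Proposition~\ref{prop-Ac-max-diss} gives the required range condition $\Ran(\lambda + i\Ac) = \HH$.

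For the dissipativity, since the standard $\HH$ inner product does not make $-i\Ac$ dissipative, I would introduce the equivalent Hilbert inner product
\[
\innp{(u,v)}{(u',v')}_* := \int_{\R^d} G \nabla u \cdot \nabla \bar{u'}\, dx + \int_{\R^d} w\, u\, \bar{u'}\, dx + \int_{\R^d} w^{-1} v\, \bar{v'}\, dx,
\]
whose equivalence with the $\HH$-norm follows from \eqref{hyp-GbC}. For $F = (u,v)\in\Dom(\Ac)$, writing $-i\Ac F = (-iw^{-1}v, -i\PG u - av)$ and integrating by parts using $\PG = -\divg G\nabla$, the two contributions involving $\PG$ combine into $-2i\Re\innp{G\nabla u}{\nabla(w^{-1}v)}$, which is purely imaginary. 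The identity that remains is
\[
\Re\innp{-i\Ac F}{F}_* = \Im\innp{v}{u}_{L^2} - \int_{\R^d} w^{-1}a\, \abs{v}^2\, dx,
\]
in which the damping integral is non-positive and the cross term $\Im\innp{v}{u}_{L^2}$ is controlled by $\omega \nr{F}_*^2$ via Cauchy--Schwarz and the uniform bounds on $w$. This yields dissipativity of $-i\Ac - \omega$ in $(\HH, \innp{\cdot}{\cdot}_*)$; Lumer--Phillips then gives that $-i\Ac$ generates a $C_0$-semigroup with $\nr{e^{-it\Ac}}_{\HH} \leq C e^{\omega t}$.

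The main obstacle is precisely the uniform-in-$t$ bound in the proposition: the sign-indefinite term $\Im\innp{v}{u}_{L^2}$ prevents $\omega$ from being taken to be $0$ in the naive energy identity, and merely reading off Hille--Yosida from the first-order resolvent estimate of Proposition~\ref{prop-Rz} only produces an exponential bound. To obtain a genuine uniform bound, I would refine the inner product by a small skew-symmetric coupling of the form $\pm\alpha\,\Im\innp{u}{w^{-1}v}$; the integration by parts contribution of this correction to $\Re\innp{-i\Ac F}{F}$ generates an additional negative piece proportional to $-\alpha \int_{\R^d} w^{-1} G\nabla u \cdot \nabla\bar u\, dx$ which, for $\alpha$ chosen small enough that the refined form remains positive and equivalent to $\nr{\cdot}_\HH$, absorbs the indefinite cross term. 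The resulting identity gives $\Re\innp{-i\Ac F}{F}_{\mathrm{ref}} \leq 0$, and Lumer--Phillips then provides a genuine contraction semigroup in the refined norm, yielding the bound $\nr{e^{-it\Ac}}_{\HH} \leq C$ stated in the proposition.
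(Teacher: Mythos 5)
The paper's own proof is a single line: after Proposition~\ref{prop-Ac-max-diss} it simply writes ``By the Hille--Yosida theorem''. You correctly notice that this cannot be literal: Hille--Yosida needs iterated estimates $\nr{(\Ac - i\mu)^{-n}}\leq C/\mu^n$, and the single-power bound $\nr{(\Ac - z)^{-1}}\leq C/\Im z$ with $C>1$ does not self-iterate. So your instinct to look for an equivalent inner product making $-i\Ac$ maximal dissipative, and then invoke Lumer--Phillips, is a sensible reading of what the paper must implicitly intend. Your preparatory steps — density, closedness, the range condition via Proposition~\ref{prop-Ac-max-diss}, and the identity $\Re\innp{-i\Ac F}{F}_* = \Im\innp{v}{u}_{L^2} - \int w^{-1}a\abs{v}^2$ — are all correct, and they do give generation with an $e^{\omega t}$ bound. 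The problem lies in the final step.

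The ``skew-symmetric coupling'' $\pm\alpha\,\Im\innp{u}{w^{-1}v}$ does not close the argument, for three independent reasons. First, a sign mismatch: computing $\Re B(-i\Ac F, F)$ for $B(F,F') = \pm i[\innp{u}{w^{-1}v'}-\innp{w^{-1}v}{u'}]$ you find contributions $\mp\alpha\Re\innp{w^{-1}\PG u}{u}$ and $\pm\alpha\nr{w^{-1}v}_{L^2}^2$ \emph{with opposite signs}: the choice that makes $-\alpha\int w^{-1}G\nabla u\cdot\nabla\bar u$ negative simultaneously produces $+\alpha\nr{w^{-1}v}^2$, which worsens rather than helps. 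Second, even granting a negative $-\alpha\nr{\nabla u}^2$ term, it cannot absorb the offending cross term: after Young's inequality, $\Im\innp{v}{u}_{L^2}$ costs $\eps\nr{v}_{L^2}^2 + C_\eps\nr{u}_{L^2}^2$, and on $\R^d$ there is no Poincaré inequality letting $\nr{\nabla u}^2$ dominate $\nr{u}_{L^2}^2$; meanwhile the damping $-\int w^{-1}a\abs{v}^2$ cannot absorb $\eps\nr{v}^2$ uniformly because $a$ is allowed to vanish on open sets. Third, and most fundamentally, the uniform bound on $\HH = H^1\times L^2$ (which includes $\nr{u(t)}_{L^2}$) is \emph{false} in the absence of damping: for $d=1$ and $a\equiv 0$, the free-wave solution from $(0,u_1)$ with $\int u_1 \neq 0$ grows like $\sqrt t$ in $L^2$. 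So no local-in-space, local-in-time Lyapunov coupling of the kind you propose — one that never uses $\ap\not\equiv 0$ in an essential way — can possibly establish the claimed bound. The statement genuinely rests on the effectivity of the (periodic, mean-positive) damping, which manifests through the low-frequency diffusion phenomenon analysed in Section~\ref{sec-periodic}, not through a pointwise equivalent-norm computation.
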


By Proposition \ref{prop-Ac-max-diss} we know that any $z \in \C_+$ belongs to the resolvent set of $\Ac$. As usual we are interested in the behavior of $(\Ac-z)\inv$ at the limit $\Im(z) \to 0$. In fact, with a strong decay, the spectrum is really under the real axis. Except for low frequencies\dots

\begin{theorem} \label{th-high-freq}
Any $\t \in \R \setminus \set 0$ belongs to the resolvent set of $\AAc$. Moreover there exists $C > 0$ such that for all $\t \in \R \setminus [-1,1]$ we have 
\begin{equation} \label{estim-high-freq}
\nr{(\Ac-\t)\inv}_{\mathscr{L}(\HH)} \leq C.
\end{equation}
\end{theorem}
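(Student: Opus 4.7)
The plan is to split the argument according to the size of $|\t|$ and to combine a high-frequency resolvent bound from \cite{BurqJo} with an invertibility-plus-compactness argument on bounded ranges.

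For the uniform bound \eqref{estim-high-freq} at high frequencies, I would invoke the resolvent estimate for the damped Klein-Gordon operator proved in \cite{BurqJo} under the geometric control condition \eqref{hyp-GCC}. At high frequency the shift between $\PG$ and $\PG+1$ is negligible, so the semiclassical defect-measure and propagation argument of \cite{BurqJo} along the Hamiltonian flow $\vf^t$ gives a threshold $R \geq 1$ and a constant $C$ such that $\nr{(\Ac - \t)\inv}_{\mathscr{L}(\HH)} \leq C$ for all real $\t$ with $|\t| \geq R$.

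For each fixed $\t \in \R \setminus \{0\}$, I would show that $\t$ lies in the resolvent set of $\Ac$. First, $\t$ is not an eigenvalue: if $F = (u,v) \in \Dom(\Ac)$ satisfies $\Ac F = \t F$, a direct computation from \eqref{def-Ac} gives $v = \t w u$ and $u \in H^2(\R^d)$ solves
\[
(\PG - i\t b - \t^2 w) u = 0.
\]
Pairing with $u$ in $L^2(\R^d)$ and taking imaginary parts yields $\t \int_{\R^d} b(x)|u(x)|^2\, dx = 0$, so $u$ vanishes on the open set $\{b > 0\}$. Since $b = w(\ap + \ao)$ with $w \geq w_{\min} > 0$ and $\ap \geq 0$ a nontrivial $\Z^d$-periodic continuous function, this set contains an open subset of each sufficiently distant unit cell of $\Z^d$. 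Strong unique continuation (Aronszajn--Cordes type) for the second-order elliptic operator with bounded complex zeroth-order terms $\PG - \t^2 w - i\t b$ then forces $u \equiv 0$, hence $F=0$. To upgrade absence of eigenvalues to invertibility of $\Ac - \t$, I would reduce via the formula of Proposition \ref{prop-Ac-max-diss} to the invertibility of the Helmholtz-type operator $\PG - i\t b - \t^2 w$ from $H^1(\R^d)$ to $H^{-1}(\R^d)$, and establish its Fredholm character of index zero by a limiting-absorption argument that exploits the asymptotically periodic dissipative structure at infinity.

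Once each $\t \in \R \setminus \{0\}$ is in the resolvent set, the map $\t \mapsto \nr{(\Ac - \t)\inv}_{\mathscr{L}(\HH)}$ is continuous on $\R \setminus \{0\}$ and therefore bounded on the compact set $\{1 \leq |\t| \leq R\}$; combined with the high-frequency bound this yields \eqref{estim-high-freq}. The main obstacle is precisely the Fredholm / limiting-absorption step: since $\t \neq 0$ lies inside the continuous spectrum of the self-adjoint part $\PG - \t^2 w$ there is no spectral gap to exploit, and $b$ may vanish on large open subsets of each period cell. The proof has to rely essentially on the asymptotically periodic structure and on the non-triviality of $\ap$ (so that $b$ is bounded below by a positive constant on a periodic array of open sets reaching infinity) in order to rule out outgoing modes and close the Fredholm loop on unweighted $L^2$-based spaces.
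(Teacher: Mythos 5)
Your high-frequency step matches the paper: Theorem \ref{th-high-freq} is proved in the paper by citation to \cite{BurqJo}, with the uniform bound \eqref{estim-high-freq} for $\abs\t$ large coming from Theorem 1.2 / Section 3 there, under the geometric damping condition \eqref{hyp-GCC}, exactly as you invoke. The continuity-on-compacta step to bridge $\{1 \leq \abs\t \leq R\}$ is also what the paper does implicitly.

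The gap is in your treatment of fixed $\t \neq 0$. Ruling out eigenvalues by taking the imaginary part and applying unique continuation is fine (and mirrors an argument the paper uses in Proposition \ref{prop-eigenvalues}, though there on the torus), but that is not the hard part. The hard part is showing that $\Pg - i\t b - \t^2 w$, equivalently $\Ac - \t$, is actually \emph{surjective}. You propose to do this by establishing that the operator is ``Fredholm of index zero by a limiting-absorption argument,'' but you do not carry this out, and you yourself flag it as ``the main obstacle.'' This is not a small missing detail: on the whole space $\R^d$ the operator $\Pg - \t^2 w$ has $0$ in its essential spectrum for $\t\neq 0$, the dissipative perturbation $-i\t b$ is bounded but in no sense relatively compact (it lives on a periodic array of sets extending to infinity), so Fredholm-ness is precisely what needs to be established, and the usual limiting absorption principle yields bounds only in \emph{weighted} spaces, whereas \eqref{estim-high-freq} is an unweighted $\HH$ bound. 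The paper instead gets this statement directly from Theorem 1.3 and Section 4 of \cite{BurqJo}, where the argument is a quasimode/contradiction argument using that every point of $\R^d$ is uniformly close to the damping region (a condition strictly weaker than \eqref{hyp-GCC}, which the paper points out is automatic in the asymptotically periodic setting); the paper does not take a Fredholm route. As written, your proposal leaves the central step of the fixed-frequency assertion open, so it does not constitute a proof of the first sentence of the theorem. To close the gap you should either cite the relevant result of \cite{BurqJo} for the fixed-frequency statement as the paper does, or actually produce the contradiction/propagation argument (extract a quasimode sequence from a putative blow-up of the resolvent, show by elliptic estimates and the damping that it cannot concentrate anywhere, including at spatial infinity, using the asymptotic periodicity), rather than appeal to an unproven Fredholm alternative.
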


For the proof of this result we refer to \cite{BurqJo} (notice that $w = 1$ in \cite{BurqJo}, but this does not play any role in this high-frequency analysis). 

The first statement about a fixed frequency holds under the general assumption that all the points in $\R^d$ are in some suitable sense uniformly close to the damping region (see Theorem 1.3 and Section 4 in \cite{BurqJo}). It is not difficult to check that this is always the case in our asymptotically periodic setting, even without the damping condition \eqref{hyp-GCC}.

Since the resolvent $(\Ac-\t)\inv$ is continuous on $\R \setminus \set 0$, it is clear that an estimate like \eqref{estim-high-freq} holds for $\t$ in a compact subset. However this resolvent may blow up when $\abs \t$ goes to $+\infty$. The fact that we have a uniform estimate even at the high-frequency limit relies on the damping condition \eqref{hyp-GCC} on classical trajectories (see Theorem 1.2 and Section 3 in \cite{BurqJo}). As explained in the introduction, we would have a weaker estimate with loss of regularity without this assumption. 

The proof of Theorem \ref{th-high-freq} relies on semiclassical analysis. This is why we need some regularity for the coefficients of the problem. Notice that \cite{BurqJo} requires uniform continuity for $a$. This is indeed the case here for our continuous and asymptotically periodic absorption index.

\begin{remark} \label{rem-energy-space}
All the estimates of the main theorems are given in $\HH$ or its weighted analogs. However, for the energy of a wave it would be more natural to work in the energy space $\EE$, defined as the Hilbert completion of $\Sc \times \Sc$ for the norm defined by
\[
\nr{(u,v)}_{\EE}^2 = \int_{\R^d}  G(x) \nabla u(x) \cdot \nabla \bar u (x)  dx + \int_{\R^d} \frac {\abs{v(x)}^2}{w(x)}  dx.
\]
We observe that $\EE$ is equal to the standard energy space $\dot H^1(\R^d) \times L^2(\R^d)$ with equivalent norm, and if $u$ is the solution of \eqref{wave} then its energy is exactly 
\[
E(t) = \nr{(u(t), w \partial_t u(t))}_\EE^2.
\]
Moreover we could check that the operator $\Ac$ would define on $\EE$ a maximal dissipative operator, so that $(e^{-it\Ac})_{t \geq 0}$ would be a contractions semigroup on $\EE$.

Working in $\EE$ instead of $\HH$ means that we are not interested in the size of the solution $u$ itself but only in the size of its first derivatives. And the estimates should not depend on $u_0$ but only on $\nabla u_0$ (see \cite{royer-dld-energy-space} for a discussion on this question). However for the heat equation it is natural to take into account the size of $u_0$. Thus, since our wave behaves like a solution of the heat equation, it is relevant to give all the estimates in $\HH$ instead of $\EE$.
\end{remark}

\section{Reduction to a low frequency analysis} \label{sec-high-freq}

In this section we show how we can use the resolvent estimate of Theorem \ref{th-high-freq} to reduce the time decay properties of Theorems \ref{th-energy-decay-periodic} and \ref{th-diff-decay} to the contributions of low frequencies. By density, it is enough to consider initial data in $\Sc \times \Sc$.\\

Let $\vf \in C^\infty(\R,[0,1])$ be equal to 0 on $(-\infty,1]$ and equal to 1 on $[2,+\infty)$. For $\e \in (0,1]$ and $t \in \R$ we set $\vf_\e(t) := \vf \big( \frac t \e \big)$, and then
\begin{equation} \label{def-Ueps}
U_\e(t) := \vf_\e(t) e^{-it\Ac}.
\end{equation}
Let $F \in \Sc \times \Sc$ and $\m \in (0,1]$. For $\t \in \R$ we have
\[
\int_\R e^{it\t} e^{-t\m} U_\e(t) F \, dt = -i \big(\AAc-(\t+i\m)\big)\inv F_\e(\t+i\m),
\]
where for $z \in \C$ we have set 
\begin{equation} \label{def-Fz}
F_\e(z) = \int_\e^{2\e} \vf_\e'(t) e^{-it(\Ac-z)} F \, dt.
\end{equation}
By Theorem \ref{th-high-freq}, the map $\t \mapsto \big(\AAc-(\t+i\m)\big)\inv F_\e(\t+i\m)$ belongs to $\Sc$. Then the Fourier inversion formula yields, for all $t \in \R$,
\[
e^{-t\m} U_\e(t) F = \frac {1} {2i\pi} \int_{\R} e^{-it\t} \big( \AAc-(\t+i\m) \big)\inv F_\e(\t+i\m) \, d\t,
\]
or
\begin{equation} \label{eq-Ueps}
U_\e(t) F = \frac {1} {2i\pi} \int_{\Im(z) = \m} e^{-itz} (\AAc-z)\inv F_\e(z) \, dz.
\end{equation}
Let $C > 0$ be given by Theorem \ref{th-high-freq} and $\g \in \big(0, \frac 1 {2C}\big)$. Then the resolvent $(\Ac-z)\inv$ is well defined if $\abs{\Re(z)} \geq 1$ and $\Im(z) \geq - \g$. We consider $\th_\m \in C^\infty(\R,\R)$ such that $\th_\m(s) = \m$ if $\abs{s} \leq 1$, $-\g\leq \th_\m(s) \leq \m$ if $\abs s \in [1,2]$ and $\th_\m(s) = -\g$ if $\abs s \geq 2$. Then we set (see Figure \ref{fig-Gamma})
\begin{equation} \label{def-Gamma}
\G_\m := \set{\t + i \th_\m(\t) , \t \in \R}.
\end{equation}
\begin{figure}[h]
\begin{center}
\resizebox{0.8\textwidth}{!}{\input{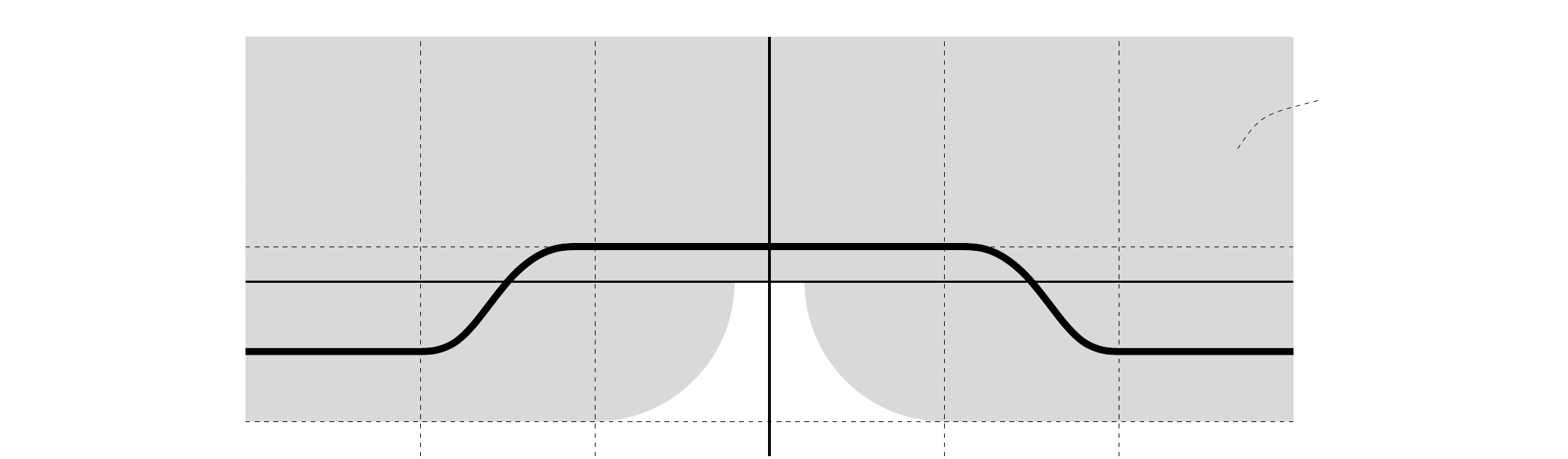_t}}\\[2mm]
\end{center}
\caption{The curve $\Gamma_\m$.} \label{fig-Gamma}
\end{figure}

Since the integrand in \eqref{eq-Ueps} is holomorphic and decays rapidly at infinity we can write 
\[
U_\e(t) F = \frac 1 {2i\pi} \int_{\G_\m} e^{-itz} (\AAc-z)\inv F_\e(z) \, dz.
\]
Notice that, by holomorphy of the integrand, the right-hand side does not depend on $\m \in (0,1]$. Then we separate the contributions of low and high frequencies. For this we consider $\h \in C_0^\infty(\R,[0,1])$ supported in (-3,3) and equal to 1 on a neighborhood of [-2,2]. For $F \in \Sc \times \Sc$ we set
\[
\Ulow(t) F = \frac {1}{2i\pi}\int_{\Gamma_\m} \chi(\Re(z)) e^{-itz}(\Ac-z)^{-1} F_\e(z)~dz
\]
and 
\[
\Uhigh(t) F = \frac {1}{2i\pi}\int_{\Gamma_\m} (1-\chi)(\Re(z))
e^{-itz}(\Ac- z)^{-1} F_\e(z)~dz.
\]
Again, these quantities do not depend on $\m$ (this is clear for $\Uhigh(t) F$, for $\Ulow(t) F$ it follows from the holomorphy of the integrand in the region where $\abs{\Re(z)} \leq 2$). We begin with the contribution of high frequencies:

\begin{proposition} \label{prop-high-freq}
There exists $C \geq 0$ such that for $F \in \Sc \times \Sc$, $\m \in (0,1]$, $\e \in (0,1]$ and $t \geq 0$ we have 
\[
\nr{\Uhigh(t) F}_{\HH} \leq \frac {C e^{-\frac {\g t} 2}} {\sqrt \e}  \nr{F}_\HH.
\]
\end{proposition}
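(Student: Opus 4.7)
The plan is to use the uniform high-frequency resolvent estimate of Theorem~\ref{th-high-freq} to push the contour $\Gamma_\mu$ down to $\Im z = -\gamma$ on the support of $(1-\chi)(\Re z)$, extracting the exponential factor $|e^{-itz}| = e^{-\gamma t}$. The remaining $\tau$-integral is then controlled by a Plancherel argument applied to $F_\varepsilon$, and the resulting weighted $L^2$-in-time estimate is upgraded to the pointwise bound of the proposition.

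First I would extend Theorem~\ref{th-high-freq} from the real axis to the strip $\{|\Re z|\geq 2,\ |\Im z|\leq\gamma\}$ via a Neumann series: for $|\tau|\geq 1$ and $|z-\tau|<1/C$,
\[
(\Ac-z)^{-1} = (\Ac-\tau)^{-1}\sum_{k\geq 0}\bigl((z-\tau)(\Ac-\tau)^{-1}\bigr)^k,
\]
converges in $\Lc(\HH)$, and the choice $\gamma<1/(2C)$ built into the definition of $\Gamma_\mu$ makes the bound $\|(\Ac-z)^{-1}\|_{\Lc(\HH)}\lesssim 1$ uniform on the strip. Since $F_\varepsilon$ is entire, the integrand in the definition of $\Uhigh(t)F$ is holomorphic there. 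On $\{|\Re z|\geq 2\}$ the contour $\Gamma_\mu$ already sits at $\Im z = -\gamma$, so parameterizing $z=\tau-i\gamma$ gives
\[
\Uhigh(t)F = \frac{e^{-\gamma t}}{2i\pi}\int_\R(1-\chi)(\tau)\,e^{-it\tau}\,(\Ac-(\tau-i\gamma))^{-1}F_\varepsilon(\tau-i\gamma)\,d\tau.
\]

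The key estimate on $F_\varepsilon$ is Plancherel: the function $F_\varepsilon(\tau-i\gamma)$ is, up to a factor of $2\pi$, the Fourier transform in the variable $s$ of $h_\varepsilon(s):=\varphi'_\varepsilon(s)e^{s\gamma}e^{-is\Ac}F$, which is supported in $[\varepsilon,2\varepsilon]$, so
\[
\|F_\varepsilon(\cdot-i\gamma)\|_{L^2(\R;\HH)}^2 = 2\pi\|h_\varepsilon\|_{L^2(\R;\HH)}^2 \lesssim \varepsilon^{-1}\|F\|_\HH^2,
\]
after using $|\varphi'_\varepsilon|\lesssim\varepsilon^{-1}$ on an interval of length $\varepsilon$ together with the boundedness of the semigroup. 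Combined with the uniform resolvent bound and Plancherel on the $t$-side, this produces the weighted $L^2$-in-time estimate
\[
\int_\R e^{2\gamma t}\|\Uhigh(t)F\|_\HH^2\,dt \lesssim \varepsilon^{-1}\|F\|_\HH^2.
\]

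The hard part is the final pointwise upgrade, which is what forces the exponent $\gamma t/2$ rather than $\gamma t$. One cannot naively use Sobolev embedding in $t$: $\partial_t\Uhigh(t)F$ produces an extra factor of $z$ in the integrand, and the identity $z(\Ac-z)^{-1}=-I+\Ac(\Ac-z)^{-1}$ would force the appearance of $\Ac F$, which is not controlled by $\|F\|_\HH$. The key device avoiding this is the integration-by-parts identity
\[
(\Ac-z)F_\varepsilon(z) = -iG_\varepsilon(z), \qquad G_\varepsilon(z) := \int\varphi''_\varepsilon(t)e^{-it(\Ac-z)}F\,dt,
\]
obtained by differentiating $e^{-it(\Ac-z)}$ in $t$ and integrating by parts against $\varphi'_\varepsilon$; this lets one trade $(\Ac-z)^{-1}F_\varepsilon(z)$ for $-i(\Ac-z)^{-2}G_\varepsilon(z)$, keeping all bounds in terms of $\|F\|_\HH$ at the price of an additional factor of $\varepsilon^{-1}$. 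A mean-value argument on intervals of unit length then finishes the proof: the weighted $L^2$ bound yields some $s_0\in[t,t+1]$ with $\|\Uhigh(s_0)F\|_\HH\lesssim\varepsilon^{-1/2}e^{-\gamma t}\|F\|_\HH$, and the derivative estimate (obtained via the IBP identity) transfers this to $t$ with a loss of at most $e^{\gamma/2}$. This is where the weaker exponent $\gamma/2$ appears in the final bound.
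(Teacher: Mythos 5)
Your argument is sound up through the weighted $L^2$-in-time estimate
\[
\int_\R e^{2\g t}\nr{\Uhigh(t)F}_\HH^2\,dt\lesssim \e^{-1}\nr F_\HH^2,
\]
which agrees with the double Plancherel step in the paper. The genuine gap is in the pointwise upgrade.

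Your integration-by-parts identity $(\Ac-z)F_\e(z)=-iG_\e(z)$ with $G_\e(z)=\int\vf_\e''(s)e^{-is(\Ac-z)}F\,ds$ is correct, but it does not remove the extra factor of $z$ coming from $\partial_t$. After substitution you still face $z(\Ac-z)^{-2}G_\e(z)$, and $z(\Ac-z)^{-2}=(\Ac-z)^{-1}\big(-I+\Ac(\Ac-z)^{-1}\big)$ is not uniformly bounded on $\set{\Im z=-\g,\ \abs{\Re z}\ge 2}$: the piece $\Ac(\Ac-z)^{-1}=I+z(\Ac-z)^{-1}$ has norm growing like $1+\abs z$. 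Iterating the IBP just pushes the unbounded factor onto $\vf_\e'''$, $\vf_\e''''$, etc., without resolving it. So you have not actually shown that $\partial_t\Uhigh(t)F$ is controlled by $\nr F_\HH$, which is exactly the obstruction you correctly identified two sentences earlier. In addition, even if the $z$-factor issue were fixed, the extra $\e^{-1}$ you acknowledge (coming from $\vf_\e''$ versus $\vf_\e'$) propagates into the final pointwise bound and produces $\e^{-3/2}$ instead of the claimed $\e^{-1/2}$. Finally, the $e^{-\g t/2}$ does not appear for the reason you state: a unit-length mean-value argument retains the full $e^{-\g t}$ decay; $e^{\g/2}$ is merely a constant.

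The paper's device is different. Setting $I_\e(s)=e^{\g s}\Uhigh(s)F$, one differentiates $s\mapsto e^{-i(t-s)\Ac}I_\e(s)$. The semigroup derivative $i\Ac$ combines with the $-i\t$ coming from $e^{-is\t}$ inside $I_\e(s)$ to give $i\big(\Ac-(\t-i\g)\big)-\g$, and $(\Ac-z)$ cancels $(\Ac-z)^{-1}$ in the integrand. The resulting expression is bounded in weighted $L^2$ by $\e^{-1/2}\nr F_\HH$, without any appearance of $\Ac F$ or of $\vf_\e''$. The $\g/2$ then enters because the paper integrates the fundamental-theorem estimate from a base point $t_0\in[0,1]$ all the way up to $t$, not over a unit interval; Cauchy--Schwarz gives a factor $\sqrt{t-t_0}\lesssim\sqrt t$, and $\sqrt t\,e^{-\g t}\lesssim e^{-\g t/2}$. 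You should replace the IBP/unit-interval step with this semigroup cancellation to close the argument.
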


\begin{proof}
Let $F \in \Sc \times \Sc$. For $\e \in (0,1]$ and $t \in \R$ we set
\begin{equation} \label{def-Ieps}
I_\e(t) := e^{\g t} \Uhigh(t) F.
\end{equation}
We have 
\[
I_\e(t) = \frac 1 {2i\pi} \int_{\R} (1-\h)(\t) e^{-it\t} (\AAc-(\t-i\g))\inv F_\e(\t - i\g) \, d\t.
\]
By the Plancherel equality (twice) and Theorem \ref{th-high-freq} we have 
\begin{eqnarray} 
\nonumber
\lefteqn{\int_{\R} \nr{I_\e(t)}_{\HH}^2 \, dt
\lesssim \int_{\R} \nr{(1-\h)(\t) (\AAc-(\t-i\g))\inv F_\e(\t - i\g)}_{\HH}^2 \, d\t}\\
\nonumber
&& \lesssim \int_{\R} \nr{F_\e(\t - i\g)}_{\HH}^2 \, d\t
 \lesssim \int_{\R} \nr{\vf_\e'(t) e^{-it\AAc} e^{\g t} F}_\HH^2 \, dt
 \lesssim \frac {\nr{F}_\HH^2} {\e^2} \int_{\e}^{2\e} e^{2 \g t}  \, dt\\
\label{double-Plancherel}
&& \lesssim \frac {\nr{F}_{\HH}^2} \e.
\end{eqnarray}
Let $t_0,t \in \R$ with $t_0 < t$. For $s \in [t_0,t]$ we have 
\[
 \frac d {ds} \left(e^{-i(t-s)\Ac} I_\e(s) \right) = \g e^{-i(t-s)\Ac} I_\e(s) + \frac 1 {2i\pi} \int_{\R} (1-\h)(\t) e^{-it\t} e^{-i(t-s)\AAc} F_\e(\t - i\g) \, d\t,
\]
so as above we can check that 
\[
\int_{t_0}^{t} \nr{\frac d {ds} \left(e^{-i(t-s)\Ac} I_\e(s) \right)}_{\HH}^2 \, ds \lesssim \frac {\nr{F}_\HH^2} \e.
\]
Then, by the Cauchy-Schwarz inequality,
\begin{align*}
\nr{I_\e(t)}_{\HH}
& \leq \nr{e^{-i(t-t_0)\AAc} I(t_0)}_\HH + \int_{t_0}^{t} \nr{\frac d {ds} \left(e^{-i(t-s)\AAc} I_\e(s) \right)}_{\HH} \, dt\\
& \leq  \nr{I_\e(t_0)}_\HH + \sqrt {t-t_0} \frac {\nr{F}_{\HH}}{\sqrt \e}.
\end{align*}
By \eqref{double-Plancherel} we have
\[
\inf_{t_0 \in [0,1]} \nr{I_\e(t_0)}_\HH \lesssim \frac {\nr{F}_\HH} {\sqrt \e},
\]
so for $t \geq 1$
\[
\nr{I_\e(t)}_\HH \lesssim \sqrt {\frac t \e}  {\nr{F}_\HH}.
\]
With \eqref{def-Ieps}, this concludes the proof.
\end{proof}

We now turn to the contribution of low frequencies. The smooth cut-off $\vf_\e$ introduced in \eqref{def-Ueps} was useful to analyse the contribution of high frequencies (if $U_\e$ is smooth then $F_\e(z)$ is small at infinity). For low frequencies we could also estimate $U_\e$ for some fixed $\e$, but in order to obtain the sharp result of Theorem \ref{th-heat} we have to work with the initial data $F$ and not its perturbed version $F_\e$. In the following lemma we let $\e$ go to 0. Since $-\vf_\e'$ somehow converges to the Dirac mass at $t = 0$, we obtain that we can replace $F_\e$ by $F$ in the expression of $U_\e^\low(t)$. We set  
\begin{equation} \label{def-Ilow}
\Ilow(t)
:= \frac {1}{2i\pi}\int_{\Gamma_\m} \chi(\Re(z)) e^{-itz}(\Ac-z)^{-1} \,dz.
\end{equation}
As above, this does not depend on $\m \in (0,1]$.

\begin{proposition}\label{prop-low-freq}
There exists $C \geq 0$ such that for $F \in \Sc \times \Sc$, $\e \in (0,1]$ and $t \geq 0$ we have 
\[
\nr{\Ulow(t) F - \Ilow(t) F}_\HH \leq C \e \nr{F}_\HH.
\]
\end{proposition}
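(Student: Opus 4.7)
The plan is to combine the two integrals defining $\Ulow(t)F$ and $\Ilow(t)F$ into a single contour integral involving $F_\e(z) - F$, to extract a factor of $\e$ pointwise in $z$ from the resulting integrand, and finally to use a contour deformation (made possible by the fact that the integrand extends to an entire function of $z$) to obtain a bound uniform in $t \geq 0$.

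For the pointwise bound, I would first exploit $\int_\e^{2\e} \vf_\e'(s)\,ds = 1$ to write
\[
F_\e(z) - F = \int_\e^{2\e} \vf_\e'(s) \big(e^{-is(\Ac - z)} - I\big) F \, ds.
\]
Since $F \in \Sc \times \Sc \subset \Dom(\Ac)$, the identity $e^{-is(\Ac-z)} - I = -i(\Ac-z)\int_0^s e^{-i\t(\Ac-z)}\,d\t$ gives
\[
G(z) := (\Ac-z)\inv(F_\e(z) - F) = -i \int_\e^{2\e} \vf_\e'(s) \int_0^s e^{-i\t(\Ac-z)} F \, d\t \, ds.
\]
The factorisation $e^{-i\t(\Ac-z)} = e^{i\t z} e^{-i\t \Ac}$ shows that this right-hand side provides a holomorphic extension of $G$ to all of $\C$. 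Combining the uniform bound $\nr{e^{-i\t\Ac}}_{\mathscr L(\HH)} \leq C$ on the semigroup generated by $-i\Ac$ (from Proposition \ref{prop-Ac-max-diss} and Hille--Yosida), the pointwise estimate $|e^{i\t z}| = e^{-\t \Im z} \leq e^{2\g}$ valid for $\Im z \geq -\g$ and $\t \in [0,2]$, together with $|\vf_\e'(s)| \leq C/\e$ and $s \leq 2\e$, I would derive
\[
\nr{G(z)}_\HH \leq C\e \nr{F}_\HH
\]
uniformly for $z$ in the horizontal strip $\set{-\g \leq \Im z \leq 1}$.

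For the last step I would use the cut-off structure. Since $\chi(\Re z) \equiv 1$ on $\set{|\Re z| \leq 2}$, the integrand $\chi(\Re z) e^{-itz} G(z)$ is holomorphic in that vertical strip, so the portion of $\Gamma_\m$ inside $\set{|\Re z| \leq 2}$ (which in the original definition sits at height $\m > 0$ above $|\Re z| \leq 1$) can be pushed down to the horizontal segment at $\Im z = -\g$, matching the pieces of $\Gamma_\m$ already at height $-\g$ for $|\Re z| \geq 2$. On the deformed horizontal contour, $|e^{-itz}| = e^{-\g t} \leq 1$ for all $t \geq 0$, so the pointwise bound on $G$ together with the compact support of $\chi$ in $(-3,3)$ yield
\[
\nr{\Ulow(t) F - \Ilow(t) F}_\HH \leq \frac{C\e \nr{F}_\HH}{2\pi} \int_{-3}^{3} |\chi(\t)| \, d\t \leq C \e \nr{F}_\HH.
\]

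The main obstacle is justifying this contour deformation, which rests on the not entirely obvious fact that $G$ is entire: the apparent singularity of $(\Ac-z)\inv$ at $z = 0$ (which lies in the spectrum of $\Ac$) has to cancel against the first-order vanishing of $F_\e(z) - F$ at $z = 0$, and this is precisely what the explicit formula for $G$ above makes transparent. Once the entireness of $G$ is recognized, the deformation through the real axis is legitimate and the rest of the argument reduces to the straightforward size estimates sketched above.
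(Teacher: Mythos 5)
Your proof is correct and rests on the same central idea as the paper's: the map $z\mapsto(\Ac-z)\inv(F_\e(z)-F)$ admits an explicit entire extension expressed through the semigroup, which neutralises the resolvent singularity. Your formula $G(z)=-i\int_\e^{2\e}\vf_\e'(s)\int_0^s e^{-i\t(\Ac-z)}F\,d\t\,ds$ is, after one integration by parts, the same as the paper's $(\Ac-z)\inv(F-F_\e(z))=i\int_0^{2\e}(1-\vf_\e(s))e^{-is(\Ac-z)}F\,ds$, and the $O(\e)$ size comes out identically.

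Where you diverge is the endgame. The paper simply bounds $\abs{e^{-itz}}\leq e^{\m t}$ on $\G_\m$, notes $\nr{G(z)}_\HH\lesssim\e\nr F_\HH$ uniformly on $\G_\m\cap\set{\abs{\Re z}\leq 3}$, and then lets $\m\to0$ using the independence of the contour integral from $\m$. You instead deform the contour all the way down to $\Im z=-\g$ in the strip where $\chi\equiv 1$, which is legitimate because the integrand is entire there and $\G_\m$ already coincides with $\set{\Im z=-\g}$ for $\abs{\Re z}\geq2$. This buys you the stronger bound $C\e\,e^{-\g t}\nr F_\HH$, which is more than the proposition claims (and more than is needed, since Proposition \ref{prop-eAAc-Ilow} already takes $\e=e^{-\g t/4}$ to produce exponential decay). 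The paper's $\m\to0$ limit is the leaner route to the stated estimate; your deformation is slightly more work but gives sharper information for free.

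One small imprecision worth flagging: in your closing remark you attribute the entireness of $G$ to a ``first-order vanishing of $F_\e(z)-F$ at $z=0$.'' In fact $F_\e(0)-F=\int_\e^{2\e}\vf_\e'(s)(e^{-is\Ac}-I)F\,ds\neq0$ in general, and $z=0$ is not an isolated pole of $(\Ac-z)\inv$ (it sits in the essential spectrum). The actual mechanism is exactly what your displayed formula exhibits: the combination $(\Ac-z)\inv(F_\e(z)-F)$ is directly expressible via the semigroup without any inverse at all, hence entire. This does not affect the correctness of your argument, only the informal explanation accompanying it.
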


\begin{proof}
Let $\e \in (0,1]$. For $z \in \C_+$ we have 
\[
(\AAc-z)\inv F = i \int_0^{+\infty} e^{-is(\AAc-z)} F\, ds.
\]
On the other hand
\[
(\AAc-z)\inv F_\e(z) 
= \int_0^{+\infty} \vf_\e'(s) e^{-is(\AAc-z)} (\AAc-z)\inv F \, ds 
= i \int_0^{+\infty} \vf_\e(s) e^{-is(\AAc-z)} F \, ds,
\]
so
\[
(\AAc-z)\inv \big( F-F_\e(z) \big) = i \int_0^{2\e} \big(1-\vf_\e(s)\big) e^{-is(\AAc-z)} F \, ds.
\]
Let $\m \in (0,1]$. This equality between holomorphic functions on $\C_+$ can be extended to any $z \in \G_\m$. Moreover, since we only integrate over a compact subset of $\G_\m$ we can write 
\[
\nr{\Ulow(t) F - \Ilow(t) F}_\HH \lesssim \sup_{\substack {z \in \G_\m \\ \abs{\Re(z)} \leq 3}} \nr{e^{-itz} (\AAc-z)\inv \big( F-F_\e(z) \big)}_\HH \lesssim \e e^{\m t} \nr{F}_\HH.
\]
Since the left-hand side does not depend on $\m \in (0,1]$, we can let $\m$ go to 0, which concludes the proof.
\end{proof}

By Proposition \ref{prop-high-freq} and Lemma \ref{prop-low-freq} applied with $\e = e^{-\frac {\g t} 4}$, we finally obtain the following result:

\begin{proposition} \label{prop-eAAc-Ilow}
There exists $C \geq 0$ such that for $t \geq 0$ and $F \in \HH$ we have
\[
\nr{e^{-it\AAc} F - \Ilow(t) F}_\HH \leq C e^{-\frac {\g t} 4} \nr F_\HH.
\]
\end{proposition}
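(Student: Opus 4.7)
The plan is to combine Proposition~\ref{prop-high-freq} and Proposition~\ref{prop-low-freq} with the cut-off parameter $\e$ tuned as a function of $t$. By density of $\Sc \times \Sc$ in $\HH$ (and uniform boundedness of $e^{-it\AAc}$ on $\HH$, given by Hille--Yosida applied to the estimate of Proposition~\ref{prop-Ac-max-diss}), I first reduce to $F \in \Sc \times \Sc$, where both $\Ulow(t)F$ and $\Uhigh(t)F$ are well defined as in Section~\ref{sec-high-freq}.

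For any $\e \in (0,1]$ with $t \geq 2\e$, the cut-off $\vf_\e$ takes the value $1$ at time $t$, so by the definition \eqref{def-Ueps} and the splitting $U_\e(t) = \Ulow(t) + \Uhigh(t)$ one obtains the telescoping identity
\[
e^{-it\AAc} F - \Ilow(t) F = \Uhigh(t) F + \bigl( \Ulow(t) F - \Ilow(t) F \bigr).
\]
Applying Proposition~\ref{prop-high-freq} to the first summand and Proposition~\ref{prop-low-freq} to the second produces the bound
\[
\nr{e^{-it\AAc} F - \Ilow(t) F}_\HH \leq C \left( \frac{e^{-\g t/2}}{\sqrt \e} + \e \right) \nr F_\HH.
\]
The next step is to choose $\e = e^{-\g t/4}$. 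Then $e^{-\g t/2}/\sqrt \e = e^{-3\g t/8} \leq e^{-\g t/4}$ and $\e = e^{-\g t/4}$, so both terms are dominated by $e^{-\g t/4} \nr F_\HH$. This choice is admissible (that is, $\e \in (0,1]$ and $t \geq 2\e$) as soon as $t \geq T_0$, where $T_0$ depends only on $\g$.

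For the remaining range $t \in [0,T_0]$, I will invoke a trivial uniform bound. The propagator $e^{-it\AAc}$ is uniformly bounded on $\HH$, and $\Ilow(t)F$, written with $\m = 1$, is an integral over the compact arc $\set{z \in \G_1 \st \abs{\Re z} \leq 3}$ of a uniformly norm-bounded family of resolvents (by continuity and Theorem~\ref{th-high-freq}), so $t \mapsto \Ilow(t)$ is bounded as an operator on $\HH$ uniformly in $t \in [0,T_0]$. Absorbing the factor $e^{\g T_0/4}$ into the constant then gives the stated estimate on all of $\R_+$. The argument presents no genuine obstacle; the only care needed is to isolate the short-time regime, where the identity $U_\e(t) = e^{-it\AAc}$ fails and the telescoping above does not apply.
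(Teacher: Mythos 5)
Your proof is correct and is precisely the paper's approach: the paper's entire ``proof'' is the one-line remark that Propositions \ref{prop-high-freq} and \ref{prop-low-freq} applied with $\e = e^{-\g t/4}$ yield the result. You have correctly filled in the details the paper leaves implicit, notably the short-time regime $t < 2\e$ where $\vf_\e(t)\neq 1$ and the identity $U_\e(t)F = e^{-it\AAc}F$ fails.
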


The rest of the paper is devoted to the analysis of $\Ilow(t) F$.

\section{Low frequency analysis in the periodic setting} \label{sec-periodic}

Let 
\begin{equation} \label{def-Ilowp}
\Ilowp(t)
:= \frac {1}{2i\pi}\int_{\Gamma_\m} \chi(\Re(z)) e^{-itz}(\AAp-z)^{-1} \, dz.
\end{equation}
This coincides with $\Ilow(t)$ (see \eqref{def-Ilow}) in the particular case of a purely periodic setting. In this case the result of Proposition \ref{prop-eAAc-Ilow} gives 
\begin{equation} \label{estim-high-freq-periodic}
\nr{e^{-it\AAp} F - \Ilowp(t) F}_\HH \leq C e^{-\frac {\g t} 4} \nr F_\HH.
\end{equation}
In this section we analyse $\Ilowp(f)$. With \eqref{estim-high-freq-periodic}, this will prove Theorem \ref{th-heat}, and hence Theorem \ref{th-energy-decay-periodic}.

\subsection{Floquet-Bloch decomposition of the periodic problem}

If $G(x) = \Gp(x)$, $a(x) = \ap(x)$ and $w(x) = \wp(x)$, then the medium in which our wave propagates is exactly $\Z^d$-periodic. However, the initial data and the solution itself are not periodic, so we cannot see our problem as a problem on the torus. We will use the Floquet-Bloch decomposition to write a function in $L^2(\R^d)$ as an integral of $\Z^d$-periodic contributions.\\

We denote by $L^2_\per$ the space of $L^2_\loc$ and $\Z^d$-periodic functions on $\R^d$. It is endowed with the natural norm defined by
\[
\nr u_{L^2_\per}^2 := \int_\T \abs{u(x)}^2 \, dx.
\]
Then we set $\LL_\per = L^2_\per \times L^2_\per$. For $k \in \N$ we also define $H^k_\per$ as the space of $\Z^d$-periodic and $H^k_\loc$ functions, endowed with the obvious norm.\\

The Floquet-Bloch decomposition is standard in this kind of context. We begin this section by recording the definitions and properties which we are going to use in this paper. 
For $u \in \Sc$, $\s \in \R^d$ and $x \in \R^d$ we set
\begin{equation} \label{def-uper-s}
u_\per^\s (x) = \sum_{n \in \Z^d} u(x + n) e^{-i(x+n)\cdot \s}.
\end{equation}
For all $\s \in \R^d$ the function $u_\per^\s$ belongs to $L^2_\per$.

\begin{proposition} \label{prop-Floquet-Bloch}
Let $u,v \in \Sc$.
\begin{enumerate}[(i)]
\item For $x \in \R^d$ we have 
\[
u(x) = \frac 1 {(2\pi)^d} \int_{\s \in 2\pi\T} e^{ix\cdot \s} u_\per^\s(x) \, d\s.
\]
\item For $\p \in L^2_\per$ and $\s \in \R^d$ we have 
\[
\innp{u_\per^\s}{\p}_{L^2_\per} = \int_{x \in \R^d} e^{-ix\cdot \s} u(x) \bar {\p(x)} \, dx.
\]
\item We have 
\[
\nr{u}_{L^2(\R^d)}^2 = \frac 1 {(2\pi)^d} \int_{\s \in 2\pi\T} \nr{u_\per^\s}_{L^2_\per}^2 \, d\s
\]
or, more generally,
\[
\innp{u}{v}_{L^2(\R^d)} = \frac 1 {(2\pi)^d} \int_{\s \in 2\pi\T} \innp{u_\per^\s}{v_\per^\s}_{L^2_\per}^2 \, d\s.
\]
\end{enumerate}
\end{proposition}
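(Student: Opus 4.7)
The plan is to treat this as a standard Floquet--Bloch / Poisson-type identity, with all computations justified by the fact that $u,v \in \Sc$ make the series defining $u_\per^\s$ absolutely and uniformly convergent in both $x$ and $\s$; consequently every swap of sum and integral below is legitimate by Fubini. The whole proof then rests on the single orthogonality
\[
\frac{1}{(2\pi)^d}\int_{2\pi\T} e^{-in\cdot\s}\, d\s = \1_{n=0}, \qquad n \in \Z^d,
\]
which is immediate from $2\pi\T = (-\pi,\pi]^d$.

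For (i), I would simply plug in the definition \eqref{def-uper-s} and use that $e^{ix\cdot\s}u_\per^\s(x) = \sum_{n\in\Z^d} u(x+n)e^{-in\cdot\s}$. Swapping sum and integral,
\[
\frac{1}{(2\pi)^d}\int_{2\pi\T} e^{ix\cdot\s}u_\per^\s(x)\, d\s = \sum_{n\in\Z^d} u(x+n)\cdot \1_{n=0} = u(x).
\]

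For (ii), the idea is to fold the sum over $\Z^d$ into a single integral over $\R^d$ using the periodicity of $\p$. Starting from $\innp{u_\per^\s}{\p}_{L^2_\per} = \int_\T \sum_n u(x+n)e^{-i(x+n)\cdot\s}\bar{\p(x)}\, dx$ and using $\bar{\p(x)} = \bar{\p(x+n)}$, I would swap sum and integral and change variables $y = x+n$ in each piece to get
\[
\sum_{n\in\Z^d}\int_{\T+n} e^{-iy\cdot\s} u(y)\bar{\p(y)}\, dy = \int_{\R^d} e^{-iy\cdot\s} u(y)\bar{\p(y)}\, dy,
\]
since the translates $\T+n$ tile $\R^d$.

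Finally, (iii) is just a Parseval-type combination of (i) and (ii). Applying (i) to $v$ and conjugating yields $\bar{v(x)} = (2\pi)^{-d}\int_{2\pi\T} e^{-ix\cdot\s}\bar{v_\per^\s(x)}\, d\s$. Substituting into $\innp{u}{v}_{L^2(\R^d)}$ and swapping the orders of integration (again legitimate for Schwartz functions), the inner integral is precisely the right-hand side of (ii) with $\p = v_\per^\s$, giving
\[
\innp{u}{v}_{L^2(\R^d)} = \frac{1}{(2\pi)^d}\int_{2\pi\T}\innp{u_\per^\s}{v_\per^\s}_{L^2_\per}\, d\s,
\]
and the norm identity is the specialization $u=v$. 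There is no real obstacle here: the only mild technical point is ensuring the Fubini swaps are justified, and Schwartz decay of $u,v$ together with compactness of $2\pi\T$ makes every intermediate sum/integral absolutely convergent.
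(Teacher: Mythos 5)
Your proof is correct and follows essentially the same approach as the paper: parts (i) and (ii) are identical, relying on the orthogonality $\int_{2\pi\T}e^{-in\cdot\s}\,d\s = (2\pi)^d\1_{n=0}$ and on tiling $\R^d$ by translates of $\T$, with Fubini justified by Schwartz decay. The only cosmetic difference is in (iii), where you derive the Parseval identity by composing (i) applied to $v$ with (ii), whereas the paper carries out a self-contained Fubini-plus-orthogonality computation; the two routes are essentially the same argument traversed in opposite directions.
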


\begin{proof}
For the first statement we only have to write 
\[
\int_{\s \in 2\pi\T} e^{ix\cdot \s} u_\per^\s(x) \, d\s = \sum_{n \in \Z^d} u(x+n) \int_{\s \in 2\pi\T} e^{-in\cdot \s}\, d\s = (2\pi)^d u(x).
\]
The second property follows from 
\begin{align*}
\innp{u_\per^\s}{\p}_{L^2_\per}
& = \int_{y \in \T} \sum_{n \in \Z^d} u(y + n) e^{-i(y+n)\cdot \s} \bar {\p(y)} \, dy\\
& = \sum_{n \in \Z^d} \int_{y \in \T} u(y + n) e^{-i(y+n)\cdot \s} \bar {\p(y + n)} \, dy\\
& = \int_{x \in \R^d} u(x) e^{-ix\cdot \s} \bar {\p(x)} \, dx. 
\end{align*}
In particular
\begin{align*}
\int_{\s \in 2\pi\T} \innp{u_\per^\s}{v_\per^\s}_{L^2_\per}^2 \, d\s
& = \int_{\s \in 2\pi\T} \int_{x \in \R^d} e^{-ix \cdot \s} u(x) \sum_{n \in \Z^d} \bar{v(x+n)} e^{i (x+n) \cdot \s} \, dx \, d\s\\
& = \int_{x \in \R^d} u(x) \sum_{n \in \Z^d} \bar{v(x+n)} \int_{\s \in 2\pi\T} e^{i n \cdot \s} \, d\s \, dx\\
& = (2\pi)^d \int_{x \in \R^d} u(x) \bar{v(x)} \, dx.
\end{align*}
The proof is complete.
\end{proof}

If $u \in L^1(\R^d)$ and $\p \in L^2_\per \cap L^\infty(\R^d)$ then by Proposition \ref{prop-Floquet-Bloch} we have for all $\s \in \R^d$
\[
\abs{\innp{u_\per^\s}{\p}_{L^2_\per}} \leq \nr{u}_{L^1(\R^d)} \nr{\p}_{L^\infty(\R^d)}.
\]
If $\p$ is not assumed to be in $L^\infty$ but $u \in L^{2,\d}$ for some $\d > \frac d 2$ (then $L^{2,\d} \subset L^1$) we have a similar estimate. More generally, we have the following result.

\begin{corollary} \label{cor-Floquet-Bloch}
Let $\k > 1$. Let $s \in \big[0,\frac d 2]$ and $p = \frac {2d}{d-2s} \in [2,+\infty]$. Then there exists $C \geq 0$ such that for $u \in \Sc$ and $\p_\s \in L^\infty_\s(2\pi \T, L^2_\per)$ we have 
\[
\nr{\innp{u_\per^\s}{\p_\s}_{L^2_\per}}_{L^p_\s(2\pi\T)} \leq C \nr{u}_{L^{2,\k s}(\R^d)} \nr{\p_\s}_{L^\infty_\s(2\pi \T,L^2_\per)}.
\]
\end{corollary}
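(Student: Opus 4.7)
The plan is to prove the estimate at the two extreme values of $s$ (hence of $p$) and recover the intermediate cases by complex interpolation. The object under consideration is the linear map $T_\p:u\mapsto\bigl(\s\mapsto\innp{u_\per^\s}{\p_\s}_{L^2_\per}\bigr)$, with $\p_\s\in L^\infty_\s(2\pi\T,L^2_\per)$ playing the role of a fixed parameter; one views it as a map between weighted Lebesgue spaces.

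For the endpoint $s=0$ (hence $p=2$), Cauchy--Schwarz in $L^2_\per$ combined with the Plancherel identity of Proposition~\ref{prop-Floquet-Bloch}~(iii) yields
\[
\int_{2\pi\T}\abs{\innp{u_\per^\s}{\p_\s}_{L^2_\per}}^2\,d\s \leq \nr{\p_\s}_{L^\infty_\s L^2_\per}^2\int_{2\pi\T}\nr{u_\per^\s}_{L^2_\per}^2\,d\s = (2\pi)^d\nr{\p_\s}_{L^\infty_\s L^2_\per}^2\nr{u}_{L^2}^2.
\]
For the endpoint $s=d/2$ (hence $p=\infty$), I would establish the uniform-in-$\s$ estimate $\nr{u_\per^\s}_{L^2_\per}\leq C\nr{u}_{L^{2,\k d/2}}$. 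Starting from the definition \eqref{def-uper-s}, a Cauchy--Schwarz on the lattice sum with weights $\pppg{x+n}^{\pm \k d/2}$ gives
\[
\abs{u_\per^\s(x)}^2\leq \Big(\sum_{n\in\Z^d}\pppg{x+n}^{-\k d}\Big)\sum_{n\in\Z^d}\abs{u(x+n)}^2\pppg{x+n}^{\k d},
\]
and the assumption $\k>1$ (i.e.\ $\k d>d$) ensures that the first factor is bounded uniformly for $x\in\T$. Integrating over $x\in\T$ and unfolding the second sum to $\R^d$ recovers $C\nr{u}_{L^{2,\k d/2}}^2$, so another Cauchy--Schwarz in $L^2_\per$ produces the desired $L^\infty_\s$ bound.

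The remaining cases then follow by interpolation. Consider, for $z$ in the strip $\set{0\leq\mathrm{Re}(z)\leq 1}$, the analytic family of operators $T_z f:=T_\p\bigl(\pppg x^{-\k dz/2}f\bigr)$. On the line $\mathrm{Re}(z)=0$ the multiplier $\pppg x^{-\k dz/2}$ is unimodular and the first endpoint yields $T_z\colon L^2(\R^d)\to L^2(2\pi\T)$ uniformly; on $\mathrm{Re}(z)=1$ the second endpoint gives $T_z\colon L^2(\R^d)\to L^\infty(2\pi\T)$ uniformly. Stein's interpolation theorem at $z=\th:=2s/d$ produces $T_\th\colon L^2(\R^d)\to L^p(2\pi\T)$ with $p=2/(1-\th)=2d/(d-2s)$, and substituting $f=\pppg x^{\k s}u$ recovers the claimed inequality. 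The only genuinely nontrivial ingredient is the weighted lattice estimate at $s=d/2$; the role of the hypothesis $\k>1$ is precisely to make the series $\sum_n\pppg{x+n}^{-\k d}$ convergent, so one should expect the result to fail without it.
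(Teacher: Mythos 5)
Your proof is correct and follows essentially the same route as the paper: Cauchy--Schwarz plus the Plancherel identity at $s=0$, a weighted Cauchy--Schwarz exploiting the convergence of $\sum_n\pppg{\cdot+n}^{-\k d}$ (where $\k>1$ is used) at $s=d/2$, and complex interpolation in between. The only cosmetic differences are that at $s=d/2$ you bound $\nr{u_\per^\s}_{L^2_\per}$ directly via Cauchy--Schwarz on the lattice sum rather than invoking Proposition~\ref{prop-Floquet-Bloch}~(ii), and you spell out the interpolation via Stein's theorem on an analytic family rather than simply citing the interpolation identities $(L^2,L^{2,\k d/2})_{[\th]}=L^{2,\th\k d/2}$ and $(L^2,L^\infty)_{[\th]}=L^p$; these are equivalent.
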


\begin{proof}
The case $s = 0$, $p = 2$, simply follows from the Cauchy-Schwarz inequality and Proposition \ref{prop-Floquet-Bloch}. For the case $s = \frac d 2$ and $p = \infty$ we use again Proposition \ref{prop-Floquet-Bloch} and the Cauchy-Schwarz inequality to write 
\begin{align*}
\abs{\innp{u_\per^\s}{\p_\s}_{L^2_\per}}
& \leq \int_{\R^d} \pppg x^{\frac {\k d}2} \abs{u(x)} \pppg x^{-\frac {\k d}2} \abs{\p_\s(x)} \, dx\\
& \leq \nr{u}_{L^{2,\frac {\k d}2}} \left(\int_{\T} \abs{\p_\s(y)}^2 \sum_{n \in \Z^d} \pppg {y+n}^{- \k d}  \, dy \right)^{\frac 12}\\
& \lesssim \nr{u}_{L^{2,\frac {\k d}2}} \nr{\p_\s}_{L^2_\per}.
\end{align*}
The general case follows by interpolation (we recall that for $\th \in [0,1]$ we have $L^{2,\th \k d /2} = (L^2,L^{2,\k d/2})_{[\th]}$ and $(L^2,L^\infty)_{[\th]} = L^p$ with $1/p = (1-\th)/2$)).
\end{proof}

\begin{remark}
Notice that it is usual (see for instance Theorem 4.3.1 in \cite{BensoussanLionsPapanicolaou}) to decompose directly $u_\per^\s$ with respect to the basis of $L^2_\per$ given by the eigenfunctions for the (selfadjoint) periodic problem under study (the Bloch waves). This strategy is used in \cite{OrivePaZu01} for the wave equation with constant damping. In this case, the eigenfunctions of the wave operator are related to those of the Laplacian operator, which form a Hilbert basis. The same strategy cannot be used here with a non-constant absorption index.
\end{remark}

Let
\begin{equation} \label{def-AAp}
\AAp = 
\begin{pmatrix}
0 & \wp\inv \\
\Pp & -i\ap
\end{pmatrix}
\end{equation}
(notice that all the results of Section \ref{sec-spectral} hold in particular when $G = \Gp$, $a = \ap$ and $w = \wp$).
For $u \in \Sc$ and $x \in \R^d$ we can write
\begin{equation} \label{bloch-P}
\Pp u (x) = \frac 1 {(2\pi)^d} \int_{\s \in 2\pi\T} \Pp e^{ix\cdot \s} u_\per^\s(x) \, d\s = \frac 1 {(2\pi)^d} \int_{\s \in 2\pi\T} e^{ix\cdot \s} \Psp u_\per^\s(x) \, d\s,
\end{equation}
where for $\s \in \R^d$ we have set
\begin{equation*}
\Psp = e^{-ix\cdot \s} \Pp e^{ix\cdot \s} = - (\divg + i \s\trsp)  \Gp(x) (\nabla + i\s).
\end{equation*}
Now let $U = (u,v) \in \Sc \times \Sc$. For $\s \in \R^d$ and $x \in \R^d$ we set $U_\per^\s(x) = \big(u_\per^\s (x), v_\per^\s (x)\big)$. Then we write 
\begin{equation} \label{bloch-Ac}
\AAp U = \frac 1 {(2\pi)^d} \int_{\s \in 2\pi\T} e^{ix\cdot \s} \AAs U_\per^\s \, d\s,
\end{equation}
where 
\[
\AAs = 
\begin{pmatrix}
0 & \wp\inv\\
\Psp & -i\ap
\end{pmatrix}.
\]

The interest of the decomposition \eqref{bloch-Ac} of the operator $\AAp$ is that each $\AAs$ has a compact resolvent, hence its spectrum is given by a sequence of isolated eigenvalues of finite algebraic multiplicities:

\begin{proposition}
Let $\s \in \R^d$.
\begin{enumerate}[(i)]
\item Then $\AAs$ defines an operator on $H^1_\per \times L^2_\per$ with domain $H^2_\per \times H^1_\per$. Moreover, it has a compact resolvent.
\item Let $z \in \C$. Then $(\AAs -z) \in \mathscr L (H^2_\per \times H^1_\per, H^1_\per \times L^2_\per)$ has a bounded inverse if and only if $(\Psp - iz\bp - z^2 \wp) \in \mathscr L (H^2_\per, L^2_\per)$ has a bounded inverse, which we denote by $R_\s(z)$, and in this case we have 
\begin{equation} \label{expr-res-AAs}
(\AAs -z)\inv = 
\begin{pmatrix}
R_\s(z) (i\bp + z\wp) & R_\s(z)\\
\wp + R_\s(z) (iz\bp + z^2\wp) & zR_\s(z)
\end{pmatrix}.
\end{equation}
In particular, $(\AAs-z)\inv$ extends to a bounded operator from $L^2_\per \times H\inv_\per$ to $H^1_\per \times L^2_\per$. 
\item Any $z \in \C_+$ belongs to the resolvent set of $\AAs$.
\end{enumerate}
\end{proposition}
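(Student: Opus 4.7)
The plan is to transfer the proofs of Propositions \ref{prop-Rz} and \ref{prop-Ac-max-diss} from $\R^d$ to the torus, with $\Psp$ playing the role of $\PG$. The key observation is that $\Psp = e^{-ix\cdot\s} \Pp e^{ix\cdot\s}$ is a non-negative selfadjoint operator on $L^2_\per$ with domain $H^2_\per$: this follows from
\[
\innp{\Psp u}{u}_{L^2_\per} = \innp{\Gp (\nabla + i\s) u}{(\nabla + i\s) u}_{L^2_\per} \geq G_{\min} \nr{(\nabla + i\s) u}_{L^2_\per}^2 \geq 0,
\]
and the uniform ellipticity of $\Gp$ also yields the coercivity estimates on $H^1_\per$ that will be needed for the mapping properties of $R_\s(z)$.

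I will start with (iii) since it is what feeds the other two. For $z = \t + i\m \in \C_+$, define $T_\s(z) := \Psp - iz\bp - z^2\wp$ and split into the two cases $\m \leq 2\t$ and $\m \geq 2\t$ exactly as in the proof of Proposition \ref{prop-Rz}: in the first case $T_\s(z) + 2i\t\m \wp_{\min}$ is a bounded dissipative perturbation of $\Psp$, hence maximal dissipative, so $T_\s(z)$ is boundedly invertible on $L^2_\per$; in the second case $T_\s(z) - \tfrac{\m^2}{2}\wp_{\min}$ is a dissipative and accretive perturbation of the nonnegative operator $\Psp$. The same energy-style computations as at the end of Proposition \ref{prop-Rz} then show that $R_\s(z) := T_\s(z)\inv$ extends to a bounded operator from $H^{-1}_\per$ to $H^1_\per$. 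When $\t < 0$ one uses $T_\s(z) = T_\s(-\bar z)^*$.

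For (ii), I solve $(\AAs - z)(u,v) = (f,g)$ algebraically. The first row gives $v = \wp(f + zu)$, and plugging this into the second row yields $T_\s(z) u = g + (i\bp + z\wp) f$. Hence $(\AAs - z)$ is boundedly invertible from $H^2_\per \times H^1_\per$ to $H^1_\per \times L^2_\per$ if and only if $T_\s(z)$ is boundedly invertible from $H^2_\per$ to $L^2_\per$, with the explicit formula \eqref{expr-res-AAs}; the extension from $L^2_\per \times H^{-1}_\per$ to $H^1_\per \times L^2_\per$ then follows from the mapping properties of $R_\s(z)$ obtained in the previous step, applied to the second form (mod rewriting $\Psp = T_\s(z) + iz\bp + z^2\wp$) exactly as in the proof of Proposition \ref{prop-Ac-max-diss}.

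Finally, (i) is essentially automatic. The operator $\AAs$ is well defined on $H^2_\per \times H^1_\per$ since $\Psp : H^2_\per \to L^2_\per$ is bounded and $\wp, \ap$ are smooth periodic multipliers; it is closed and densely defined by the same argument. Once (iii) provides some $z \in \rho(\AAs)$, the resolvent $(\AAs - z)\inv$ maps $H^1_\per \times L^2_\per$ continuously into $H^2_\per \times H^1_\per$, and the inclusion $H^2_\per \times H^1_\per \hookrightarrow H^1_\per \times L^2_\per$ is compact by the Rellich--Kondrachov theorem on the torus. There is no serious obstacle: the only delicate point is keeping track of the twisted differential $\nabla + i\s$, but its selfadjointness and coercivity are exactly what makes the $\R^d$-proofs go through verbatim on $\T^d$.
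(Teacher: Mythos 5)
Your proposal follows essentially the same route as the paper's proof: you transfer Propositions \ref{prop-Rz} and \ref{prop-Ac-max-diss} to the torus by replacing $\PG$ with the selfadjoint nonnegative operator $\Psp$ on $L^2_\per$, you establish the equivalence in (ii) by explicitly reducing the $2\times 2$ system to a scalar equation for $T_\s(z) = \Psp - iz\bp - z^2\wp$, and you deduce the compact resolvent in (i) from the compactness of the embedding $H^2_\per \times H^1_\per \hookrightarrow H^1_\per \times L^2_\per$. The algebraic reduction $v = \wp(f+zu)$, $T_\s(z)u = g + (i\bp + z\wp)f$ is correct and matches what the paper does by direct verification; the only minor discrepancy is that you state the ``if and only if'' in (ii) but spell out only one direction (the converse can be extracted from your reduction, or from the paper's trick of defining $R_\s(z)g$ via $(\AAs - z)^{-1}(0,g)$).
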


\begin{proof}
\stepp The operator $\Psp$ is selfadjoint on $L^2_\per$ with domain $H^2_\per$. As in the proof of Proposition \ref{prop-Rz}, we can check that for $z \in \C_+$ the operator $(\Psp-iz\bp-z^2\wp)$ indeed has a bounded inverse, and that when $R_\s(z)$ is well defined in $\mathscr L(L^2_\per,H^2_\per)$ it extends to a bounded operator from $H\inv_\per$ to $H^1_\per$.

\stepp Let $z \in \C$. If $R_\s(z)$ is well defined, then we can check by direct computation that the right-hand side of \eqref{expr-res-AAs} defines a bounded inverse for $(\AAs-z)\inv$. Conversely, assume that $z$ belongs to the resolvent set of $\AAs$. Then for $g \in L^2_\per$ we set 
\[
U = \begin{pmatrix} u \\ v \end{pmatrix} = (\AAs-z)\inv \begin{pmatrix} 0 \\ g \end{pmatrix}
\]
and
\[
R_\s(z) g = u \in H^2_\per.
\]
This defines a bounded operator from $L^2_\per$ to $H^2_\per$. Moreover, we compute $(\AAs-z) U$ and get 
\[
(\Psp-iz\bp-z^2\wp) u = g,
\]
which proves that $R_\s(z)$ is an inverse for $(\Psp-iz\bp-z^2\wp)$.

\stepp Finally we observe that $H^2_\per \times H^1_\per$ is compactly embedded in $H^1_\per \times L^2_\per$, so $\AAs$ has a compact resolvent, and the proof is complete.
\end{proof}

For $F \in \Sc \times \Sc$ and $z \in \C_+$ we have 
\[
(\AAp-z)\inv F = \frac 1 {(2\pi)^d} \int_{\s \in 2\pi\T} e^{ix\cdot \s} (\AAs-z)\inv F_\per^\s \, d\s,
\]
where $(\AAs-z)\inv$ is as given by \eqref{expr-res-AAs}. The equality remains valid for any $z$ in the resolvent sets of $\AAp$ and $\AAs$ for all $\s \in 2\pi \T$.

\subsection{Reduction to the contributions of small \texorpdfstring{$\s$}{sigma} and of the first Bloch wave}

With the Floquet-Bloch decomposition we have somehow reduced the spectral analysis of $\AAp$ to an eigenvalue problem for the family of operators $\AAs$, $\s \in 2\pi\T$. Because of the non-selfadjointness of these operators, the corresponding sequences of eigenfunctions do not form an orthogonal basis (and, in fact, not even a Riesz basis), but we can show that the decay of $\Ilowp(t) F$ is only governed by the contribution of $\s$ close to 0 and of the ``first'' eigenvalue of the operator $\AAs$. This is the purpose of this paragraph.\\

We first observe that for $\s \in \R^d$, $\l \in \C$ and $U = (u,v) \in H^2_\per \times H^1_\per$ we have 
\begin{equation} \label{vp-AAs}
\AAs U = \l U \quad \eqv \quad 
\begin{cases}
\big(\Psp - i \l \bp - \l^2 \wp \big) u = 0,\\
v = \l \wp u.
\end{cases}
\end{equation}

\begin{proposition} \label{prop-eigenvalues}
The following assertions hold.
\begin{enumerate}[(i)]
\item If $\l \in \Sp(\AAs)$ for some $\s \in 2\pi \T$, then $\Im(\l) \leq 0$.
\item There exist $\rr > 0$, $\ggg_2 > 0$ and $\ggg_1 \in (0,\min(1,\ggg_2))$ such that for $\s \in B(\rr)$ the operator $\AAs$ has a unique eigenvalue $\ls$ with $\abs {\ls} \leq \g_1$ and all the other eigenvalues with real part in {$[-3,3]$} have an imaginary part smaller than $-\ggg_2$. Moreover the eigenvalue $\ls$ is algebraically simple.
\item There exists $\ggg_0 \in (0,\ggg_1)$ such that for $\s \in 2\pi\T \setminus B(\rr)$ and $\l \in \Sp(\AAs)$ with $\abs{\Re(\l)} \leq 3$ we have $\Im(\l) \leq - \ggg_0$.
\end{enumerate}
\end{proposition}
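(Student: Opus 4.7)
Part (i) is immediate since the previous proposition asserts that $\C_+$ lies in the resolvent set of $\AAs$.

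For (ii), my plan is first to analyse $\AAso$ and then to transfer the picture to small $\s$ via analytic perturbation theory. At $\s = 0$ the operator $P_\mathbf{p}^0$ is the self-adjoint operator $\Pp$, whose kernel consists of the constants, and \eqref{vp-AAs} then shows that $0$ is an eigenvalue of $\AAso$ with eigenvector $(1, 0)$. To prove it is algebraically simple I would try to solve $\AAso V = (1, 0)$ with $V = (v_1, v_2) \in H^2_\per \times H^1_\per$: the second row forces $v_2 = \wp$, and the first row yields $\Pp v_1 = i \bp$, whose Fredholm solvability condition $\int_\T \bp = 0$ fails because $\bp = \wp \ap \geq 0$ is not identically zero. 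Since $\AAso$ has compact resolvent its spectrum is discrete, and by (i) the finitely many eigenvalues $\l \neq 0$ of $\AAso$ with $\Re \l \in [-3,3]$ and $\Im \l \in [-1,0]$ all have strictly negative imaginary part, which lets me pick $\ggg_2 \in (0, 1)$ smaller than every such $\abs{\Im \l}$. I would then use that $\s \mapsto \AAs$ is a polynomial (hence analytic) family with values in $\mathscr{L}(H^2_\per \times H^1_\per, H^1_\per \times L^2_\per)$, together with Riesz projections along a small circle enclosing $0$ and along the boundary of $[-3,3] \times (-\ggg_2, 1]$ minus that disk; a uniform Neumann-series bound on both contours for $\s$ in a sufficiently small ball $B(\rr)$ makes these projections analytic in $\s$, hence of constant rank $1$ and $0$ respectively, yielding the unique algebraically simple eigenvalue $\ls$ inside the small disk and no other eigenvalue in the outer region.

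For (iii), the plan is a compactness argument anchored on the following claim: for $\s \in 2\pi \T \setminus \set 0$, the operator $\AAs$ has no real eigenvalue. Indeed, if $\l \in \R$ and $(\Psp - i \l \bp - \l^2 \wp) u = 0$ with $u \neq 0$, taking the imaginary part of the $L^2_\per$ pairing with $u$ and using that $\innp{\Psp u}{u}_{L^2_\per}$ is real gives $\l \int_\T \bp \abs u^2 = 0$. If $\l \neq 0$ then $u$ vanishes on the non-empty open set $\set{\ap > 0}$ while still solving an elliptic equation, and unique continuation forces $u \equiv 0$. If $\l = 0$ then $\innp{\Psp u}{u}_{L^2_\per} = \int_\T \innp{\Gp(y) (\nabla + i\s) u(y)}{(\nabla + i\s) u(y)} \, dy = 0$, so uniform ellipticity gives $(\nabla + i\s) u = 0$, i.e.\ $u(x) = c e^{-i \s \cdot x}$; $\Z^d$-periodicity then forces $\s \in 2\pi \Z^d$, contradicting $\s \in 2\pi\T \setminus \set 0$. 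The uniform gap $\ggg_0$ now follows by contradiction: a sequence $(\s_n, \l_n)$ with $\s_n \in 2\pi \T \setminus B(\rr)$, $\l_n \in \Sp(\Ac_{\s_n})$, $\abs{\Re \l_n} \leq 3$ and $\Im \l_n \to 0$ could be extracted to converge to some $\s_\infty \in 2\pi \T \setminus B(\rr)$ and $\l_\infty \in \R$ with $\abs{\l_\infty} \leq 3$, and the analytic dependence of $\AAs$ on $\s$ together with a standard Neumann-series argument would force $\l_\infty$ to be a real eigenvalue of $\Ac_{\s_\infty}$, contradicting the claim.

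The main obstacle is the exclusion of real eigenvalues for $\s \neq 0$ in (iii): it relies on a unique continuation argument for a second-order elliptic equation and on an upper semi-continuity property of the spectrum for the analytic family $\s \mapsto \AAs$. The remaining ingredients reduce to the Fredholm alternative at $\s = 0$ (which in turn uses the effectiveness of the damping, $\int_\T \bp > 0$) and standard Kato-type analytic perturbation theory.
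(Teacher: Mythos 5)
Your overall strategy coincides with the paper's: exclude real eigenvalues via unique continuation, establish that $0$ is an algebraically simple eigenvalue of $\AAso$, propagate this to small $\s$ by Kato's analytic perturbation theory, and obtain the uniform gaps $\ggg_2$ and $\ggg_0$ by compactness. Your alternative argument for algebraic simplicity (attempting to solve $\AAso V = (1,0)$ and invoking the Fredholm solvability condition $\int_\T \bp = 0$) is a clean and equivalent variant of the paper's check that $\ker(\AAso^2) \subset \ker(\AAso)$; both hinge on the effectiveness of the damping, $\int_\T \bp > 0$.

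There is, however, a small but genuine gap in your part (ii). You write that ``by (i) the finitely many eigenvalues $\l \neq 0$ of $\AAso$ with $\Re \l \in [-3,3]$ and $\Im \l \in [-1,0]$ all have strictly negative imaginary part.'' But (i) only gives $\Im \l \leq 0$; it does not exclude nonzero \emph{real} eigenvalues, which is precisely what you need here to pick $\ggg_2 > 0$. The exclusion of nonzero real eigenvalues requires the unique continuation argument, which you do carry out, but only in part (iii) and only stated for $\s \in 2\pi\T \setminus \set 0$. In fact your computation there (taking the imaginary part of $\innp{(\Psp - i\l\bp - \l^2\wp)u}{u}$ to get $\l \int_\T \bp \abs u^2 = 0$, deducing $\bp u = 0$, and invoking unique continuation) applies verbatim at $\s = 0$ when $\l \neq 0$; you should invoke it there as well. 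The paper avoids this issue by proving, in a single first step valid for all $\s \in \overline{2\pi\T}$, both (i) and the statement that $0$ is the only possible real eigenvalue, and then using this combined fact throughout. With that adjustment your plan is correct and complete.
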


Without loss of generality we can assume that the constant $\g > 0$ used in the definition of $\G_\m$ (see \eqref{def-Gamma}) is smaller than $\ggg_0$.

\begin{proof}
\stepp Let $\s \in \bar{2\pi \T}$, $\l \in \Sp(\AAs)$ and let $U = (u,v) \in H^2_\per \times H^1_\per$ be a corresponding eigenvector. By \eqref{vp-AAs} we have 
\begin{equation} \label{innp-vp}
\innp{\big(\Psp - i \l \bp - \l^2 \wp \big) u}{u}_{L^2_\per} = 0.
\end{equation}
Taking the real and imaginary parts gives  
\begin{equation} \label{innp-vp-real}
\innp{\Psp u}{u} + \Im(\l) \innp{\bp u}{u} + \big( \Im(\l)^2 - \Re(\l)^2 \big) \innp{\wp u}{u} = 0
\end{equation}
and
\begin{equation} \label{innp-vp-im}
- \Re(\l) \innp{\bp u}{u} - 2 \Re(\l) \Im(\l) \innp{\wp u} u = 0.
\end{equation}
Assume that $\Re(\l) \neq 0$ and $\Im(\l) \geq 0$. By \eqref{innp-vp-im} we have $\bp u = 0$, which implies in particular that $\Psp u - \l^2 \wp u = 0$. Since $\bp$ is not identically zero, this also implies that $u$ vanishes on an open subset of $\R^d$. Thus $\tilde u : x \mapsto e^{ix\cdot \s} u(x)$ vanishes on an open subset of $\R^d$ and is a solution of $\Pp \tilde u - \l^2 \wp \tilde u = 0$. 
By unique continuation we have $\tilde u = 0$ and hence $u = 0$. Then $v = 0$ and $U = 0$, which gives a contradiction. If $\Re(\l) = 0$ and $\Im(\l) > 0$ then all the terms in \eqref{innp-vp-real} are non-negative. Again, we have $\bp u = 0$ and we get a contradiction. This proves the first statement and the fact that 0 is the only possible real eigenvalue.

\stepp Now assume that $\l = 0$, so that $\AAs U = 0$. By \eqref{vp-AAs} we have $v = 0$ and 
\[
\innp{\Gp(x) (\nabla + i\s) u}{(\nabla + i\s) u}_{L^2_\per} = \innp{\Psp u}{u}_{L^2_\per} = 0,
\]
so $(\nabla + i \s) u = 0$. Since $u$ is periodic and non-zero, this is only possible if $\s = 0$ and $u$ is constant. Conversely, if $u$ is constant we indeed have $U = (u,0) \in H^2_\per \times H^1_\per$ and $\AAs U = 0$. This proves that 0 is an eigenvalue of $\AAs$ if and only if $\s = 0$, and that 0 is a geometrically simple eigenvalue of $\AAso$.
Since $\AAso$ is not selfadjoint, it may have Jordan blocks, so we also have to prove that $\ker (\AAso^2) \subset \ker(\AAso)$. Let $U = (u,v) \in \Dom (\AAso^2)$ be such that $\AAso^2 U = 0$. Since $\AAso U \in \ker(\AAso)$ there exists $\a \in \C$ such that $\AAso U = (\a,0)$, which gives 
\[
\begin{cases}
\wp\inv v = \a,\\
\Pp u - i \ap v = 0.
\end{cases}
\]
Then, since $u$ is periodic, we have 
\[
0 = \int_{\T} \Pp u = i \a \int_{\T} \bp.
\]
This implies that $\a = 0$, and hence $U \in \ker(\AAso)$. Finally, 0 is an algebraically simple eigenvalue of $\AAso$.

\stepp 
The family of operators $(\AAs)_{\s \in \R^d}$ on $\LL_\per$ is analytic of type B in the sense of Kato (see \cite{kato}) with respect to each $\s_j$, $j \in \Ii 1 d$. 
Since 0 is a simple and isolated eigenvalue of $\AAso$, there exist $\rr > 0$ and $\ggg_1 > 0$ such that for $\s \in B(\rr)$ the operator $\AAs$ has a unique eigenvalue $\ls$ in the disk $D(0,\ggg_1)$ of $\C$. Moreover, this eigenvalue is algebraically simple. 
Let $\s \in \bar {B(\rr)}$. There exists $\ggg_{\s} > 0$ and a neighborhood $\Vc_\s$ of $\s$ such that if $s \in \Vc_\s$ and $\l \in \Sp(\AAp^{s}) \setminus \set{\l_{s}}$ with $\Re(\l) \in [-3,3]$ then $\Im(\l) \leq - \ggg_{\s}$. Since $\bar {B(\rr)}$ is compact, we can find $\s_1,\dots,\s_k \in \bar{B(\rr)}$ such that $\bar{B(\rr)} \subset \bigcup_{j=1}^k \Vc_{\s_j}$. Then we set $\ggg_2 = \min \set{\ggg_{\s_j}, 1 \leq j \leq k}$. Choosing $\rr$ and $\ggg_1$ smaller if necessary we have $\ggg_2 > \ggg_1$, which gives the second statement.

\stepp Using the same continuity and compactness argument we can check that there exists $\ggg_0 > 0$ such that for $\s \in \bar {2\pi\T \setminus B(\rr)}$ and $\l \in \Sp(\AAs)$ with $\abs{\Re(\l)} \leq 3$ we have $\Im(\l) \leq - \ggg_0$. This concludes the proof of the proposition.
\end{proof}

For $\s \in B(\rr)$ we set in $\mathscr{L}(H^1_\per \times L^2_\per)$
\[
\Ps = - \frac 1 {2i\pi} \int_{\abs \z = \ggg_1} (\AAs - \z)\inv \, d\z.
\]
It is known (see for instance \cite{kato}) that $\Ps$ is the projection on the line spanned by the eigenfunctions corresponding to the eigenvalue $\ls$ and along the subspace spanned by all the generalized eigenfunctions corresponding to all the other eigenvalues. In particular,
\[
\Ran(\Pso) = \set{ (\a,0) , \a \in \C}.
\]
Moreover it is a holomorphic function of $\s_j$ for all $j \in \Ii 1 d$ and maps $H^1_\per \times L^2_\per$ to $H^{k+1}_\per \times H^k$ for all $k \in \N$. It also extends to a bounded operator on $\LL_\per$.
We denote by $\Phi_0$ the constant function 
\[
\Phi_0 = \begin{pmatrix} 1 \\ 0 \end{pmatrix}.
\]
Choosing $\rr > 0$ smaller if necessary, we can assume that $\Ps \Phi_0 \neq 0$ for all $\s \in B(\rr)$. Then for $\s \in B(\rr)$ we set 
\[
\Phi_\s = \frac {\Ps \Phi_0}{\nr{\Ps \Phi_0}_{\LL_\per}}.
\]
Then $\nr{\Phi_\s}_{\LL_\per} = 1$ and $\AAs \Phi_\s = \ls \Phi_\s$ for all $\s \in B(r)$. By \eqref{vp-AAs}, there exists $\f_\s \in H^2_\per$ such that 
\begin{equation} \label{Phi-phi}
\Phi_\s = \begin{pmatrix} \f_\s \\ \ls \wp \f_\s \end{pmatrix}.
\end{equation}
Moreover $\f_0 \equiv 1$ and $\f_\s$ is a smooth function of $\s$.\\

In the following proposition we show that in $\Ilowp(t) F$ the important contribution is given by $\ls$ for $\s$ small. For $t \geq 0$ and $F \in \Sc \times \Sc$ we set 
\begin{equation} \label{def-tilde-Ilow}
\tIlow(t) F = \begin{pmatrix} \th_1(t) F \\ \th_2(t) F \end{pmatrix} := \frac 1 {(2\pi)^d} \int_{\s \in B(\rr)} e^{-it\ls} e^{ix\cdot \s}   \Ps F_\per^\s \, d\s.
\end{equation}

\begin{proposition} \label{prop-reduc-I0}
There exists $C \geq 0$ such that for $t \geq 0$ and $F \in \Sc \times \Sc$ we have
\[
\nr{\Ilowp(t) F - \tIlow(t) F}_\HH \leq C e^{-\g t} \nr F_\LL.
\]
\end{proposition}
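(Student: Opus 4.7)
My plan is to combine the Floquet--Bloch decomposition with a contour deformation in $z$, exploiting the spectral information of Proposition~\ref{prop-eigenvalues}. First I would apply Floquet--Bloch fiber-wise to the resolvent of $\AAp$ to write
\[
\Ilowp(t) F = \frac{1}{(2\pi)^d} \int_{\s \in 2\pi\T} e^{ix\cdot \s} I(\s) \, d\s, \qquad I(\s) := \frac{1}{2i\pi} \int_{\G_\m} \h(\Re z) \, e^{-itz} (\AAs - z)\inv F_\per^\s \, dz,
\]
and split the $\s$-integration into the two regions $\s \in B(\rr)$ and $\s \in 2\pi\T \setminus B(\rr)$, estimating $I(\s)$ separately in each.

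For $\s \in 2\pi\T \setminus B(\rr)$: by item (iii) of Proposition~\ref{prop-eigenvalues}, $\AAs$ has no spectrum in $\{|\Re z| \leq 3,\, \Im z > -\ggg_0\}$, and since $\h$ is supported in $(-3,3)$, the integrand of $I(\s)$ is holomorphic in $z$ throughout this strip. I would deform the central portion of $\G_\m$ down to a horizontal contour $\G'$ at height $\Im z = -\tilde\ggg_1$ for some $\ggg < \tilde\ggg_1 < \ggg_0$, picking up no residue and gaining a factor $e^{-\tilde\ggg_1 t}$. Uniformity in $\s$ of the resolvent bound on $\G'$ follows from a compactness argument on $\overline{2\pi\T \setminus B(\rr)} \times \G'$ together with continuity of $\s \mapsto (\AAs-z)\inv$.

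For $\s \in B(\rr)$: by Proposition~\ref{prop-eigenvalues}(ii), $\ls$ is a simple isolated eigenvalue of $\AAs$, so I decompose
\[
(\AAs - z)\inv = \frac{\Ps}{\ls - z} + R_\s^{\mathrm{reg}}(z), \qquad R_\s^{\mathrm{reg}}(z) := (I - \Ps)(\AAs - z)\inv,
\]
where $R_\s^{\mathrm{reg}}(z)$ is holomorphic in $z$ throughout $\{|\Re z| \leq 3,\, \Im z > -\ggg_2\}$. Choosing $\tilde\ggg_2 \in (\ggg_1, \ggg_2)$ and deforming the central part of $\G_\m$ down to $\Im z = -\tilde\ggg_2$, the $R_\s^{\mathrm{reg}}$-contribution becomes $O(e^{-\tilde\ggg_2 t})$. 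In the polar part, the contour crosses the simple pole at $z = \ls$ (since $|\ls| \leq \ggg_1 < \tilde\ggg_2$), so by the residue theorem I extract the main term $\h(\Re\ls) e^{-it\ls} \Ps F_\per^\s = e^{-it\ls} \Ps F_\per^\s$ (because $|\ls| < 1$ and $\h \equiv 1$ on $[-1,1]$), plus a deformed integral that is again $O(e^{-\tilde\ggg_2 t})$. The analyticity of $\s \mapsto \ls,\Ps, R_\s^{\mathrm{reg}}$ on the compact $\overline{B(\rr)}$, via Kato's perturbation theory, furnishes the uniform bounds in $\s$.

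Combining, I obtain $I(\s) = \1_{B(\rr)}(\s)\, e^{-it\ls} \Ps F_\per^\s + E(t,\s)$ with $\nr{E(t,\s)}_{\HH_\per} \leq C e^{-\tilde\ggg t} \nr{F_\per^\s}_{\LL_\per}$ uniformly in $\s$, where $\tilde\ggg := \min(\tilde\ggg_1, \tilde\ggg_2) > \ggg$. Since $\nr{\nabla_x (e^{ix\cdot\s} g(x))}_{L^2_\per} \lesssim \nr{g}_{H^1_\per}$ uniformly for $\s \in 2\pi\T$, the Parseval identity of Proposition~\ref{prop-Floquet-Bloch}(iii) applied componentwise to the error then yields
\[
\nr{\Ilowp(t) F - \tIlow(t) F}_\HH \leq C e^{-\tilde\ggg t} \nr F_\LL,
\]
which implies the claim. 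The main obstacle is obtaining uniformity of the spectral quantities in $\s$: on $|\s| < \rr$ this comes from Kato's analytic perturbation theory around the simple eigenvalue $\ls$, and on $|\s| \geq \rr$ from a compactness argument combined with the spectral gap of Proposition~\ref{prop-eigenvalues}(iii). A secondary point is to control the error in the $\HH_\per$ rather than $\LL_\per$ norm; this is supplied by the smoothing mapping properties of the fiber resolvents inherited from Proposition~\ref{prop-Rz}.
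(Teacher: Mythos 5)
Your proof is correct and follows essentially the same route as the paper: Floquet--Bloch decomposition, splitting $\s$ into $B(\rr)$ and its complement, contour deformation using Proposition~\ref{prop-eigenvalues} together with uniform resolvent bounds from compactness, extraction of the residue at $\ls$ for the polar part on $B(\rr)$ (your splitting $(\AAs-z)\inv = \Ps/(\ls-z) + (I-\Ps)(\AAs-z)\inv$ is exactly the paper's splitting of $F_\per^\s$ into $\Ps F_\per^\s$ and $(1-\Ps)F_\per^\s$), and finally Plancherel. One small over-claim: you cannot actually achieve $\tilde\ggg > \g$, since the portion of $\G_\m$ with $2 < \abs{\Re z} < 3$ stays at $\Im z = -\g$ (there $\h$ is non-constant so the integrand is not holomorphic in $z$ and cannot be deformed), contributing exactly $e^{-\g t}$; this is harmless since the statement only asks for $e^{-\g t}$, and the paper's $I_1$-deformation along $\tilde\G$ exploits extra depth only in the central region $\abs{\Re z} \leq 1$.
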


\begin{proof}
Let $F \in \Sc \times \Sc$ and $\m \in (0,1]$. We have 
\begin{equation} \label{expr-I0-sigma}
\Ilowp(t) F  = \frac 1 {2i\pi} \frac 1 {(2\pi)^d}  \int_{z \in \G_\m} \int_{\s \in 2\pi \T} \h(\Re(z)) e^{-itz} e^{ix\cdot \s} (\AAs - z)\inv F_\per^\s \, d\s \, dz.
\end{equation}
We write $\Ilowp(t)F = I_{1}(t)F + I_{2}(t)F + I_{3}(t)F$, where $I_{3}(t)F$ is defined as the right-hand side of \eqref{expr-I0-sigma} but with the integral over $\s \in 2\pi\T$ replaced by an integral over $\s \in 2\pi\T \setminus B(\rr)$. For $I_{1}(t)F$ and $I_{2}(t)F$ the integral is taken over $\s \in B(\rr)$. In $I_{1}(t)F$ (in $I_{2}(t)F$, respectively), the function $F_\per^\s$ is replaced by $\Ps F_\per^\s$ (by $(1-\Ps) F_\per^\s$, respectively). Given $\s \in 2\pi\T$, the integrand in \eqref{expr-I0-sigma} is a meromorphic function of $z$ with $\abs {\Re(z)} < 2$ (since $\h(\Re(z))=1$ in this region), and the poles are the eigenvalues of $\AAs$. Thus we can change the contour $\G_\m$ in this region. By Propositions \ref{prop-eigenvalues} and \ref{prop-Floquet-Bloch} we get 
\begin{align*}
\nr{I_{3}(t) F}_{\HH}^2
& \lesssim \nr{\int_{\s \in 2\pi \T \setminus B(\rr)}e^{ix\cdot \s}  \int_{\Im(z) = -\g}  \h(\Re(z)) e^{-itz} (\AAs - z)\inv F_\per^\s\, dz \, d\s }_{\HH}^2 \, d\s\\
& \lesssim \int_{\s \in 2\pi \T \setminus B(\rr)} \nr{ \int_{\Im(z) = -\g}  \h(\Re(z)) e^{-itz} (\AAs - z)\inv F_\per^\s\, dz }_{\LL_\per}^2\, d\s  \\
& \lesssim e^{- 2 \g t} \int_{\s \in 2\pi \T \setminus B(\rr)}  \nr{  F_\per^\s }_{\LL_\per}^2  \, d\s  \\
& \lesssim e^{-2\g t} \nr{F}_{\LL}^2 . 
\end{align*}
We have used the fact that the resolvent $(\AAs-z)\inv$ is uniformly bounded. This is due to the continuity of this resolvent with respect to $z$ and $\s$, by the compactness of the contour of integration, and the compactness of $\bar{2\pi\T \setminus B(\rr)}$.
We similarly have 
\begin{align*}
\nr{I_{2}(t) F}_{\HH}^2
& \lesssim \nr{\int_{\Im(z) = -\g} \int_{\s \in B(\rr)} \h(\Re(z)) e^{-itz} e^{ix\cdot \s} (\AAs - z)\inv (1-\Ps) F_\per^\s \, d\s \, dz}_{\HH}^2\\
& \lesssim e^{-2t\g} \nr{F}_{\LL}^2.
\end{align*}
Now let $\tilde \vf \in C^\infty(\R,(-\ggg_2,-\g])$ be such that $\tilde \vf(\t) = -\g$ if $\abs \t \geq 2$ and $\tilde \vf(\t) \in (-\ggg_2,-\ggg_1)$ if $\abs \t \leq 1$. We set (see Figure \ref{fig-Gamma-tilde}) 
\[
\tilde \G = \set{\t + i \tilde \vf(\t), \t \in \R}.
\]
Then by the residue theorem we have 
\begin{align*}
I_{1}(t) F
& = \frac 1 {2i\pi} \frac 1 {(2\pi)^d} \int_{\s \in B(\rr)}\int_{z \in \G}  \h(\Re(z)) e^{-itz} e^{ix\cdot \s} (\ls - z)\inv  \Ps F_\per^\s \, d\s \, dz\\
& = \tIlow(t) F + \frac 1 {2i\pi}\frac 1 {(2\pi)^d} \int_{\s \in B(\rr)}\int_{z \in \tilde \G}  \h(\Re(z)) e^{-itz} e^{ix\cdot \s} (\ls - z)\inv  \Ps F_\per^\s \, d\s \, dz.
\end{align*}
We estimate the last term as above, and the proof is complete.
\end{proof}

\begin{figure}[ht]
\begin{center}
\resizebox{0.85\textwidth}{!}{\input{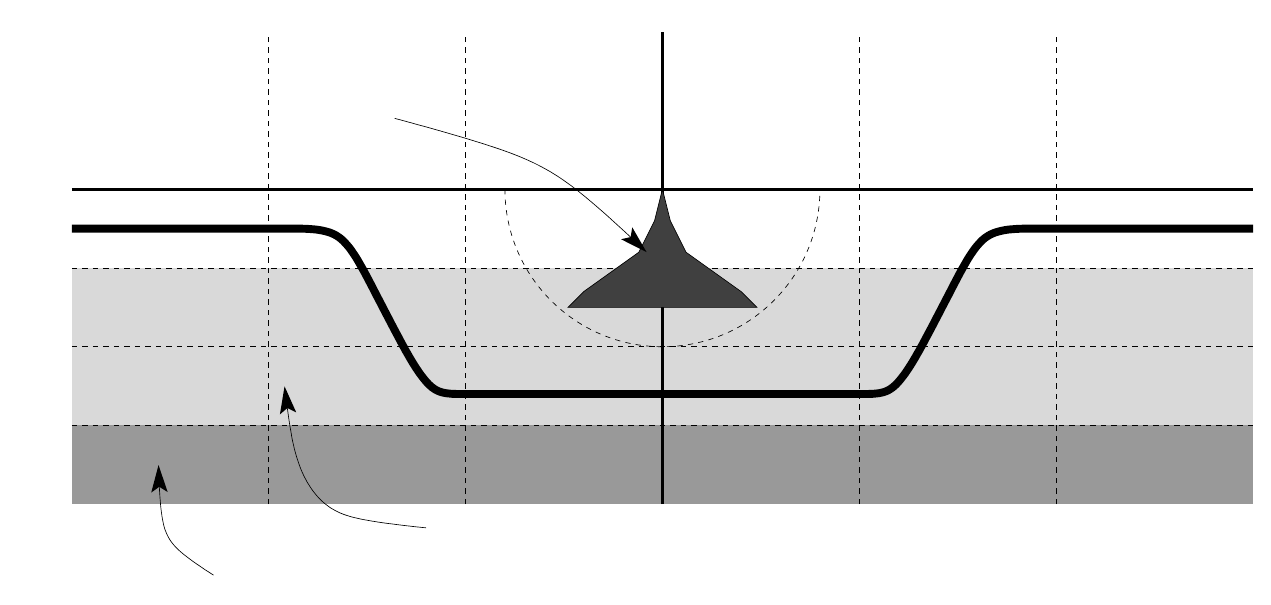_t}}\\[2mm]
\end{center}
\caption{The location of the spectrum of $\AAs$ and the curve $\tilde\Gamma$.}
\label{fig-Gamma-tilde}
\end{figure}

\subsection{Analysis of the first Bloch wave for \texorpdfstring{$\s$}{sigma} small} \label{sec-first-Bloch}

Our purpose is now to estimate $\tIlow(t) F$. For this we describe more precisely the properties of the eigenvalue $\ls$ and the corresponding eigenvector $\Phi_\s$ and eigenprojection $\Ps$ for $\s$ small. We recall that the symmetric matrix $\Gm$ was defined in \eqref{def-G-star}.

\begin{proposition} \label{prop-DL-ls}
The symmetric matrix $\Gm$ is positive and when $\s$ goes to 0 we have
\begin{equation} \label{dec-lambda-s}
\ls = - \frac i {\bm} \innp{\Gm \s}{\s} + \bigo {\abs \s} 0 \big( \abs \s^3 \big).
\end{equation}
Moreover
\begin{equation} \label{dec-phi-s}
\f_\s = \f_0 + i \p_\s +  \bigo {\abs \s} 0 \big( \abs \s^2 \big),
\end{equation}
where $\p_\s \in L^2_\per \cap L^\infty$ is a linear function of $\s$ which satisfies \eqref{def-psi}.
\end{proposition}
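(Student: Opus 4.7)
The plan is to use analytic perturbation theory à la Kato. Since the family $(\AAs)_{\s \in \R^d}$ is analytic of type B (as already exploited in the proof of Proposition~\ref{prop-eigenvalues}) and $0$ is an algebraically simple, isolated eigenvalue of $\AAso$, both $\ls$ and a suitable analytic normalization of the eigenfunction $\fs$ (with $\fo \equiv 1$) admit convergent expansions in $\s$ near the origin. I would first expand explicitly
\[
\Psp = \Pp + L_1(\s) + L_2(\s),
\]
where $L_1(\s) u = -i \divg(\Gp(x) \s \, u) - i \s\trsp \Gp(x) \nabla u$ and $L_2(\s) u = \innp{\Gp(x) \s}{\s} u$ are homogeneous of degree $1$ and $2$ in $\s$. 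Then I would substitute the expansions $\ls = a_1(\s) + a_2(\s) + O(|\s|^3)$ and $\fs = 1 + b_1(\s) + O(|\s|^2)$ into the eigenvalue equation $(\Psp - i \ls \bp - \ls^2 \wp) \fs = 0$ coming from \eqref{vp-AAs}, and identify the contributions of each degree in $\s$.

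At first order, the equation reads $\Pp b_1 + L_1(\s)(1) - i a_1(\s) \bp = 0$. Integrating over the period cell $\T$ and using periodicity kills the $\Pp b_1$ and $L_1(\s)(1)$ terms, leaving $a_1(\s) \bm = 0$, hence $a_1 \equiv 0$. The residual relation rewrites as $\divg\bigl(\Gp(\s + \nabla(b_1(\s)/i))\bigr) = 0$, which by comparison with \eqref{def-psi} identifies $b_1(\s) = i \p_\s$, establishing \eqref{dec-phi-s}. Linearity of $\p_\s$ in $\s$ follows from uniqueness (up to an additive constant), and the $L^\infty$ membership comes from standard elliptic regularity applied to the periodic problem \eqref{def-psi}.

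At second order, since $a_1 \equiv 0$ the contribution of $\ls^2 \wp$ is already $O(|\s|^4)$, so the equation reduces to $\Pp b_2 + L_1(\s) b_1 + L_2(\s)(1) - i a_2(\s) \bp = 0$. Integrating over $\T$, exploiting periodicity and the symmetry of $\Gp$, I would arrive at
\[
i a_2(\s) \bm = \int_\T \innp{\Gp(x) \s}{\s + \nabla \p_\s(x)} \, dx.
\]
To match this with the homogenized form, I would expand \eqref{def-G-star} and observe that the cross term vanishes: an integration by parts combined with the corrector equation \eqref{def-psi} gives $\int_\T \innp{\Gp(\s+\nabla\p_\s)}{\nabla\p_\s} \, dx = -\int_\T \p_\s \divg(\Gp(\s+\nabla\p_\s)) \, dx = 0$, so that $\innp{\Gm\s}{\s} = \int_\T \innp{\Gp\s}{\s + \nabla\p_\s} \, dx$. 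Comparing the two identities yields $a_2(\s) = -i \innp{\Gm\s}{\s}/\bm$, which is \eqref{dec-lambda-s}. Positivity of $\Gm$ is then immediate from \eqref{def-G-star} and \eqref{hyp-GbC}: $\innp{\Gm\x}{\x} \geq G_{\min} \int_\T |\x + \nabla\p_\x|^2 \, dx \geq 0$, with equality forcing $\p_\x$ to be affine, hence constant by periodicity, hence $\x = 0$.

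The main obstacle is the second-order identification: the integral $\int_\T \innp{\Gp\s}{\s + \nabla\p_\s} \, dx$ that drops naturally out of perturbation theory must be reshaped into the homogenized quadratic form $\innp{\Gm\s}{\s}$, and this reshaping relies crucially on both the symmetry of $\Gp$ and an integration by parts anchored in the corrector equation \eqref{def-psi}.
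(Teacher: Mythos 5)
Your proof is correct and follows essentially the same strategy as the paper: analytic perturbation theory à la Kato for the simple isolated eigenvalue at $\s = 0$, followed by an order-by-order expansion of the eigenvalue equation. The one technical difference is in the choice of test function at second order. The paper takes the inner product of \eqref{eq-vp-phi-s} with $\f_\s$ itself, which produces the Rayleigh-quotient identity \eqref{eq-ls}; expanding that identity to second order yields the \emph{symmetric} form $\innp{\Gp(\nabla\p_\s+\s)}{(\nabla\p_\s+\s)}$ directly, matching the definition \eqref{def-G-star} of $\Gm$ with no further work. You instead pair the eigenvalue equation with the constant function $\f_0 \equiv 1$ (i.e., integrate over $\T$), which lands on the \emph{asymmetric} form $\int_\T\innp{\Gp\s}{\s+\nabla\p_\s}\,dx$ and therefore requires the extra integration-by-parts step using the corrector equation \eqref{def-psi} to show the cross term $\int_\T\innp{\Gp(\s+\nabla\p_\s)}{\nabla\p_\s}\,dx$ vanishes, so that the asymmetric form agrees with $\innp{\Gm\s}{\s}$. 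You identify this reshaping as the main obstacle and handle it correctly; the paper simply sidesteps it by choosing the better test function. Both routes are valid, and your treatment of the positivity of $\Gm$ and of the identification of $\p_\s$ via the first-order residual equation is the same as the paper's.
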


\begin{proof}
We first recall that $\ls$ and $\f_\s$ are smooth functions of $\s$, respectively in $\C$ and in $H^k_\per$ for any $k \in \N$. Moreover $\l_0 = 0$ and $\f_0 \equiv 1$. For $\s \in B(\rr)$ we have 
\begin{equation} \label{eq-vp-phi-s}
-(\divg +i \s\trsp)  \Gp(x)  (\nabla + i \s) \f_\s - i \ls \bp \f_\s - \ls^2 \wp \f_\s = 0.
\end{equation}
Taking the inner product with $\f_\s$ gives
\begin{equation} \label{eq-ls}
\innp{\Gp(x) (\nabla + i\s) \f_\s}{(\nabla + i\s) \f_\s} - i \ls \innp{\bp \f_\s}{\f_\s} - \ls^2 \innp{\wp \f_\s}{\f_\s} = 0.
\end{equation}
We take the derivatives of \eqref{eq-ls} with respect to $\s_j$, $j\in \Ii 1 d$, at point $\s = 0$. Since $\innp{\bp \f_0}{\f_0} > 0$ we see that the first derivatives of $\ls$ vanish. Thus, by Taylor expansion, there exists a matrix $Q$ such that 
\[
\ls =  - \frac i {\bm} \innp{Q\s}{\s} + O \big( \abs \s^3 \big).
\]
Since $\s \mapsto \f_\s$ is smooth, we can define $\p_\s \in L^2_\per$ so that \eqref{dec-phi-s} holds. This defines a linear function of $\s$. Taking the linear part in \eqref{eq-vp-phi-s} gives 
\[
- i \divg \Gp(x) (\nabla \p_\s + \s) = 0.
\]
This proves in particular that $\p_\s$ is a solution of \eqref{def-psi}.
Similarly, \eqref{eq-ls} gives 
\[
\innp{\Gp(x) (\nabla \p_\s + \s \f_0)}{(\nabla \p_\s + \s \f_0)} - \frac 1 {\bm} \innp{Q\s}{\s} \innp{\bp \f_0}{\f_0} = O \big( \abs \s^3 \big),
\]
and we deduce
\[
\innp{Q\s}{\s}  = \innp{\Gp(x) (\nabla \p_\s + \s)}{(\nabla \p_\s + \s)} = \innp{\Gm \s}{\s}.
\]
Finally, since $\p_\s$ is periodic its gradient cannot be the constant and non-zero function $-\s$. Therefore $\nabla \p_\s + \s \neq 0$ and hence $\innp{\Gm \s}\s > 0$. This concludes the proof.
\end{proof}

\begin{corollary} \label{cor-estim-ls}
There exist $\L_2 > \L_1 > 0$ such that for $\s \in B(\rr)$
\[
\L_1 \abs \s^2 \leq \Re(-i\ls) \leq \L_2 \abs \s^2,
\]
and
\[
\L_1 \abs \s^2 \leq \frac {\innp{\Gm \s}{\s}}{\bm} \leq \L_2 \abs \s^2.
\]
\end{corollary}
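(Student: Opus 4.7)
The plan is to read off both pairs of inequalities directly from Proposition \ref{prop-DL-ls}. Since $\Gm$ is a symmetric positive matrix, it is uniformly positive: denoting by $\m_{\min}, \m_{\max} > 0$ its smallest and largest eigenvalues, we have
\[
\m_{\min} \abs \s^2 \leq \innp{\Gm \s}{\s} \leq \m_{\max} \abs \s^2
\]
for every $\s \in \R^d$. Dividing by $\bm > 0$ (which is strictly positive by \eqref{def-bm} since $\wp, \ap$ are non-negative and $\ap$ is not identically zero) yields the second pair of inequalities with constants $\m_{\min}/\bm$ and $\m_{\max}/\bm$.

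For the first pair, I would take the real part of the expansion \eqref{dec-lambda-s} of $\ls$, which gives
\[
\Re(-i\ls) = \frac {\innp{\Gm \s}{\s}} {\bm} + \bigo {\abs \s} 0 \big(\abs \s^3\big).
\]
Combining with the bound $\innp{\Gm \s}{\s}/\bm \geq (\m_{\min}/\bm)\abs\s^2$, one can absorb the $O(\abs\s^3)$ remainder by shrinking $\rr > 0$ further if necessary so that the cubic error is at most $\frac{\m_{\min}}{2\bm} \abs \s^2$ on $B(\rr)$. This is harmless because Proposition \ref{prop-eigenvalues} remains valid with any smaller radius, and because the later statements of the paper only use $\rr$ through its existence.

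With such a choice of $\rr$, one obtains
\[
\frac {\m_{\min}}{2\bm} \abs \s^2 \leq \Re(-i\ls) \leq \frac{3\m_{\max}}{2\bm} \abs \s^2,
\]
so the corollary holds with, for example, $\L_1 = \m_{\min}/(2\bm)$ and $\L_2 = 3\m_{\max}/(2\bm)$; these same constants control both $\Re(-i\ls)$ and $\innp{\Gm\s}\s/\bm$. There is no substantial obstacle beyond bookkeeping: the only point requiring attention is to pick $\L_1, \L_2$ in a way that simultaneously serves both inequalities, which is achieved by taking $\L_1$ slightly smaller than $\m_{\min}/\bm$ and $\L_2$ slightly larger than $\m_{\max}/\bm$.
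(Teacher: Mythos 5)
Your overall approach is the intended one: the second pair of inequalities comes from the uniform positive definiteness of the symmetric matrix $\Gm$ together with $\bm>0$, and the first pair is read off the Taylor expansion \eqref{dec-lambda-s}, shrinking $\rr$ a little more to absorb the $O(|\s|^3)$ remainder. The paper states the corollary without proof precisely because this is the whole argument, and your bookkeeping of the constants (picking $\L_1$ slightly below $\m_{\min}/\bm$, $\L_2$ slightly above $\m_{\max}/\bm$ so both pairs hold simultaneously) is correct.

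There is, however, a sign slip in the step where you take ``the real part of the expansion.'' From \eqref{dec-lambda-s},
\[
-i\ls = (-i)\left(-\frac i{\bm}\innp{\Gm\s}{\s}\right) + O\big(|\s|^3\big) = -\frac 1{\bm}\innp{\Gm\s}{\s} + O\big(|\s|^3\big),
\]
because $(-i)(-i)=i^2=-1$. So $\Re(-i\ls)=\Im(\ls)=-\innp{\Gm\s}{\s}/\bm+O(|\s|^3)$ is \emph{negative} for small nonzero $\s$, not positive as in your displayed identity. This is consistent with the eigenvalue $\ls$ lying below the real axis and with the paper's later use of $e^{-it\ls}=e^{-\frac t{\bm}(\innp{\Gm\s}{\s}+O(|\s|^3))}$ in the proof of Proposition \ref{prop-decay-uv}. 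The statement of Corollary \ref{cor-estim-ls} as printed evidently carries the same sign typo: the quantity comparable to $|\s|^2$ is $-\Re(-i\ls)=-\Im(\ls)=\Re(i\ls)$, and that is what the rest of the paper actually needs (e.g.\ in the resolvent bound in the proof of Proposition \ref{prop-res-Rp}). Once you correct the sign to $-\Im(\ls)=\innp{\Gm\s}{\s}/\bm+O(|\s|^3)$, your absorption of the cubic error and your choice of $\L_1,\L_2$ go through unchanged.
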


Now we describe more precisely the projection $\Ps$.

\begin{proposition} \label{prop-Psi}
There exists $\Psi_\s \in \LL_\per$ which depends smoothly on $\s \in B(\rr)$ and such that for $\s \in B(\rr)$ and $F \in \LL_\per$ we have 
\[
\Ps F = \innp{F}{\Psi_\s}_{\LL_\per} \Phi_\s.
\]
Moreover
\begin{equation} \label{def-Psi0}
\Psi_0 = \frac 1 {\bm} \begin{pmatrix} \bp \\ i \end{pmatrix}.
\end{equation}
\end{proposition}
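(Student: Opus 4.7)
Since the eigenvalue $\ls$ is algebraically simple by Proposition \ref{prop-eigenvalues}, the spectral projector $\Ps$ has rank one, with range $\C \Phi_\s$. Consequently, for each $F\in H^1_\per\times L^2_\per$ there exists a scalar $\ell_\s(F)$ such that $\Ps F = \ell_\s(F)\Phi_\s$, and the map $F\mapsto \ell_\s(F)$ is a continuous linear form. Since $\Ps$ extends to a bounded operator on $\LL_\per$, so does $\ell_\s$, and by the Riesz representation theorem there is a unique $\Psi_\s\in\LL_\per$ with $\ell_\s(F)=\innp{F}{\Psi_\s}_{\LL_\per}$. The smoothness of $\s\mapsto\Psi_\s$ follows from the smoothness of $\s\mapsto\Ps\in\Lc(\LL_\per)$ (itself a consequence of its contour-integral definition and of the analyticity of $\s\mapsto\AAs$), applied to any fixed $F$ with $\innp{\Phi_0}{F}\neq 0$, combined with the smoothness of $\s\mapsto\Phi_\s$.

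Next I identify $\Psi_\s$ as an eigenvector of the adjoint operator. Taking adjoints in $\Ps F = \innp{F}{\Psi_\s}\Phi_\s$ shows that $\Ps^*G = \innp{G}{\Phi_\s}\Psi_\s$, and since $\Ps^*$ is the spectral projector of $\AAss$ on the eigenspace associated with $\bar\ls$, the element $\Psi_\s$ is an eigenvector of $\AAss$ for the eigenvalue $\bar\ls$. The normalization comes from the identity $\Ps\Phi_\s = \Phi_\s$, which gives
\[
\innp{\Phi_\s}{\Psi_\s}_{\LL_\per} = 1.
\]
This pair of properties (eigenvector of $\AAss$ for $\bar\ls$ plus normalization with $\Phi_\s$) in fact characterizes $\Psi_\s$ since the eigenspace of $\AAss$ for $\bar\ls$ is also one-dimensional.

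It remains to compute $\Psi_0$ explicitly. A direct calculation using the selfadjointness of $\Pp$ and the fact that $\ap$ and $\wp$ are real shows that for $V=(v_1,v_2)\in\LL_\per$,
\[
\AAso^* V = \begin{pmatrix} 0 & \Pp \\ \wp\inv & i\ap \end{pmatrix} V = \begin{pmatrix} \Pp v_2 \\ \wp\inv v_1 + i\ap v_2 \end{pmatrix}.
\]
The equation $\AAso^*V=0$ forces $\Pp v_2=0$; as $v_2\in L^2_\per$, this means $v_2$ is a constant $c\in\C$, and then $v_1 = -i c\,\wp\ap = -ic\,\bp$. Taking the candidate $\Psi_0 := \frac 1{\bm}\begin{pmatrix}\bp \\ i\end{pmatrix}$ (which corresponds to $c=i/\bm$), we verify
\[
\innp{\Phi_0}{\Psi_0}_{\LL_\per} = \innp{1}{\tfrac{\bp}{\bm}}_{L^2_\per} = \frac{1}{\bm}\int_\T \bp(x)\,dx = 1,
\]
using that $\bp = \wp\ap$ and the definition \eqref{def-bm} of $\bm$. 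Thus this $\Psi_0$ is the correctly normalized eigenvector of $\AAso^*$ in its one-dimensional kernel, which completes the proof.

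The only delicate point is the conjugation bookkeeping in identifying $\Psi_\s$ with an eigenvector of $\AAss$ and in computing $\AAso^*$; everything else reduces to the algebraic simplicity of $\ls$ and elementary linear algebra.
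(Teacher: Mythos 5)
Your proof is correct and follows essentially the same path as the paper's: both use the rank-one structure of $\Ps$ together with the Riesz representation theorem to define $\Psi_\s$, and both compute $\Psi_0$ by identifying it with the normalized element of $\ker(\AAso^*)$ and writing down $\AAso^* = \begin{pmatrix} 0 & \Pp \\ \wp\inv & i\ap \end{pmatrix}$. The one small difference is cosmetic: you first establish the general fact that $\Psi_\s$ is an eigenvector of $\AAss$ for $\bar\ls$ via $\Ps^* = $ spectral projector of $\AAss$, then specialize to $\s=0$; the paper works only at $\s=0$, deducing $\AAsso\Psi_0=0$ from the commutation $\Pso\AAso F = \AAso\Pso F = 0$ and the defining identity $\Pso\AAso F = \innp{\AAso F}{\Psi_0}\Phi_0$. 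Your phrasing of the smoothness step (``applied to any fixed $F$... combined with the smoothness of $\Phi_\s$'') is a bit loose — the cleanest justification is simply $\Psi_\s = \Ps^*\Phi_\s$ (using $\nr{\Phi_\s}=1$), which is visibly smooth since both factors are — but the argument is sound and matches the paper's in spirit, which likewise invokes the smooth dependence of the linear form $F\mapsto\innp{\Ps F}{\Phi_\s}$.
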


\begin{proof}
Let $\s \in B(\rr)$. Since $\Ps$ is the projection on the line spanned by $\Phi_\s$ we have, for all $F \in \LL_\per$,
\[
\Ps F = \innp{\Ps F}{\Phi_\s} \Phi_\s.
\]
Since $F \mapsto \innp{\Ps F}{\Phi_\s}_{\LL_\per}$ is a continuous linear form on $\LL_\per$ which depends smoothly on $\s$, the first statement follows from the Riesz representation theorem. 

The adjoint of $\AAso$ in $\LL_\per$ is 
\[
\AAsso  =
\begin{pmatrix}
0 & \Pp \\
\wp\inv & i\ap
\end{pmatrix}.
\]
For $F \in H^2_\per \times H^1_\per$ we have 
\[
\innp{\AAso F}{\Psi_0}_{\LL_\per} \Phi_0 = \Pso \AAso F = \AAso \Pso F = 0.
\]
This proves that $\Psi_0 \in \Dom(\AAsso)$ and $\AAsso \Psi_0 = 0$. We can check by direct computation that this implies that there exists $\a \in \C$ such that $\Psi_0 = \a \begin{pmatrix} \bp \\ i \end{pmatrix}$. Since
\[
1 = \innp{\Phi_0}{\Psi_0} = \a \int_{\T} \bp,
\]
we have $\a = \bm\inv$, and the proof is complete.
\end{proof}

\begin{remark} \label{rem-reg-Psi}
Since $F \mapsto \innp{\Ps F}{\Phi_\s}_{\LL_\per}$ is also a smooth function in $\mathscr{L}(L^2_\per \times H\inv_\per , \LL_\per)$ we can also see $\Psi_\s$ as a smooth function of $\s$ in $L^2_\per \times H^1_\per$.
\end{remark}

\subsection{Comparison between the periodic wave equation and the heat equation}

In this paragraph we prove Theorem \ref{th-heat}. Given $F = (u_0,i w u_1) \in \Sc \times \Sc$, we denote by $\um(t)$ the solution of the heat problem \eqref{heat-Q}-\eqref{heat-Q-CI}. Our purpose is to compare the solution $\upp(t)$ of \eqref{wave-per} with $\um(t)$. We set 
\[
v_0 = \frac {\bp  u_0 + \wp u_1} {\bm},
\]
and we denote by $\hat v_0$ the Fourier transform of $v_0$. We first recall that the decay of $\um(t)$ is also governed by the contribution of low frequencies. 

\begin{lemma} \label{lem-uQ}
Let $\rr > 0$ be given by Proposition \ref{prop-eigenvalues}. Then there exists $\tilde \g > 0$ such that for $t \geq 1$ we have in $L^2(\R^d)$
\begin{eqnarray*}
\um(t) &=& \frac 1 {(2\pi)^d} \int_{\x \in B(\rr)} e^{i x\cdot \x} e^{-\frac {t}{\bm} \innp{\Gm \x}{\x}} \widehat{v_0}(\x)\, d\x + \Oc\big( e^{-\tilde \g t} \big) \nr{v_0}_{L^2(\R^d)},\\
\nabla \um(t) &=& \frac 1 {(2\pi)^d} \int_{\x \in B(\rr)} i \x e^{i x\cdot \x} e^{-\frac {t}{\bm} \innp{\Gm \x}{\x}} \widehat{v_0}(\x)\, d\x + \Oc\big( e^{-\tilde \g t} \big) \nr{v_0}_{L^2(\R^d)},\\
i \wp \partial_t \um(t) &=& -\frac {i\wp} {(2\pi)^d} \int_{\x \in B(\rr)} e^{i x\cdot \x} \frac {\innp{\Gm \x}{\x}}{\bm} e^{-\frac {t}{\bm} \innp{\Gm \x}{\x}} \widehat{v_0}(\x)\, d\x + \Oc\big( e^{- \tilde \g t} \big) \nr{v_0}_{L^2(\R^d)}.
\end{eqnarray*}
\end{lemma}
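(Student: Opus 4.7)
The plan is to exploit that $\Gm$ is a constant positive-definite matrix, so that the heat problem \eqref{heat-Q}--\eqref{heat-Q-CI} has an explicit Fourier representation. I would first Fourier-transform in $x$ to obtain $\bm \partial_t \widehat{\um}(t,\x) + \innp{\Gm\x}{\x}\widehat{\um}(t,\x) = 0$ with $\widehat{\um}(0,\x) = \widehat{v_0}(\x)$, hence $\widehat{\um}(t,\x) = e^{-\frac{t}{\bm}\innp{\Gm\x}{\x}}\widehat{v_0}(\x)$. Fourier inversion then yields
\[
\um(t,x) = \frac{1}{(2\pi)^d}\int_{\R^d} e^{ix\cdot\x}\, e^{-\frac{t}{\bm}\innp{\Gm\x}{\x}}\widehat{v_0}(\x)\,d\x,
\]
and differentiating under the integral sign gives the analogous representations for $\nabla\um$ (with an extra factor $i\x$) and $\partial_t\um$ (with an extra factor $-\innp{\Gm\x}{\x}/\bm$). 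The three identities of the lemma then reduce to bounding, in $L^2(\R^d)$, the portion of each integral over $\R^d\setminus B(\rr)$ by $Ce^{-\tilde\g t}\nr{v_0}_{L^2(\R^d)}$.

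For this tail estimate I would invoke Corollary \ref{cor-estim-ls}: there exists $c>0$ such that $\innp{\Gm\x}{\x}/\bm \geq c\abs{\x}^2$. For $t\geq 1$ and $\abs{\x}\geq\rr$, I split
\[
\frac{t}{\bm}\innp{\Gm\x}{\x} \geq c\abs{\x}^2 + c(t-1)\rr^2,
\]
so that for any $k \in \set{0,1,2}$ one has, uniformly in $\x\in\R^d$,
\[
\abs{\x}^k e^{-\frac{t}{\bm}\innp{\Gm\x}{\x}} \leq \abs{\x}^k e^{-c\abs{\x}^2}\, e^{-c(t-1)\rr^2} \leq C_k e^{-\tilde\g t},
\]
where $\tilde\g := c\rr^2$ and $C_k := e^{c\rr^2}\sup_{\x\in\R^d}\abs{\x}^k e^{-c\abs{\x}^2} < \infty$. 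Combining this pointwise bound with the Plancherel identity gives
\[
\nr{\mathbf{1}_{\abs{\x}\geq\rr}\,\abs{\x}^k e^{-\frac{t}{\bm}\innp{\Gm\x}{\x}}\widehat{v_0}(\x)}_{L^2_\x(\R^d)} \leq C_k e^{-\tilde\g t}\nr{v_0}_{L^2(\R^d)},
\]
which is precisely the remainder estimate needed for $\um$ ($k=0$), $\nabla\um$ ($k=1$) and $\partial_t\um$ ($k=2$). For the third identity I would additionally use the $L^\infty$-bound $\wp \leq w_{\max}$ to transfer the estimate from $\partial_t\um$ to $i\wp\partial_t\um$.

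There is no real obstacle here: the argument is a short, direct Fourier computation for a constant-coefficient heat equation on $\R^d$. The only substantive input is the positive-definiteness of the homogenized matrix $\Gm$ already established in Proposition \ref{prop-DL-ls}, and the restriction $t \geq 1$ in the statement is used only to make the splitting of the exponent (and hence the constant $C_k$) independent of $t$.
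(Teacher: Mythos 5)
Your proof is correct and follows essentially the same route as the paper: both use the explicit Fourier representation of the constant-coefficient heat semigroup, the coercivity bound $\innp{\Gm\x}{\x}/\bm \geq \L_1 \abs\x^2$ from Corollary \ref{cor-estim-ls}, and Plancherel to transfer the tail estimate over $\R^d\setminus B(\rr)$ to $L^2$. The only cosmetic difference is that you bound the multiplier $\abs\x^k e^{-t\innp{\Gm\x}{\x}/\bm}$ pointwise in $\x$ before applying Plancherel, whereas the paper bounds the tail integral pointwise in $x$ via Cauchy--Schwarz before appealing to Plancherel; the two are interchangeable.
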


\begin{proof}
We prove for instance the second estimate. The others are similar. For $t\geq 1$ and $x \in \R^d$ we have 
\begin{align*}
\nabla \um(t) = \nabla e^{-\frac {t \Pm} {\bm}} v_0 = \frac 1 {(2\pi)^d} \int_{\x \in \R^d} i \x e^{i x\cdot \x} e^{-\frac {t}{\bm} \innp{\Gm \x}{\x}} \widehat{v_0}(\x)\, d\x.
\end{align*}
By Corollary \ref{cor-estim-ls} we have 
\[
\abs{ \int_{\x \in \R^d \setminus B(\rr)} i \x e^{i x\cdot \x} e^{-\frac {t}{\bm} \innp{\Gm \x}{\x}} \widehat{v_0}(\x)\, d\x} \leq \int_{\x \in \R^d \setminus B(\rr)} \abs \x e^{- t \L_1 \abs \x^2 } \abs{\widehat{v_0}(\x)} \, d\x.
\]
The estimate then follows from the Cauchy-Schwarz inequality and the Plancherel equality.
\end{proof}

Theorem \ref{th-heat} is a consequence of Propositions \ref{prop-eAAc-Ilow} and \ref{prop-reduc-I0} together with the following estimates. We recall that $\th_1(t)$ and $\th_2(t)$ were defined in \eqref{def-tilde-Ilow}. Moreover, we recall that by density it is enough to prove Theorem \ref{th-heat} for $U_0 = F \in \Sc \times \Sc$.

\begin{proposition} \label{prop-decay-uv}
Let $s_1,s_2 \in \big[0,\frac d 2 \big]$ and $\k > 1$. Then there exists $C \geq 0$ which does not depend on $F \in \Sc \times \Sc$ and such that for $t \geq 1$ we have 
\[
\nr{\th_1(t) F - \um(t)}_{L^{2,-\k s_1}} \leq  C \pppg t^{- \frac 1 2 - \frac {s_1 + s_2} 2} \nr{F}_{\LL^{\k s_2}},
\]
\[
\nr{\nabla \th_1(t) F - W \nabla \um(t)}_{L^{2,-\k s_1}} \leq  C \pppg t^{- 1 - \frac {s_1 + s_2} 2} \nr{F}_{\LL^{\k s_2}},
\]
and
\[
\nr{\th_2(t) F - i \wp \partial_t \um(t)}_{L^{2,-\k s_1}} \leq  C \pppg t^{- \frac 3 2 - \frac {s_1 + s_2} 2} \nr{F}_{\LL^{\k s_2}}.
\]
\end{proposition}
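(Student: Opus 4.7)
The plan is to compare, frequency by frequency, the Floquet--Bloch integral defining $\tIlow(t) F$ with the low-frequency Fourier integral representation of the heat solution provided by Lemma \ref{lem-uQ}. The cornerstone observation is that, since the direction $\Psi_0 = \bm\inv(\bp, i)\trsp$ of Proposition \ref{prop-Psi} is $\Z^d$-periodic, Proposition \ref{prop-Floquet-Bloch}(ii) yields the exact identity
\[
\innp{F_\per^\s}{\Psi_0}_{\LL_\per} = \widehat{v_0}(\s) \qquad \text{for every } \s \in \R^d,
\]
where $v_0$ is the heat initial condition \eqref{heat-Q-CI}. This lets us rewrite the Fourier integrands of Lemma \ref{lem-uQ} in exactly the same Floquet--Bloch format as $\th_1(t)F$, $\th_2(t)F$ and $\nabla\th_1(t)F$, so that each of the three comparisons reduces to bounding a single integral over $\s\in B(\rr)$.

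First, I would use Lemma \ref{lem-uQ} to replace $\um$, $\nabla\um$ and $i\wp\partial_t\um$ by their low-frequency integrals on $B(\rr)$ (the exponential remainders being harmless). Then I would telescope the difference in three steps, using Propositions \ref{prop-DL-ls} and \ref{prop-Psi}: (i) $\f_\s(x) - 1 = i\p_\s(x) + O(|\s|^2)$ in $L^\infty_x$; (ii) $\Psi_\s - \Psi_0 = O(|\s|)$ in $L^2_\per \times H^1_\per$, by the smoothness from Remark \ref{rem-reg-Psi}; (iii) $|e^{-it\ls} - e^{-t\innp{\Gm\s}\s/\bm}| \leq C|\s|\,e^{-\Lambda_1 t|\s|^2/4}$ uniformly in $t \geq 0$. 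The last estimate follows by writing the difference as $-it\rho(\s)\int_0^1 e^{-sit\ls-(1-s)t\innp{\Gm\s}\s/\bm}\,ds$ with $\rho(\s) = \ls + i\innp{\Gm\s}\s/\bm = O(|\s|^3)$ and then invoking Corollary \ref{cor-estim-ls} together with the boundedness of $u\mapsto u\,e^{-cu}$ on $\R_+$.

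For the gradient estimate, I would exploit the identity $i\s\,\f_\s(x) + \nabla\f_\s(x) = iW(x)\s + O(|\s|^2)$, a direct consequence of (i) and of the definition \eqref{def-J} of $W$: its leading term matches exactly the integrand of $W\nabla\um$ from Lemma \ref{lem-uQ}, while the residual gains an extra $|\s|^2$. Similarly, for $\th_2$, the identity $\ls\,\wp(x)\,\f_\s(x) = -i\wp(x)\innp{\Gm\s}\s/\bm + O(|\s|^3)$ matches the integrand of $i\wp\partial_t\um = -i\wp\bm\inv\Pm\um$, producing a residual of order $|\s|^3$. Altogether, every remaining error becomes an integral of the form
\[
R(t,x) = \int_{B(\rr)} |\s|^k\,e^{-ct|\s|^2}\,e^{ix\cdot\s}\,h_\s(x)\,\innp{F_\per^\s}{\Xi_\s}_{\LL_\per}\,d\s,
\]
where $k\in\{1,2,3\}$ is the cumulative power of $|\s|$ dictated by the comparison, and $h_\s$, $\Xi_\s$ are uniformly bounded on $B(\rr)$ (respectively in $L^\infty_x$ and in $\LL_\per$).

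I then estimate $\nr{R(t)}_{L^{2,-\k s_1}}$ by duality: for $\vf \in \Sc$ with $\nr{\vf}_{L^{2,\k s_1}}\leq 1$, Proposition \ref{prop-Floquet-Bloch}(ii)--(iii) rewrites $\innp{R(t)}{\vf}_{L^2}$ as
\[
C \int_{B(\rr)} |\s|^k\,e^{-ct|\s|^2}\,\innp{F_\per^\s}{\Xi_\s}_{\LL_\per}\,\overline{\innp{\vf_\per^\s}{\bar h_\s}_{L^2_\per}}\,d\s.
\]
H\"older's inequality with exponents $(q, p_1, p_2)$ satisfying $\tfrac 1 q + \tfrac 1 {p_1} + \tfrac 1 {p_2} = 1$ and $p_i = \tfrac{2d}{d-2s_i}$, combined with Corollary \ref{cor-Floquet-Bloch} applied to each of the two inner products, and the elementary bound $\nr{|\s|^k e^{-ct|\s|^2}}_{L^q(B(\rr))} \lesssim t^{-k/2 - (s_1+s_2)/2}$ (valid since $1/q = (s_1+s_2)/d$), then produces the announced decay rates. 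The main obstacle is precisely this last duality step: the $x$-periodic factors $\f_\s(x)$, $\nabla\f_\s(x)$ and $W(x)\s$ appearing in the integrand must be repackaged as $L^2_\per$-valued functions of $\s$, uniformly bounded on $B(\rr)$, so that Corollary \ref{cor-Floquet-Bloch} applies cleanly; this reduces to the $L^\infty$-boundedness of $\f_\s$, $\nabla\f_\s$ and of the corrector $W$, which follows from the smoothness of the first Bloch wave on $B(\rr)$ (Section \ref{sec-first-Bloch}) and the standard elliptic regularity for the cell problem \eqref{def-psi}.
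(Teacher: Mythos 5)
Your proposal follows essentially the same route as the paper's proof. You identify the key exact identity $\innp{F_\per^\s}{\Psi_0}_{\LL_\per}=\widehat{v_0}(\s)$, telescope the difference by successively replacing $e^{-it\ls}$, $\Psi_\s$, and $\f_\s$ (respectively $\ls\wp\f_\s$ and $i\s\f_\s+\nabla\f_\s$) by their leading-order expansions from Propositions \ref{prop-DL-ls}, \ref{prop-Psi} and \eqref{eq-der-phi}, and then estimate each remainder by duality combined with H\"older's inequality and Corollary \ref{cor-Floquet-Bloch}, exactly as the paper does with its intermediate quantities $v_1,v_2,v_3$. The only (cosmetic) difference is in handling the exponential comparison: the paper bounds $\abs{e^{t\Oc(\abs\s^3)}-1}\lesssim t\abs\s^3 e^{t\Oc(\abs\s^3)}$ and shrinks $\rr$ so that the $\Oc(\abs\s^3)$ term is absorbed, retaining a factor $t\abs\s^3$ that is rescaled at the end; you instead write the difference via a convex-combination integral and absorb the $t$ factor immediately using $u\mapsto u\,e^{-cu}$ bounded on $\R_+$ together with Corollary \ref{cor-estim-ls}, arriving at the uniform bound $C\abs\s\,e^{-\L_1 t\abs\s^2/4}$ without shrinking $\rr$. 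Both yield the same powers of $\abs\s$ and the same final decay rates, so this is a minor stylistic variation rather than a genuinely different argument.
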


\begin{proof}
For $j \in \{1,2\}$ we set $p_j = \frac {2d}{d-2s_j} \in [2,+\infty]$. Then $p_0 \in [1,+\infty]$ is defined by 
\[
\frac 1 {p_0} + \frac 1 {p_1} + \frac 1 {p_2} = 1.
\]
We begin with the last estimate. By Propositions \ref{prop-DL-ls} and \ref{prop-Psi}, and \eqref{Phi-phi}, we have 
\[
\th_2(t) F = \frac 1 {(2\pi)^d} \int_{\s \in B(\rr)}  e^{ix\cdot \s} e^{-\frac t {\bm} \left({\innp{\Gm \s}{\s}} + \Oc(\abs \s^3)\right)} \innp{F_\per^\s}{\Psi_\s}_{\LL_\per}  \ls \wp \f_\s \, d\s.
\]
Let 
\[
v_1(t) = \frac 1 {(2\pi)^d} \int_{\s \in B(\rr)} e^{ix\cdot \s} e^{-\frac t {\bm} {\innp{\Gm \s}{\s}}}  \innp{F_\per^\s}{\Psi_\s} \ls \wp \f_\s \, d\s.
\]
For $g \in \Sc$ we have by Proposition \ref{prop-Floquet-Bloch}
\[
 \innp{e^{ix\cdot \s} \wp \f_\s}{g}_{L^2(\R^d)} = \innp{\wp \f_\s}{g_\per^\s}_{L^2_\per},
\]
so
\begin{eqnarray*}
\lefteqn{\abs{\innp{\th_2(t) F - v_1(t)}{g}_{L^2(\R^d)}}}\\
&& = \abs{\frac 1 {(2\pi)^d} \int_{\s \in B(\rr)}  \ls  e^{-\frac t {\bm} {\innp{\Gm \s}{\s}}} \left(e^{t \Oc(\abs \s^3)} -1\right) \innp{F_\per^\s}{\Psi_\s}_{\LL_\per}  \innp{\wp \f_\s}{g_\per^\s}_{L^2_\per} \, d\s}\\
&& \lesssim \int_{\s \in B(\rr)} |\s|^2 e^{- \L_1 t \abs \s^2} \big|e^{t\Oc(\abs \s^3)} -1\big| \, \big|\innp{F_\per^\s}{\Psi_\s}\big| \, \big|\innp{\wp \f_\s}{g_\per^\s}\big| \, d\s\\
&& \lesssim \int_{\s \in B(\rr)}   t \abs \s^5 e^{- \L_1 t \abs \s^2} e^{t\Oc(\abs \s^3)} \big|\innp{F_\per^\s}{\Psi_\s}\big| \, \big|\innp{\wp \f_\s}{g_\per^\s}\big| \, d\s.
\end{eqnarray*}
Choosing $r > 0$ smaller if necessary we obtain 
\[
\abs{\innp{\th_2(t) F - v_1(t)}{g}_{L^2(\R^d)}} \lesssim \int_{\s \in B(\rr)}   t \abs \s^5 e^{- \frac {\L_1 t \abs \s^2}2} \big|\innp{F_\per^\s}{\Psi_\s} \big|  \, \big|\innp{\wp \f_\s}{g_\per^\s}\big| \, d\s.
\]
By the H\"older inequality  we have
\begin{multline*}
\abs{\innp{\th_2(t) F - v_1(t)}{g}_{L^2(\R^d)}}\\
\lesssim 
t^{-\frac 32} \Big\|(t \abs \s^2)^{\frac 5 2} e^{-\frac {t\L_1 \abs \s^2}2}\Big\|_{L^{p_0}_\s(B(\rr))} 
\nr{\innp{\wp \f_\s}{g_\per^\s}}_{L^{p_1}_\s(B(\rr))}\nr{\innp{F_\per^\s}{\Psi_\s}}_{L^{p_2}_\s(B(\rr))}.
\end{multline*}
If $p_0 = \infty$ (\ie if $s_1 + s_2 = 0$) then 
\[
\Big\| (t \abs \s^2)^{\frac 5 2} e^{-\frac {t\L_1 \abs \s^2}2} \Big\| _{L^{p_0}_\s(B(\rr))} \lesssim 1.
\]
And if $p_0 < \infty$,
\begin{align*}
\Big\| (t \abs \s^2)^{\frac 5 2} e^{-\frac {t\L_1 \abs \s^2}2}\Big\|_{L^{p_0}_\s(B(\rr))} = \left(t^{-\frac {d}2} \int_{B(\sqrt t r)} \abs{\y}^{5p_0} e^{-\frac {p_0 \L_1 \abs{\y}^2}2} \, d\y\right)^{\frac 1 {p_0}} \lesssim t^{-\frac d {2p_0}} = t^{-\frac {s_1 + s_2}2}.
\end{align*}
By Corollary \ref{cor-Floquet-Bloch} we finally get in both cases
\[
\abs{\innp{\th_2(t) F - v_1(t)}{g}_{L^2(\R^d)}} \lesssim \pppg t^{- \frac 32 -\frac {s_1 + s_2}2} \nr{g}_{L^{2,\k s_1}} \nr{F}_{\LL^{\k s_2}}.
\]

In $\LL_\per$ we have $\Psi_\s = \Psi_0 + \Oc (\abs \s)$ so, if we set 
\[
v_2(t) = \frac 1 {(2\pi)^d} \int_{\s \in B(\rr)} e^{ix\cdot \s} e^{-\frac t {\bm} {\innp{\Gm \s}{\s}}}  \innp{F_\per^\s}{\Psi_0}_{\LL_\per} \ls \wp \f_\s \, d\s,
\]
then we similarly obtain
\begin{align*}
\abs{\innp{v_1(t) - v_2(t)}{g}_{L^2(\R^d)}}
& \lesssim \int_{\s \in B(r)} \abs \s^2 e^{-t \L_1 \abs \s^2} \abs{\innp{\wp \f_\s}{g_\per^\s}} \abs{\innp{F_\per^\s}{ {\Psi_\s-\Psi_0}}} \, d\s\\
& \lesssim \int_{\s \in B(\rr)} \abs \s^3 e^{-t \L_1 \abs \s^2} \abs{\innp{\wp \f_\s}{g_\per^\s}} \abs{\innp{F_\per^\s}{\frac {\Psi_\s- \Psi_0}{\abs \s}}} \, d\s \\
& \lesssim \pppg t^{- \frac 3 2 - \frac {s_1 + s_2} 2} \nr{g}_{L^{2,\k s_1}} \nr{F}_{\LL^{\k s_2}}.
\end{align*}
Similarly,
\[
\ls \wp \f_\s = - \frac i {\bm} \innp{\Gm\s}{\s} \wp   + \Oc\big( \abs \s^{3} \big),
\]
so 
\[
\abs{\innp{v_2(t) - v_3(t)}{g}_{L^2(\R^d)}} \lesssim \pppg t^{- \frac 3 2 - \frac {s_1 + s_2} 2} \nr{g}_{L^{2,\k s_1}} \nr{F}_{\LL^{\k s_2}}
\]
where we have set 
\[
v_3(t) = - \frac i {\bm} \frac 1 {(2\pi)^d} \int_{\s \in B(\rr)} e^{ix\cdot \s} e^{-\frac t {\bm} {\innp{\Gm \s}{\s}}} \innp{\Gm\s}{\s} \wp  \innp{F_\per^\s}{\Psi_0}  \, d\s.
\]
Finally, by \eqref{def-Psi0} and Proposition \ref{prop-Floquet-Bloch} we have 
\begin{align*}
\innp{F_\per^\s}{\Psi_0}
& = \frac 1 {\bm} \left(\innp{(u_0)_\per^\s}{\bp} + \innp{i(\wp u_1)_\per^\s}{i} \right)\\
& = \frac 1 {\bm} \int_{x \in \R^d} e^{-ix\cdot \s} \big(\bp(x) u_0(x) + \wp u_1(x) \big) \, dx\\
& = \widehat{v_0} (\s).
\end{align*}
With Lemma \ref{lem-uQ} we have $\nr{v_3(t) - i\wp \partial_t \um(t)} = \Oc\big( e^{-\tilde \g t} \nr {F} \big)$, which concludes the proof of the third estimate. For the first estimate, we proceed similarly except that $\ls \wp \f_\s$ is replaced by $\f_\s$. 
For the second we start from
\[
\nabla \th_2(t) F = \frac 1 {(2\pi)^d} \int_{\s \in B(\rr)} e^{ix\cdot \s} e^{-\frac t {\bm} \left({\innp{\Gm \s}{\s}} + \Oc(\abs \s^3)\right)} \innp{F_\per^\s}{\Psi_\s}_{\LL_\per}  \big(i\s\f_\s + \nabla \f_\s  \big) \, d\s.
\]
By Proposition \ref{prop-DL-ls} and \eqref{def-J} we have 
\begin{equation} \label{eq-der-phi}
i\s\f_\s + \nabla \f_\s = i\s + i \nabla \p_\s + O \big( \abs \s^2 \big) = i W(x) \s + O \big( \abs \s^2 \big),
\end{equation}
so we can proceed as above to get the second estimate and conclude the proof.
\end{proof}

\section{Low frequency analysis in the perturbed setting} \label{sec-perturbed}

In this section we prove Theorem \ref{th-diff-decay}. By Proposition \ref{prop-eAAc-Ilow}, it is enough to estimate the difference between $\Ilow(t)$ and $\Ilowp(t)$ (defined by \eqref{def-Ilow} and \eqref{def-Ilowp}, respectively). Since the perturbation breaks the periodic structure, it is no longer possible to reduce the analysis to a family of problems on the torus. Here, we will deduce the time decay from resolvent estimates. We recall that the contour $\G_\m$ was defined in \eqref{def-Gamma}.\\

We start from 
\[
(\Ilow(t) - \Ilowp(t)) F = \int_{\R} \h(\t) e^{-it(\t + i \th_\m(\t))} \Rc(\t + i \th_\m(\t))  F \, d\t,
\]
where 
\[
\Rc(z) := \frac 1 {2i\pi} \big((\AAc-z)\inv - (\AAp-z)\inv\big).
\]
By partial integrations we obtain, for all $k \in \N$,
\[
(\Ilow(t) - \Ilowp(t)) F = \frac 1 {(it)^k}  \int_{\R} e^{-it(\t+i\th_\m(\t))} L_\m^k \big(\h(\t) \Rc(\t + i \th_\m(\t))  \big) F \, d\t,
\]
where 
\[
L_\m := \frac {d} {d\t} \frac 1 {1 + i\th_\m'(\t)}.
\]
We recall that $(\Ilow(t) - \Ilowp(t)) F$ does not depend on $\m \in (0,1]$. However, if we assume that the derivatives of $\th_\m$ are bounded uniformly in $\m$, the estimates given by this equality are of the form 
\[
\nr{(\Ilow(t) - \Ilowp(t)) F} \lesssim \frac {e^{\m t}}{t^k} \sup_{\abs \t \leq 3} \sup_{0\leq j \leq k} \nr{\Rc^{(j)} (\t + i \th_\m(\t))}.
\]
In $\mathscr{L}(L^2(\R^d))$, the resolvents blow up near 0, so we cannot simply let $\m$ go to 0 to get rid of the exponential factor. However, it is standard in this kind of contexts that in suitable weighted spaces some derivatives of these resolvents can be uniformly bounded. In this section, we prove uniform estimates for the derivatives of $\Rc$ in weighted spaces. Then, at the limit $\m \to 0$, this will give polynomial decay for the difference $(\Ilow(t) - \Ilowp(t)) F$, hence for the difference $u(t)-\upp(t)$ as in Theorem \ref{th-diff-decay}.\\

For $\b_x \in \N^d$ we set 
\[
\Thbxo := 
\begin{pmatrix}
\partial_x^{\b_x} & 0 \\
0 & 0
\end{pmatrix}, 
\quad  
\Th_1^0 := \begin{pmatrix} 0 & 0 \\ 0 & 1 \end{pmatrix} 
\quad \text{and} \quad  
\Thoo := \begin{pmatrix} 1 & 0 \\ 0 & 0 \end{pmatrix}.
\]
We recall that the solution $e^{-it\AAc} U_0$ of \eqref{wave-A} is of the form $\big(u(t),i w \partial_t u(t) \big)$ where $u(t)$ is the solution of \eqref{wave}. Thus, for $\d \in \R$, $\b_t \in \{0,1\}$ and $\b_x \in \N^d$ such that $\b_t + \abs {\b_x} \leq 1$ we have 
\[
\nr{\Thbxbt e^{-it\AAc} U_0}_{\LL^{-\d}} \simeq \nr{\partial_t^{\b_t} \partial_x^{\b_x} u(t)}_{L^{2,-\d}(\R^d)}.
\]
For $\tilde \b_x \in \N^d$ we also set
\[
\tThbx := 
\begin{pmatrix}
0 & 0\\
\partial_x^{\tilde \b_x}& 0
\end{pmatrix}
\quad \text{and} \quad 
\widetilde \Th_{0} := \Id_{\LL} = \begin{pmatrix} 1 & 0 \\ 0 & 1 \end{pmatrix}.
\]
This odd notation will prove to be useful in the sequel.

\subsection{Resolvent estimates in the periodic case}

In order to prove estimates on the derivatives of $(\AAc-z)\inv$ and of the difference $(\AAc-z)\inv - (\AAp-z)\inv$, we need more information about the resolvent of $\AAp$.\\

\begin{proposition} \label{prop-res-Rp}
Let $\b_t \in \{0,1\}$ and $\b_x, \tilde \b_x \in \N^d$ with $\b_t + \abs {\b_x} \leq 1$ and $|\tilde \b_x| \leq 1$. Let $s_1,s_2 \in \big[0,\frac d 2 \big)$ and $\k > 1$. Then there exist a neighborhood $\Uc$ of 0 in $\C$ and $C \geq 0$ such that for $z \in \C_+ \cap \Uc$ and $m \in \N$ we have 
\[
\nr{\Thbxbt (\AAp-z)^{-1-m} \tThbx}_{\mathscr{L}(\LL^{\k s_2},\LL^{-\k s_1})} \leq C \left( 1 + \abs z^{\b_t + \frac {\abs {\b_x} + \abs{\tilde \b_x}} 2 + \frac {s_1 + s_2}{2} - (1 + m)}\right).
\]
\end{proposition}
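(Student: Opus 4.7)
The strategy is to apply the Floquet-Bloch decomposition of Section \ref{sec-periodic} together with the spectral analysis of the fiber operators $\AAs$ developed in Section \ref{sec-first-Bloch}. Starting from
\[
(\AAp-z)^{-1-m} F = \frac 1 {(2\pi)^d} \int_{\s \in 2\pi\T} e^{ix\cdot\s} (\AAs-z)^{-1-m} F_\per^\s \, d\s,
\]
I would split, for $\s \in B(\rr)$, the fiber resolvent as $(\AAs-z)^{-1-m} = \Ps (\AAs-z)^{-1-m} + (1-\Ps)(\AAs-z)^{-1-m}$. This isolates a \emph{regular} contribution (from $(1-\Ps)$ when $\s \in B(\rr)$ and from the full resolvent when $\s \notin B(\rr)$) from a \emph{singular} contribution $K(z) F$ built from $\Ps (\AAs-z)^{-1-m} = (\ls - z)^{-1-m}\Ps$. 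Commuting the $x$-derivatives in $\Thbxbt$ and $\tThbx$ with the Bloch phase $e^{ix\cdot\s}$ produces the shifted operators $\Thbxbts$ and $\tThbxs$ acting on the periodic data.

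For the regular contribution, Proposition \ref{prop-eigenvalues} supplies a uniform spectral gap: the relevant part of the spectrum of $\AAs$ stays at distance at least $\min(\ggg_0,\ggg_2) > 0$ from $0$, uniformly in $\s$. Fixing $\Uc \subset B(0,\min(\ggg_0,\ggg_2)/4)$, Cauchy's formula then gives an operator-norm bound of order $C^{1+m}$ for the regular piece on $L^2_\per \times L^2_\per$, uniformly in $\s$ and $z$. Corollary \ref{cor-Floquet-Bloch} with $s_1=s_2=0$ and the embeddings $L^{2,\k s_2} \hookrightarrow L^2 \hookrightarrow L^{2,-\k s_1}$ transfer this bound to $\mathscr L(\LL^{\k s_2},\LL^{-\k s_1})$. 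Setting $\alpha := \b_t + \tfrac{|\b_x|+|\tilde \b_x|}{2} + \tfrac{s_1+s_2}{2} \geq 0$, and using that $|z| < 1$, the $C^{1+m}$ bound is dominated by $|z|^{-(1+m)} \leq |z|^{\alpha-(1+m)} \leq 1 + |z|^{\alpha-(1+m)}$, as required.

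The main work is the singular part. Testing against $g \in L^{2,\k s_1}$ and using Proposition \ref{prop-Floquet-Bloch}(ii) to expand the resulting pairing gives
\[
\innp{K(z)F}{g}_\LL = \frac 1 {(2\pi)^d} \int_{B(\rr)} \frac{\innp{\tThbxs F_\per^\s}{\Psi_\s}_{\LL_\per} \innp{\Thbxbts \Phi_\s}{g_\per^\s}_{\LL_\per}}{(\ls-z)^{1+m}} \, d\s.
\]
Proposition \ref{prop-DL-ls} combined with \eqref{Phi-phi} gives $\|\Thbxbts \Phi_\s\|_{\LL_\per} = O(|\s|^{2\b_t+|\b_x|})$, while for $|\tilde \b_x|=1$ an integration by parts --- exploiting that $(\Psi_0)_2 = i/\bm$ is constant by Proposition \ref{prop-Psi} --- rewrites $\innp{\tThbxs F_\per^\s}{\Psi_\s}$ as $\innp{(F_\per^\s)_1}{v_\s}_{L^2_\per}$ with $\|v_\s\|_{L^2_\per} = O(|\s|^{|\tilde \b_x|})$. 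Setting $N := 2\b_t + |\b_x| + |\tilde \b_x|$, factoring out $|\s|^N$ and applying H\"older with exponents $p_j = 2d/(d-2s_j)$ for $j=1,2$ and $1/p_0 = (s_1+s_2)/d$, Corollary \ref{cor-Floquet-Bloch} controls the two remaining factors by $\|F\|_{\LL^{\k s_2}}$ and $\|g\|_{L^{2,\k s_1}}$, yielding
\[
|\innp{K(z)F}{g}_\LL| \lesssim \left\|\frac{|\s|^{N}}{(\ls-z)^{1+m}}\right\|_{L^{p_0}(B(\rr))} \|F\|_{\LL^{\k s_2}} \|g\|_{L^{2,\k s_1}}.
\]
For $z \in \C_+ \cap \Uc$ one has $|\ls - z| \gtrsim |\s|^2 + |z|$, using $|\Im(\ls)| \geq \L_1|\s|^2$ (cf.\ Corollary \ref{cor-estim-ls}), $\Re(\ls) = O(|\s|^3)$ and $\Im(z) \geq 0$. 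A direct computation of the $L^{p_0}$ integral (splitting the domain at $|\s| \sim \sqrt{|z|}$) then produces $|z|^{N/2 + (s_1+s_2)/2 - (1+m)} = |z|^{\alpha - (1+m)}$, completing the estimate.

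The main obstacle is the careful bookkeeping of $|\s|$-factors near the origin: one must verify that $\Thbxbts \Phi_\s$ and, when $|\tilde \b_x|=1$, the adjoint-shifted $\tThbxs^* \Psi_\s$ vanish to exactly the orders $2\b_t+|\b_x|$ and $|\tilde \b_x|$ at $\s=0$, which relies on the Taylor expansion $\f_\s = 1 + i\p_\s + O(|\s|^2)$ of Proposition \ref{prop-DL-ls} and the explicit form of $\Psi_0$ from Proposition \ref{prop-Psi}. Without these cancellations the singular integral would not produce a high enough power of $|z|$ to match the claimed exponent. A secondary technical point is shrinking the neighborhood $\Uc$ enough for the $C^{1+m}$ growth of the regular part to be absorbed by $|z|^{-(1+m)}$, since the constant in the stated estimate must be independent of $m$.
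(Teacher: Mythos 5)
Your strategy is essentially the paper's: Floquet--Bloch decomposition, isolation of the rank-one contribution via $\Ps = \innp{\cdot}{\Psi_\s}\Phi_\s$, vanishing orders of $\Thbxbts\Phi_\s$ and $\tThbxs^*\Psi_\s$ at $\s=0$, H\"older with conjugate exponents, Corollary~\ref{cor-Floquet-Bloch}, and an $L^{p_0}$ computation near $\s=0$. Two points need repair.

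First, the chain you write to absorb the $C^{1+m}$ bound on the regular piece, namely $C^{1+m}\le|z|^{-(1+m)}\le|z|^{\alpha-(1+m)}$, is wrong in the middle step: for $|z|<1$ and $\alpha>0$ one has $|z|^{-(1+m)}\ge|z|^{\alpha-(1+m)}$, not the reverse. The conclusion is still recoverable: shrink $\Uc$ so that $C|z|\le 1/2$; for $m$ with $1+m>\alpha$ write $C^{1+m}|z|^{(1+m)-\alpha}=(C|z|)^{1+m}|z|^{-\alpha}\le(C|z|)^{\alpha}|z|^{-\alpha}=C^{\alpha}$ (using $0<C|z|\le 1$ and $1+m>\alpha$), and there are only finitely many $m$ with $1+m\le\alpha$. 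You should state this correctly since the uniformity in $m$ is part of the claim.

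Second, and more seriously, your $L^{p_0}$ computation with $p_j=2d/(d-2s_j)$ forces $d/p_0=s_1+s_2$ exactly, and this fails in the borderline case $\big(N-2(1+m)\big)p_0+d=0$ (equivalently $s_1+s_2=2(1+m)-N$, which does occur for $d\ge 2$ and suitable $m$). There the rescaled radial integral $\int_0^{r/\sqrt{|z|}}\tilde\th^{Np_0+d-1}(\L_1\tilde\th^2+1)^{-(1+m)p_0}\,d\tilde\th$ diverges logarithmically, producing an extra $|\log|z||^{1/p_0}$ which is \emph{not} dominated by $1+|z|^{\alpha-(1+m)}$ (in this borderline case $\alpha-(1+m)=0$, so the right-hand side is $O(1)$ while your integral is unbounded as $z\to 0$). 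The paper avoids this by applying Corollary~\ref{cor-Floquet-Bloch} with $\tilde\k=\sqrt\k$ and $\tilde\k s_j$ in place of $\k$ and $s_j$, so that the weight is still $\k s_j$ but $d/p_0=\tilde\k(s_1+s_2)>s_1+s_2$; the margin $\e=(\tilde\k-1)(s_1+s_2)/2>0$ is then exactly what is needed to absorb the logarithm. You need some version of this slack for the argument to close.
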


\begin{proof}
We set $\tilde \k = \sqrt \k$. Without loss of generality we can assume that $\k$ is so close to 1 that $\tilde \k \max(s_1,s_2) \leq \frac d 2$. We follow the same ideas as for the propagator. For this we can still use the Floquet-Bloch decomposition. Thus, for $\s \in \R^d$ we set
\[
\Thbxbt(\s) = e^{-ix\cdot\s} \Thbxbt e^{ix\cdot\s}.
\]
We similarly define $\tThbx(\s)$. Let $F,G \in \Sc \times \Sc$. We can write
\begin{eqnarray*}
\lefteqn{\innp{\Thbxbt (\AAp-z)^{-1-m} \tThbx F}{G}_\LL}\\
&& = \frac 1 {(2\pi)^d} \int_{\s \in 2\pi \T} \innp{\Thbxbts (\AAs -z)^{-1-m} \tThbxs F_\per^\s}{G_\per^\s}_{\LL_\per} \, d\s\\
&& = \frac 1 {(2\pi)^d} \int_{\s \in B(r)} \frac {1}{(\ls - z)^{1+m}}  \innp{\Thbxbts \Ps \tThbxs F_\per^\s}{G_\per^\s}_{\LL_\per} \, d\s + \innp{\Bc(z)F}{G}_\LL,
\end{eqnarray*}
where $\Bc \in \mathscr{L}(\LL)$ is holomorphic in a neighborhood of 0 in $\C$ and 
\[
\innp{\Thbxbts \Ps \tThbxs F_\per^\s}{G_\per^\s}_{\LL_\per} = \innp{F_\per^\s}{\tThbxs^* \Psi_\s}_{\LL_\per} \innp{\Thbxbts \Phi_\s}{G_\per^\s}_{\LL_\per}.
\]
By Proposition \ref{prop-DL-ls} and the fact that $\Im(z) > 0$ we have
\[
\frac 1 {\abs {\ls-z}} \lesssim \frac 1 {\L_1 \abs \s^2 + \abs z}.
\]
By Remark \ref{rem-reg-Psi} and the expression of $\Psi_0$ in Proposition \ref{prop-Psi} we have
\[
\nr{\tThbxs^* \Psi_\s}_{\LL_\per} = \Oc \Big(\abs \s^{\abs{\tilde \b_x}}\Big).
\]
And we recall from \eqref{Phi-phi}, Corollary \ref{cor-estim-ls} and \eqref{eq-der-phi} that 
\[
\nr{\Thbxbts \Phi_\s}_{\LL_\per} = \Oc \Big(\abs \s^{2 \b_t + \abs{\b_x}}\Big).
\]
For $j \in \{1,2\}$ we set $p_j = ({2d})/({d-2 \tilde \k s_j})$. Then we consider $p_0 \in [1,+\infty]$ such that 
\[
\frac 1 {p_0} + \frac 1 {p_1} + \frac 1 {p_2} = 1.
\]
By the H\"older inequality and Corollary \ref{cor-Floquet-Bloch} (applied with $\tilde \k$ instead of $\k$) we get 
\begin{eqnarray*}
\lefteqn{\abs{\innp{\Thbxbt (\AAp-z)^{-1-m} \tThbx F}{G}_\LL}}\\
&& \lesssim \nr{\frac {\abs \s^{2\b_t + |\b_x| + |\tilde \b_x|}}{(\L_1 \abs \s^2 + \abs z)^{1+m}}}_{L^{p_0}_\s}
\nr{\frac {\innp{F_\per^\s}{\tThbxs^* \Psi_\s}}{\abs{\s}^{\abs{\tilde \b_x}}}}_{L^{p_2}_\s}
\nr{\frac {\innp{\Thbxbts \Phi_\s}{G_\per^\s}}{\abs{\s}^{2 \b_t + \abs{\b_x}}}}_{L^{p_1}_\s}\\
&& \lesssim \nr{\frac {\abs \s^{2\b_t + |\b_x| + |\tilde \b_x|}}{(\L_1 \abs \s^2 + \abs z)^{1+m}}}_{L^{p_0}_\s(B(\rr))} \nr{F}_{L^{2,\k s_2}(\R^d)} \nr{G}_{L^{2,\k s_1}(\R^d)}.
\end{eqnarray*}
We have 
\[
\sup_{\s \in B(\rr)} \frac {\abs \s^{2\b_t + |\b_x| + |\tilde \b_x|}}{(\L_1 \abs \s^2 + \abs z)^{1+m}} \lesssim  1 + \abs z^{\frac {2\b_t + |\b_x| + |\tilde \b_x|}2- (1+m)},
\]
so the proposition is proved if $s_1 = s_2 = 0$ and $p_0 = \infty$.
Now assume that $s_1 + s_2 > 0$. Using polar coordinates in $\s$ we can write 
\[
\nr{\frac {\abs \s^{2\b_t + |\b_x| + |\tilde \b_x|}}{(\L_1 \abs \s^2 + \abs z)^{1+m}}}_{L^{p_0}_\s(B(\rr))}^{p_0}
\lesssim \int_0^\rr \frac {\th^{(2\b_t + |\b_x| + |\tilde \b_x|)p_0}}{(\L_1 \th^2 + \abs z)^{(1+m)p_0}} \th^{d-1} \, d\th.
\]
If $\big( 2\b_t + |\b_x| + |\tilde \b_x| - 2(1+m) \big) p_0 + d-1 > - 1$ then this quantity is bounded uniformly in $z \in \C_+$ close to 0. Otherwise, the change of variables $\th = \sqrt {\abs z} \tilde \th$ gives
\begin{multline*}
\nr{\frac {\abs \s^{2\b_t + |\b_x| + |\tilde \b_x|}}{(\L_1 \abs \s^2 + \abs z)^{1+m}}}_{L^{p_0}_\s(B(\rr))}^{p_0}\\
\leq \abs z^{\frac {(2 \b_t + |\b_t| + |\tilde \b_x|)p_0 + d}{2} - (1+m)p_0} \int_0^{\frac r {\sqrt{\abs z}}}  \frac {\tilde \th^{(2 \b_t + |\b_t| + |\tilde \b_x|  )p_0+ d-1}} {(\L_1 \tilde \th^2 + 1)^{(1+m)p_0}} \, d\tilde \th.
\end{multline*}
In any case we can write the rough estimate  
\[
 \nr{\frac {\abs \s^{2\b_t + |\b_x| + |\tilde \b_x|}}{(\L_1 \abs \s^2 + \abs z)^{1+m}}}_{L^{p_0}_\s(B(\rr))} \lesssim  1 + \abs z^{- (1+m) + \frac {2\b_t + |\b_x| + |\tilde \b_x|} 2 + \frac d {2 p_0} - \e},
\]
where
\[
\e = \frac {(\tilde \k -1)(s_1+s_2)}2 > 0
\] 
(in fact we can take $\e = 0$ if $\big( 2\b_t + |\b_x| + |\tilde \b_x| - 2(1+m) \big) p_0 + d-1 < - 1$).
Since $d/p_0 = \tilde \k(s_1 + s_2)$, the conclusion follows.
\end{proof}

\subsection{Resolvent estimates in the perturbed setting}

In this paragraph we use the estimate of the derivatives of $(\AAp-z)\inv$ for $z \in \C_+$ close to 0 to obtain (better) estimates for the difference $(\AAc-z)\inv - (\AAp-z)\inv$. This will prove that we have the same estimates for $(\AAc-z)\inv$ as for $(\AAp-z)\inv$.

\begin{proposition} \label{prop-res-general}
Let $\b_t \in \{0,1\}$ and $\b_x \in \N^d$ with $\b_t + \abs {\b_x} \leq 1$. Let $s_1,s_2 \in \big[0,\frac d 2 \big)$ and $\k > 1$. Let $\y > 0$. Assume that 
\begin{equation} \label{hyp-eta}
\k \max(s_1,s_2) + \k \y < \min\left(\frac d 2, \rho_G, \rho_a + 1 \right).
\end{equation}
Then there exists $C \geq 0$ such that for $z \in \C_+$ with $\abs z \leq 1$ we have 
\begin{multline*}
\nr{\Thbxbt \big((\AAc-z)^{-1-m} - (\AAp-z)^{-1-m} \big)}_{\mathscr{L}(\LL^{\k s_2},\LL^{-\k s_1})}\\
\leq C \left( 1 + \abs z^{\frac \y 2 + \b_t + \frac {\abs {\b_x}} 2 + \frac {s_1 + s_2}2 - (1 + m)}\right).
\end{multline*}
\end{proposition}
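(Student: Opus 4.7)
The plan is to exploit the second resolvent identity, to derive the corresponding identity for higher powers, and then to absorb the decay of the perturbation via weighted resolvent estimates that reduce everything to Proposition \ref{prop-res-Rp}. Set $V := \AAc - \AAp$. Explicitly,
\begin{equation*}
V = \begin{pmatrix} 0 & w\inv - \wp\inv \\ -\divg \Go \nabla & -i\ao \end{pmatrix},
\end{equation*}
whose multiplication entries decay like $\pppg x^{-\rho_a}$ and whose $(2,1)$ entry has coefficient decaying like $\pppg x^{-\rho_G}$. From the identity $(\AAc-z)\inv - (\AAp-z)\inv = -(\AAp-z)\inv V (\AAc-z)\inv$, differentiating $m$ times in $z$ and applying the Leibniz rule yields
\begin{equation*}
(\AAc-z)^{-1-m} - (\AAp-z)^{-1-m} = - \sum_{k=0}^{m} (\AAp-z)^{-1-k}\, V\, (\AAc-z)^{-1-(m-k)}.
\end{equation*}

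First I would establish, in parallel, the analog of Proposition \ref{prop-res-Rp} for $\AAc$ itself, i.e.\ the same right-hand side but without the $|z|^{\y/2}$ improvement. This is done by induction on $m$: inserting the resolvent identity inside each middle factor of the formula above and iterating once expresses everything as finite sums of products of $\AAp$-resolvents joined by copies of $V$; a single weight of order $\rho_a$ or $\rho_G$ is then absorbed between two consecutive $\AAp$-resolvents, and the two derivatives appearing in $V_{21}$ are integrated by parts to land on the neighbouring resolvents. This is precisely what the auxiliary operators $\tThbx$ on the right in Proposition \ref{prop-res-Rp} are designed to control, and this is also where the appearance of $\rho_a + 1$ (rather than $\rho_a$) in \eqref{hyp-eta} comes from: the extra $+1$ is gained via the additional factor of $z$ in the off-diagonal entries of the resolvent of $\AAs$, cf.\ \eqref{expr-res-AAs}.

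For the improvement announced in the statement, I would go back to the Leibniz-type formula above and, in each summand, insert the splitting $1 = \pppg x^{\k \y} \pppg x^{-\k\y}$ between $(\AAp-z)^{-1-k}$ and $V$. Proposition \ref{prop-res-Rp} applied with $s_1$ replaced by $s_1 + \y$ (which is admissible since $\k(s_1+\y) < d/2$ by \eqref{hyp-eta}) then gives
\begin{equation*}
\nr{\pppg x^{-\k s_1}\Thbxbt (\AAp-z)^{-1-k}\pppg x^{\k \y}}_{\mathscr{L}(\LL)} \lesssim 1 + |z|^{\b_t + |\b_x|/2 + (s_1+\y)/2 - (1+k)},
\end{equation*}
which produces exactly the desired extra $|z|^{\y/2}$ gain. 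The complementary factor $\pppg x^{-\k\y} V (\AAc-z)^{-1-(m-k)} \pppg x^{-\k s_2}$ is then estimated by the $\AAc$-analog proved in the first step, after writing $V$ as a bounded operator prefactor times $\pppg x^{-\k(\y+s_2)}$ (the boundedness being granted by $\k(\y + s_2) < \min(\rho_G, \rho_a + 1)$).

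The main obstacle is the presence of the second-order differential entry $V_{21} = -\divg \Go \nabla$: unlike the multiplication components of $V$, it cannot be absorbed purely as a weight, and its two derivatives must be shared by integration by parts between the two surrounding resolvents. This is the very reason why Proposition \ref{prop-res-Rp} is stated with both $\Thbxbt$ on the left and $\tThbx$ on the right. Combining this with careful bookkeeping of the $|z|$-exponents produced by each resolvent, and verifying via \eqref{hyp-eta} that every weight and decay rate stays inside its admissible range, is what closes the induction and yields the sharp exponent $\y/2 + \b_t + |\b_x|/2 + (s_1+s_2)/2 - (1+m)$.
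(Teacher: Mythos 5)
Your plan — starting from the second resolvent identity, expanding powers via a Leibniz-type formula, redistributing weights to gain the $|z|^{\y/2}$ factor, and integrating by parts to handle the second-order entry of $V$ — is broadly aligned with the paper's. However, there is a genuine gap in how you handle the zeroth-order part of the perturbation, and it is exactly where the $\rho_a+1$ (rather than $\rho_a$) in \eqref{hyp-eta} becomes critical.

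You correctly observe that the $\rho_a+1$ ``must come from an extra factor of $z$'', and you point at \eqref{expr-res-AAs}. But the concrete argument you propose for the $\y$-improvement — insert $1 = \pppg x^{\k\y}\pppg x^{-\k\y}$, then ``write $V$ as a bounded operator prefactor times $\pppg x^{-\k(\y+s_2)}$'' — requires that the multiplication coefficients of $V$ (namely $w^{-1}-\wp^{-1}$ and $\ao$) absorb a weight of order $\k(\y+s_2)$. Those coefficients decay only like $\pppg x^{-\rho_a}$, so this factorization needs $\k(\y+s_2) \leq \rho_a$, which is strictly stronger than what \eqref{hyp-eta} guarantees ($< \rho_a + 1$). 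The same issue reappears in your ``first step'' where you want to prove the $\AAc$-analogue of Proposition~\ref{prop-res-Rp}; if you approach it with the same weight-absorption scheme you'll face the identical obstruction. Mentioning the $z$-factor mechanism in passing doesn't close this gap — it has to be built into the estimate.

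What the paper does, and what you would need to add, is to split $V = \Pc_0 + \Dc_0$ with $\Pc_0$ the divergence-form piece (controlled by $\rho_G$ plus integration by parts) and $\Dc_0$ the zeroth-order piece supported in the right column of the $2\times 2$ block structure. Because $\Dc_0$ only acts on the second component, the identity of Remark~\ref{rem-colonne-droite} lets one peel off a bounded term plus an explicit extra factor of $z$ before hitting the next resolvent. That extra $z$ is precisely what buys back one unit of weight, giving the range $\rho_a+1$. Without performing this step (or an equivalent one), the exponent you produce will not be the claimed one under hypothesis \eqref{hyp-eta}. Your sketch also glosses over the bootstrap needed to get a gradient estimate on $(\AAc-z)^{-1-j}$ itself — the paper does this via the intermediate operator $\tAAc = \AAp + \Pc_0$ and then a further induction (Lemmas~\ref{lem-estim-P0}--\ref{lemma-prop-res-D0}); a one-shot Leibniz expansion with $V$ in the middle and $\AAc$-resolvents on one side does not by itself provide the required gradient bound on those $\AAc$-resolvents.
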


We split the proof of this proposition into several intermediate results. We begin with a remark which will be used several times in the proofs. It is based on the fact that in the expression of the resolvent in Proposition \ref{prop-Ac-max-diss} the lower row is, up to a term $w$, equal to $zw$ times the upper row.

\begin{remark} \label{rem-colonne-droite}
Let $\nu_1,\nu_2 \in L^\infty$. Then for $z \in \C_+$ we have 
\begin{equation} \label{eq-colonne-droite}
\begin{pmatrix}
0 & \nu_1 \\
0 & \nu_2 
\end{pmatrix}
(\AAc-z)\inv =
\begin{pmatrix}
\n_1 w & 0 \\
\n_2 w & 0
\end{pmatrix}
+ z 
\begin{pmatrix}
\n_1w & 0 \\
\n_2w & 0 
\end{pmatrix}
(\AAc-z)\inv.
\end{equation}
This also holds with $\AAc$ replaced by $\AAp$. In particular for $\d_1 \in \R$ and $U \in \Sc \times \Sc$ we have 
\begin{multline*}
\nr{\begin{pmatrix}
0 & \nu_1 \\
0 & \nu_2 
\end{pmatrix}
\big( (\AAc-z)\inv - (\AAp-z)\inv \big) U}_{\LL^{-\d_1}}\\
\lesssim \nr{U}_{\LL^{-\d_1}} + \abs z \nr{\big( (\AAc-z)\inv - (\AAp-z)\inv \big) U}_{\LL^{-\d_1}}.
\end{multline*}
Now if we take the derivatives of \eqref{eq-colonne-droite} with respect to $z$ we get for $m \geq 1$
\begin{eqnarray*}
\lefteqn{
\nr{\begin{pmatrix}
0 & \nu_1 \\
0 & \nu_2 
\end{pmatrix}
\big( (\AAc-z)^{-1-m} - (\AAp-z)^{-1-m} \big) U}_{\LL^{-\d_1}}}\\
&& \lesssim \nr{\big( (\AAc-z)^{-m} - (\AAp-z)^{-m} \big)U}_{\LL^{-\d_1}}\\
&& + \abs z \nr{\big( (\AAc-z)^{-1-m} - (\AAp-z)^{-1-m} \big) \big) U}_{\LL^{-\d_1}}.
\end{eqnarray*}
\end{remark}

We can apply this remark in particular to the operator $\Thuo$ which select the component $i w \partial_t u(t)$ in the solution of \eqref{wave-A}:

\begin{lemma} \label{lem-prop-res-bt}
Assume that the result of Proposition \ref{prop-res-general} holds when $\b_t = 0$. Then it also holds when $\b_t = 1$.
\end{lemma}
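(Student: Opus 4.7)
The plan is to deduce the $\b_t = 1$ case of Proposition~\ref{prop-res-general} from the $\b_t = 0$ case (which is assumed) by exploiting the algebraic identity of Remark~\ref{rem-colonne-droite}. Taking $\nu_1 = 0, \nu_2 = 1$ and writing
\[
M := \begin{pmatrix} 0 & 0 \\ w & 0 \end{pmatrix}, \qquad M_p := \begin{pmatrix} 0 & 0 \\ \wp & 0 \end{pmatrix},
\]
the remark reads $\Thuo (\AAc-z)\inv = M + z M(\AAc-z)\inv$ and similarly with $\AAp, M_p$. Differentiating $m$ times in $z$, dividing by $m!$ (which is legitimate since the integrand is holomorphic), and subtracting the two versions, I would obtain for $m \geq 1$ the decomposition
\[
\Thuo\bigl[(\AAc-z)^{-1-m} - (\AAp-z)^{-1-m}\bigr] = M\bigl[(\AAc-z)^{-m} - (\AAp-z)^{-m}\bigr] + z M\bigl[(\AAc-z)^{-1-m} - (\AAp-z)^{-1-m}\bigr] + (M-M_p)\AAp (\AAp-z)^{-1-m},
\]
where the last term comes from combining the two $(M-M_p)$ contributions via $(\AAp-z)^{-m} + z(\AAp-z)^{-1-m} = \AAp (\AAp-z)^{-1-m}$; a direct computation then gives $(M-M_p)\AAp = \begin{pmatrix} 0 & 0 \\ 0 & \wo/\wp \end{pmatrix}$, with $\wo = w - \wp$. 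The case $m = 0$ adds a standalone term $M - M_p$, bounded from $\LL^{\k s_2}$ to $\LL^{-\k s_1}$ by the embedding $\LL^{\k s_2} \hookrightarrow \LL^{-\k s_1}$ since $\wo \in L^\infty$.

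For the first two terms, since $M$ only couples to the first component and $w \in L^\infty$, we have $\nr{MV}_{\LL^{-\k s_1}} \lesssim \nr{\Thoo V}_{\LL^{-\k s_1}}$. Applying the assumed $\b_t = 0$ bound at exponents $m$ (resp.\ $1+m$, with the extra explicit factor $|z|$ absorbing the shift) produces in each case a bound of the form $O(1 + |z|^{\y/2 + (s_1+s_2)/2 - m})$, which is exactly the $\b_t = 1$ target. For the third term, I would use $|\wo/\wp| \leq C\pppg{x}^{-\rho_a}$ to transfer $\rho_a$ units of polynomial decay into the output weight, yielding
\[
\nr{(M-M_p)\AAp(\AAp-z)^{-1-m}}_{\mathscr{L}(\LL^{\k s_2}, \LL^{-\k s_1})} \lesssim \nr{\Thuo (\AAp-z)^{-1-m}}_{\mathscr{L}(\LL^{\k s_2}, \LL^{-\k (s_1 + \rho_a/\k)})},
\]
and then invoke Proposition~\ref{prop-res-Rp} with $\b_t = 1$, $\b_x = \tilde \b_x = 0$, and $s_1$ replaced by $\tilde s_1 = s_1 + \rho_a/\k$. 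The hypothesis $\k\max(s_1, s_2) + \k\y < \min(d/2, \rho_a + 1)$ guarantees both that $\tilde s_1 < d/2$ (so the periodic estimate applies) and that the resulting $|z|$-exponent matches the target.

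The main obstacle lies in the fine bookkeeping of this last step: the polynomial decay $\rho_a$ of $\wo$ must combine with the explicit $|z|^1$-gain furnished by the $\b_t = 1$ entry of Proposition~\ref{prop-res-Rp} to produce precisely the $|z|^{\y/2}$ demanded by the target bound, and the "$+1$" appearing in the hypothesis $\k\y < \rho_a + 1$ reflects exactly this extra unit of $|z|$ coming from the $\b_t = 1$ periodic-resolvent estimate, so the condition is sharp for the strategy outlined here.
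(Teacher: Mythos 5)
Your reduction is the same one the paper uses: Remark~\ref{rem-colonne-droite} applied with $\nu_1 = 0$, $\nu_2 = 1$, differentiated $m$ times, to express $\Thuo$ acting on a difference of $(1+m)$-th resolvent powers in terms of $\Thoo$ acting on $m$-th and $(1+m)$-th powers. The paper's proof is two lines and simply cites the remark; you are more careful in that you explicitly carry out the Leibniz differentiation, subtract the $\AAc$- and $\AAp$-identities, and isolate the cross-term
\[
(M - M_p)\,\AAp\,(\AAp-z)^{-1-m} = \begin{pmatrix} 0 & 0 \\ 0 & \wo/\wp \end{pmatrix}(\AAp-z)^{-1-m},
\]
coming from $w \neq \wp$, which the paper's remark and proof pass over in silence. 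That algebraic computation is correct and is a useful clarification.

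The gap is in your final bookkeeping claim. You assert that invoking Proposition~\ref{prop-res-Rp} with $\b_t = 1$ and $s_1$ replaced by $\tilde s_1 = s_1 + \rho_a/\k$ "produces precisely the $|z|^{\y/2}$ demanded by the target bound," and that the "$+1$" in the hypothesis $\k\y < \rho_a + 1$ is accounted for by "this extra unit of $|z|$ coming from the $\b_t = 1$ periodic-resolvent estimate." This is not correct. Proposition~\ref{prop-res-Rp} with $\b_t = 1$, $\b_x = \tilde\b_x = 0$, weight $\tilde s_1$, gives $C\bigl(1 + |z|^{\,1 + (\tilde s_1 + s_2)/2 - (1+m)}\bigr) = C\bigl(1 + |z|^{\,1 + (s_1 + s_2)/2 + \rho_a/(2\k) - (1+m)}\bigr)$, while the $\b_t = 1$ target of Proposition~\ref{prop-res-general} is $C\bigl(1 + |z|^{\,\y/2 + 1 + (s_1 + s_2)/2 - (1+m)}\bigr)$. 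The $\b_t$-contribution $|z|^1$ appears in \emph{both} exponents and cancels; the comparison therefore reduces to $\rho_a/(2\k) \geq \y/2$, i.e.\ $\k\y \leq \rho_a$, which is strictly stronger than the hypothesis $\k\y < \rho_a + 1 - \k\max(s_1,s_2)$. For instance with $s_1 = s_2 = 0$, $\k$ close to $1$, and $\y \in (\rho_a/\k, (\rho_a+1)/\k)$, the hypothesis~\eqref{hyp-eta} holds but your cross-term estimate does not close. (Incidentally, the same tension is present if one takes the paper's Remark~\ref{rem-colonne-droite} and two-line proof literally, since the remark's stated bound omits this $\wo$ cross-term; but that does not make your claimed justification correct.) You would either need a different treatment of the cross-term that genuinely harvests an extra order of $|z|$, or a hypothesis of the form $\k\y \leq \rho_a$ in place of $\k\y < \rho_a + 1$ for this step.
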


\begin{proof}
Assume that $m \geq 1$. By Remark \eqref{rem-colonne-droite} we have in $\mathscr{L}(\LL^{\k s_2},\LL^{-\k s_1})$
\begin{equation*}
\begin{aligned}
\nr{\Thuo\big((\AAc-z)^{-1-m} - (\AAp-z)^{-1-m} \big)}
& \lesssim \nr{\Thoo \big((\AAc-z)^{-m} - (\AAp-z)^{-m} \big)}\\
& + \abs z \nr{\Thoo \big((\AAc-z)^{-1-m} - (\AAp-z)^{-1-m} \big)},
\end{aligned}
\end{equation*}
and the conclusion for $\b_t = 1$ follows from the case $\b_t = 0$. We conclude similarly if $m = 0$.
\end{proof}

After Lemma \ref{lem-prop-res-bt} it is enough to consider the case $\b_t = 0$. For this we will use perturbation arguments. We set 
\[
P_0 = -\divg G_0(x) \nabla.
\]
Then we write 
\[
\AAc - \AAp = \Pc_0 + \Dc_0,
\]
where 
\[
\Pc_0 = 
\begin{pmatrix}
0 & 0 \\
\Po & 0
\end{pmatrix}
\quad \text{and} \quad 
\Dc_0 =
\begin{pmatrix}
0 & w\inv - \wp\inv \\
0 & \ao
\end{pmatrix}.
\]
Notice that $w\inv - \wp\inv$ has the same decay property as $w_0$ in \eqref{hyp-vanishing}.\\

We begin with the contribution of $\Pc_0$. For this we set 
\[
\tAAc = \AAp + \Pc_0.
\]
Notice that all the general results proved for $\AAc$ in Section \ref{sec-spectral} also apply for $\tAAc$.

\begin{lemma} \label{lem-estim-P0}
In the setting of Proposition \ref{prop-res-general}, if $\tilde \b_x \in \N^d$ is such that $|\tilde \b_x| \leq 1$ then there exists $C \geq 0$ such that for $z \in \C_+$ with $\abs z \leq 1$ we have 
\[
\nr{\Thbxo (\tAAc-z)^{-1} \tThbx}_{\mathscr{L}(\LL^{\k s_2},\LL^{-\k s_1})} \leq C \left( 1 + \abs z^{\frac { \abs {\b_x} + |\tilde \b_x| + s_1 + s_2}2 - 1}\right).
\]
\end{lemma}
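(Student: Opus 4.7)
The plan is to combine the resolvent identity with the estimate on the purely periodic resolvent from Proposition \ref{prop-res-Rp}. Writing $\tAAc = \AAp + \Pc_0$, I would start from
\begin{equation*}
(\tAAc-z)^{-1} = (\AAp-z)^{-1} - (\AAp-z)^{-1}\Pc_0(\tAAc-z)^{-1},
\end{equation*}
and exploit the divergence form $P_0 = -\divg G_0\nabla$ to rewrite
\[
\Pc_0 = -\sum_{j,k}\tThbx_{e_j}\,G_{0,jk}\,\Thbxo^{e_k},
\]
so that the second-order perturbation is factored as two first-order operators sandwiching multiplication by the bounded function $G_{0,jk}$ satisfying $|G_{0,jk}(x)| \leq C\pppg x^{-\rho_G}$.

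Multiplying by $\Thbxo$ on the left and $\tThbx$ on the right, the contribution from $(\AAp-z)^{-1}$ is controlled directly by Proposition \ref{prop-res-Rp} (with $m=0$ and $\b_t=0$), producing exactly the target estimate. The remainder is, up to a sum, of the form
\begin{equation*}
\Thbxo (\AAp-z)^{-1}\tThbx_{e_j}\cdot G_{0,jk}\cdot \Thbxo^{e_k}(\tAAc-z)^{-1}\tThbx.
\end{equation*}
I would split the weight $\pppg x^{-\rho_G}$ between the two resolvent factors by choosing auxiliary weights $\tilde s_1, \tilde s_2 \in [0,d/2)$ with $\k(\tilde s_1+\tilde s_2) \leq \rho_G$, which is feasible thanks to the slack built into the hypothesis $\k\max(s_1,s_2)+\k\y < \rho_G$. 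Multiplication by $G_{0,jk}$ is then bounded from $\LL^{-\k\tilde s_1}$ to $\LL^{\k\tilde s_2}$, and Proposition \ref{prop-res-Rp} controls $\Thbxo (\AAp-z)^{-1}\tThbx_{e_j}$ in $\Lc(\LL^{\k\tilde s_2},\LL^{-\k s_1})$.

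The main obstacle is closing the bootstrap, since the remaining factor $\Thbxo^{e_k}(\tAAc-z)^{-1}\tThbx$ is structurally the same object we are trying to estimate (only $\b_x$ has been replaced by $e_k$, both of length at most $1$). The idea is to iterate the resolvent identity one additional time so that the last copy of $(\tAAc-z)^{-1}$ is flanked by two factors of $\Pc_0$, yielding a total decay $\pppg x^{-2\rho_G}$. Combining this excess decay with Proposition \ref{prop-res-Rp} for the two intermediate $(\AAp-z)^{-1}$ factors and with the $\Lc(\HH)$-bound of Proposition \ref{prop-Ac-max-diss} for the innermost $(\tAAc-z)^{-1}$, one produces the required uniform estimate for $z \in \C_+$ with $|z|\leq 1$. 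The delicate step is the bookkeeping of the four intermediate weights: they must simultaneously satisfy $\k(\tilde s_i+\tilde s_j) \leq \rho_G$ (for each $G_0$ splitting) and $\tilde s_i < d/2$ (so that Proposition \ref{prop-res-Rp} applies), while the resulting product of $|z|$-exponents must not exceed the target $(|\b_x|+|\tilde\b_x|+s_1+s_2)/2 - 1$. The slack $\eta>0$ in \eqref{hyp-eta} is precisely what makes such a choice possible.
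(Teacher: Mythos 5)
Your approach does not close the bootstrap, and the place it fails is exactly the spot you flag as delicate. After iterating the resolvent identity so that the remaining copy of $(\tAAc-z)^{-1}$ sits between two factors of $\Pc_0$, you propose to bound it by the $\mathscr L(\HH)$ estimate of Proposition~\ref{prop-Ac-max-diss}. But that estimate reads $\nr{(\tAAc-z)\inv}_{\mathscr L(\HH)}\leq C/\Im(z)$, which blows up as $\Im(z)\to 0^+$ and is therefore \emph{not} uniform on $\set{z\in\C_+ : \abs z\leq 1}$. The whole content of Lemma~\ref{lem-estim-P0} is to replace that $1/\Im(z)$ blow-up with the much milder algebraic blow-up $\abs z^{(\abs{\b_x}+\abs{\tilde\b_x}+s_1+s_2)/2-1}$ in weighted spaces; you cannot obtain it by eventually falling back on the $1/\Im(z)$ bound, no matter how many extra factors of $\pppg x^{-\rho_G}$ you accumulate, because the extra spatial decay does nothing to improve the dependence on $\Im(z)$. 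The same difficulty recurs if you iterate further: every step leaves a copy of $(\tAAc-z)^{-1}$ that needs a weighted bound uniform near the real axis, which is precisely what is being proved.

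The paper closes this gap by a different device. It drops to the scalar level via the equivalence~\eqref{equiv-Ac-Ro}, replacing $\Thbxo(\tAAc-z)^{-1}\tThbx$ by $\pppg x^{-\k s_1}\partial_x^{\b_x} R_0(z)\partial_x^{\tilde\b_x}\pppg x^{-\k s_2}$ with $R_0(z)=(\PG-iz\bp-z^2\wp)\inv$, and establishes the energy-type estimate~\eqref{estim-nabla-Ro}
\[
\nr{\nabla R_0(z)\,\partial_x^{\b_2}\pppg x^{-\k\s_2}} \lesssim \nr{\pppg x^{-\k\s_2}\partial_x^{\b_2}R_0(z)\partial_x^{\b_2}\pppg x^{-\k\s_2}}^{1/2} + \abs z^{1/2}\nr{R_0(z)\partial_x^{\b_2}\pppg x^{-\k\s_2}},
\]
obtained by pairing $R_0(z)\partial_x^{\b_2}\pppg x^{-\k\s_2}\vf$ with the equation. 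Combined with the resolvent identity between $R_0$ and $\Rp$ (which is the scalar analogue of your first step), one lands on an inequality in which the unknown $\nr{\nabla R_0(z)\partial_x^{\b_2}\pppg x^{-\k\s_2}}$ appears both under a square root and multiplied by a factor of the form $\abs z^{\tilde\s_2/2}$ (with $\tilde\s_2\in(0,1)$). For $\abs z$ small this is self-improving and can be absorbed, yielding~\eqref{estim-nabla-Ro-bis}; the general weights then follow from~\eqref{estim-Ro} by a case split. Incidentally, the decomposition $\Pc_0=-\sum_{j,k}\tThbx_{e_j}\begin{pmatrix}G_{0;j,k}&0\\0&0\end{pmatrix}\Th_0^{e_k}$ that you write down is exactly the one the paper uses — but in the \emph{next} lemma (Lemma~\ref{lemma-prop-res-P0}), where it is legitimate precisely because Lemma~\ref{lem-estim-P0} has already supplied a uniform weighted bound on $(\tAAc-z)^{-1}$. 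So the decomposition is not the missing piece; the quadratic-form absorption argument is.
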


\begin{proof}
For $z \in \C_+$ we set 
\[
\Rp(z) = \big(\Pp -iz\bp(x)-z^2\wp(x)\big)\inv
\]
and
\[
R_0(z) = \big(\PG -iz\bp(x)-z^2\wp(x)\big)\inv.
\]
We observe from Proposition \ref{prop-Ac-max-diss} that 
\begin{equation} \label{equiv-Ac-Ro}
\nr{\Thbxo (\tAAc-z)^{-1} \tThbx}_{\mathscr{L}(\LL^{\k s_2},\LL^{-\k s_1})} \simeq \nr{\pppg x^{-\k s_1} \partial_x^{\b_x} R_0(z) \partial_x^{\tilde \b_x} \pppg x^{-\k s_2}}_{\mathscr L (L^2(\R^d))}.
\end{equation}
We have a similar estimate with $\tAAc$ and $R_0(z)$ replaced by $\AAp$ and $\Rp(z)$, respectively.

Let $\s_1,\s_2 \in \big[ 0 ,\frac d 2 \big[$. Let $\b_1,\b_2 \in \N^d$ with $\abs {\b_1} \leq 1$ and $\abs{\b_2} \leq 1$.
For $\vf \in \Sc$ we have 
\begin{eqnarray*}
\lefteqn{\nr{\nabla R_0(z) \partial_x^{\b_2} \pppg x^{-\k \s_2} \vf}_{L^2(\R^d)}^2}\\
&&\lesssim \innp{G(x) \nabla R_0(z)  \partial_x^{\b_2} \pppg x^{-\k \s_2} \vf}{\nabla R_0(z)  \partial_x^{\b_2} \pppg x^{-\k \s_2} \vf}\\
&& \lesssim \innp{\PG R_0(z) \partial_x^{\b_2} \pppg x^{-\k \s_2}  \vf}{R_0(z)  \partial_x^{\b_2} \pppg x^{-\k \s_2} \vf} \\
&& \lesssim \abs{\innp{\vf}{\pppg x^{-\k \s_2}\partial_x^{\b_2} R_0(z)\partial_x^{\b_2} \pppg x^{-\k \s_2} \vf}}\\
&& \quad + \abs{\innp{(iz\bp + z^2 \wp) R_0(z) \partial_x^{\b_2} \pppg x^{-\k \s_2}\vf}{R_0(z) \partial_x^{\b_2} \pppg x^{-\k \s_2}\vf}},
\end{eqnarray*}
hence
\begin{multline}\label{estim-nabla-Ro}
\nr{\nabla R_0(z) \partial_x^{\b_2} \pppg x^{-\k \s_2}}_{\mathscr{L}(L^2(\R^d))}\\
\lesssim \nr{\pppg x^{-\k \s_2}\partial_x^{\b_2} R_0(z)\partial_x^{\b_2} \pppg x^{-\k \s_2}}_{\mathscr{L}(L^2(\R^d))}^{\frac 12} + \abs z^{\frac 12} \nr{R_0(z)  \partial_x^{\b_2} \pppg x^{-\k \s_2}}_{\mathscr{L}(L^2(\R^d))}.
\end{multline}
On the other hand, the resolvent identity gives
\begin{align*}
\pppg x^{-\k \s_1} \partial_x^{\b_1} R_0(z) \partial_x^{\b_2} \pppg x^{-\k \s_2}
& = \pppg x^{-\k \s_1} \partial_x^{\b_1} \Rp(z) \partial_x^{\b_2} \pppg x^{-\k \s_2}\\
& - \pppg x^{-\k \s_1}\partial_x^{\b_1} \Rp(z) \Po R_0(z) \partial_x^{\b_2} \pppg x^{-\k \s_2}.
\end{align*}
Let $\tilde \s_1,\tilde \s_2 \in \big[0,\frac d 2 \big)$ be such that $\k \tilde \s_1 + \k \tilde \s_2 < \rho_G$. We have 
\begin{multline*}
\nr{\pppg x^{-\k \s_1}\partial_x^{\b_1} \Rp(z) \Po R_0(z) \partial_x^{\b_2} \pppg x^{-\k \s_2}}\\
\lesssim 
\nr{\pppg x^{-\k \s_1}\partial_x^{\b_1} \Rp(z) \nabla \pppg x^{-\k \tilde \s_2}}
\nr{\pppg x^{-\k \tilde \s_1} \nabla R_0(z) \partial_x^{\b_2} \pppg x^{-\k \s_2}}.
\end{multline*}
By \eqref{equiv-Ac-Ro} and Proposition \ref{prop-res-Rp} we obtain 
\begin{multline} \label{estim-Ro}
\nr{\pppg x^{-\k \s_1} \partial_x^{\b_1} R_0(z) \partial_x^{\b_2} \pppg x^{-\k \s_2}} \\
\lesssim \left( 1 + \abs{z}^{\frac {\abs{\b_1} + \abs{\b_2} + \s_1 + \s_2} 2 - 1}\right) 
+ \left( 1 + \abs z^{\frac {\abs{\b_1} + \s_1 + \tilde \s_2 - 1}2} \right) \nr{\pppg x^{- \k \tilde \s_1} \nabla R_0(z) \partial_x^{\b_2} \pppg x^{-\k \s_2}}.
\end{multline}
We first choose $\tilde \s_2 \in (0,1)$ and $\tilde \s_1 = 0$. We apply this estimate with $\b_1 = \b_2$ and $\s_1 = \s_2$ on the one hand, with $\b_1= 0$ and $\s_1 = 0$ on the other hand. This gives 
\begin{multline*} 
\nr{\pppg x^{-\k \s_2} \partial_x^{\b_2} R_0(z) \partial_x^{\b_2} \pppg x^{-\k \s_2}} \\
\lesssim \left( 1 + \abs{z}^{\abs{\b_2} + \s_2 - 1}\right) 
+ \left( 1 + \abs z^{\frac {\abs{\b_2} + \s_2  - 1}2} \right) \nr{\nabla R_0(z) \partial_x^{\b_2} \pppg x^{-\k \s_2}}
\end{multline*}
and
\[
\nr{R_0(z)  \partial_x^{\b_2} \pppg x^{-\k \s_2}} \lesssim \left( 1 + \abs{z}^{\frac {\abs{\b_2} + \s_2} 2 - 1}\right) 
+ \left( 1 + \abs z^{\frac {\tilde \s_2 - 1}2} \right) \nr{\nabla R_0(z) \partial_x^{\b_2} \pppg x^{-\k \s_2}}.
\]
Then \eqref{estim-nabla-Ro} gives
\begin{align*}
\nr{\nabla R_0(z) \partial_x^{\b_2} \pppg x^{-\k \s_2}}
& \lesssim \left( 1 + \abs z^{\frac {\abs {\b_2} + \s_2 - 1}2} \right) 
+ \left( 1 + \abs z^{\frac {\abs{\b_2} + \s_2 - 1}4} \right)  \nr{\nabla R_0(z) \partial_x^{\b_2} \pppg x^{-\k \s_2}}^{\frac 12}\\
& + \abs z^{\frac {\tilde \s_2}2} \nr{\nabla R_0(z) \partial_x^{\b_2} \pppg x^{-\k \s_2}}.
\end{align*}
For $z$ small enough this gives 
\begin{equation} \label{estim-nabla-Ro-bis}
\nr{\nabla R_0(z) \partial_x^{\b_2} \pppg x^{-\k \s_2}} \lesssim \left( 1 + \abs z^{\frac {\abs {\b_2} + \s_2 - 1}2} \right).
\end{equation}

Now we turn to the proof of
\begin{equation} \label{estim-Ro-2}
\nr{\pppg x^{-\k s_1} \partial_x^{\b_x} R_0(z) \partial_x^{\tilde \b_x} \pppg x^{-\k s_2}} \lesssim \left( 1 + \abs z^{\frac {\abs {\b_x} + |\tilde \b_x| + s_1 + s_2} 2 - 1} \right).
\end{equation}
With \eqref{equiv-Ac-Ro}, this will conclude the proof of the lemma.
Notice that it is enough to prove \eqref{estim-Ro-2} when $s_1 \leq 2 - \abs{\b_x}$. Indeed, the right-hand side does not really depend on $s_1 \geq 2 - \abs {\b_x}$, so if \eqref{estim-Ro-2} is proved for $s_1 = 2 - \abs{\b_x}$ it remains true for greater values of $s_1$. Similarly, it is enough to consider the case $|\tilde \b_x| + s_2 \leq 2$.

First assume that $\abs{\b_x} + s_1 \leq 1$. Then \eqref{estim-Ro-2} follows from \eqref{estim-Ro} applied with $\tilde \s_1 = 0$ and $\tilde \s_2 = \max(0,|\tilde \b_x| + s_2 -1)$. Then for $\abs{\b_x} + s_1 \in [1,2]$ we can apply \eqref{estim-Ro} with $\tilde \s_2 = 0$ and $\tilde \s_1 = \abs{\b_x} + s_1 - 1 \in [0,1]$. The proof is complete.
\end{proof}

\begin{lemma} \label{lemma-prop-res-P0}
If $\b_t = 0$ then the result of Proposition \ref{prop-res-general} holds with $\AAc$ replaced by $\tAAc$.
\end{lemma}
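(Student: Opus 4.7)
The plan is to proceed by induction on $m$, using the resolvent identity together with the block factorization
\[
\Pc_0 = -\sum_{j,k=1}^d \widetilde{\Th}_{e_j}\, G_{0,j,k}(x)\, \Th^{e_k}_0,
\]
which exposes the scalar multiplier $G_{0,j,k}(x)$ carrying the spatial decay $|G_{0,j,k}(x)| \leq C \pppg x^{-\rho_G}$. This decay is the sole mechanism through which the gain $\y/2$ in the $|z|$-exponent will enter the final estimate.

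For the base case $m=0$, I would apply the resolvent identity
\[
(\tAAc-z)^{-1} - (\AAp-z)^{-1} = -(\AAp-z)^{-1}\,\Pc_0\, (\tAAc-z)^{-1},
\]
insert the factorization of $\Pc_0$, and split the resulting product through intermediate weighted spaces $\LL^{-\k \tilde s_2}$ and $\LL^{\k \tilde s_1}$ with $\tilde s_1+\tilde s_2=\y$. Since $\k\y<\rho_G$, the multiplier $\pppg x^{\k(\tilde s_1+\tilde s_2)} G_{0,j,k}$ is in $L^\infty$ and the scalar multiplication by $G_{0,j,k}$ is bounded from $\LL^{-\k\tilde s_2}$ to $\LL^{\k\tilde s_1}$. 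The leftmost factor $\Thobt(\AAp-z)^{-1}\widetilde\Th_{e_j}$ is then controlled by Proposition~\ref{prop-res-Rp} and the rightmost factor $\Th^{e_k}_0(\tAAc-z)^{-1}$ by Lemma~\ref{lem-estim-P0}. The two $|z|$-exponents sum exactly to $\tfrac\y2+\tfrac{|\b_x|}2+\tfrac{s_1+s_2}2-1$, and in the substantive regime where this target is negative I would distribute $(\tilde s_1,\tilde s_2)$ so that each individual exponent remains non-positive, preventing the ``$1+$'' structure of the sub-bounds from producing spurious singularities.

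For the inductive step at $m\geq 1$, I would differentiate the resolvent identity to get
\[
(\tAAc-z)^{-1-m} - (\AAp-z)^{-1-m} = -\sum_{k=0}^{m} (\AAp-z)^{-1-k}\, \Pc_0\, (\tAAc-z)^{-1-(m-k)}.
\]
For $k\geq 1$ the inner resolvent has index $m-k<m$, so it can be split into its $\AAp$-part (handled by Proposition~\ref{prop-res-Rp} after expanding $\Pc_0$) plus the difference (handled by the inductive hypothesis with $\b_x=e_l$). The $k=0$ summand still contains $(\tAAc-z)^{-1-m}$ and must be treated by iterating the resolvent identity a further $N$ times inside the inner factor, which rewrites it as a sum of $N$ terms involving only $\AAp$-resolvents interleaved with $\Pc_0$'s (each controlled as in the base case and each picking up an additional $G_0$-weight) plus a remainder sandwiching $(\tAAc-z)^{-1-m}$ between $N+1$ copies of $\Pc_0$. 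Choosing $N$ appropriately and redistributing the accumulated $G_0$-weights allows the remainder to be closed by Lemma~\ref{lem-estim-P0} in its $m=0$ sandwich form acting on a sufficiently weighted space.

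The principal obstacle will be this circular appearance of $(\tAAc-z)^{-1-m}$ in the $k=0$ term; the key is that each additional iteration of the resolvent identity provides one more factor of $G_0$ and hence up to $\rho_G$ further units of available spatial weight, so that after enough iterations the residual sandwich can be reduced back to the known $m=0$ estimate on $(\tAAc-z)^{-1}$. Throughout, the most delicate bookkeeping is in distributing the total weight gain $\y$ across the intermediate spaces so that the individual $|z|$-exponents stay non-positive in the borderline regimes, which is precisely what ensures additive (rather than sub-additive) combination of the factor bounds into the target exponent.
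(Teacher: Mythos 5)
Your base case is essentially the paper's: resolvent identity, block factorization of $\Pc_0$, splitting of the $G_0$-decay into intermediate weights, and combining Lemma~\ref{lem-estim-P0} with Proposition~\ref{prop-res-Rp}, with a case discussion on the intermediate exponents to avoid the spurious ``$1+$'' singularities.

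The inductive step, however, has a genuine gap. Starting from the naive expansion
$(\tAAc-z)^{-1-m}-(\AAp-z)^{-1-m} = -\sum_{k=0}^m(\AAp-z)^{-1-k}\Pc_0(\tAAc-z)^{-1-(m-k)}$
and iterating inside the $k=0$ term only relocates the factor $(\tAAc-z)^{-1-m}$; it never lowers its power. Your claim that the leftover piece can be closed ``by Lemma~\ref{lem-estim-P0} in its $m=0$ sandwich form on a sufficiently weighted space'' cannot work: Lemma~\ref{lem-estim-P0} estimates only the single resolvent $(\tAAc-z)^{-1}$, and there is no way to reduce $(\tAAc-z)^{-1-m}$ to a product of admissible single-resolvent factors (inserting $1 = \pppg x^{\k b}\pppg x^{-\k b}$ between two factors puts a positive weight on one side, which the lemma does not cover, and the weights are anyway capped at $d/2$). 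Nor can one absorb the remainder: the factors $(\AAp-z)^{-1}\Pc_0$ picked up by iterating are $O(1)$ in weighted norm, not $o(1)$ as $z\to 0$ --- the smallness used for absorption in the proof of Lemma~\ref{lem-res-D0-single} comes from $\Dc_0$ via Remark~\ref{rem-colonne-droite}, not from $\Pc_0$. The missing idea is the paper's algebraic reduction: writing $(\tAAc-z)^{-1-m} = \bigl(1-(\tAAc-z)\inv\Pc_0\bigr)(\AAp-z)\inv(\tAAc-z)^{-m}$ and expanding only $(\tAAc-z)^{-m}$ produces identity \eqref{dec-AAc-m}, in which every $\tAAc$-resolvent on the right has power at most $m$, so the inductive hypothesis (combined with Lemma~\ref{lem-estim-P0} and Proposition~\ref{prop-res-Rp}) applies to each factor without circularity.
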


\begin{proof}
We begin with the case $m = 0$. The resolvent identity between $\tAAc$ and $\AAp$ reads 
\begin{equation} \label{res-id-2}
\begin{aligned}
(\tAAc-z)\inv - (\AAp-z)\inv
& = - (\AAp-z)\inv \Pc_0 (\tAAc-z)\inv\\
& = - (\tAAc-z)\inv \Pc_0 (\AAp-z)\inv.
\end{aligned}
\end{equation}
We can write
\[
\Pc_0 = - \sum_{1\leq j,k\leq d} \tilde \Th_{e_j} \begin{pmatrix} G_{0;j,k}(x) & 0 \\ 0 & 0 \end{pmatrix} \Th_0^{e_k},
\] 
where $(e_1,\dots,e_d)$ is the canonical basis in $\R^d$. For $\s_1,\s_2 \in \big[0,\frac d 2 \big)$ such that $\k \s_1 + \k \s_2 < \rho_G$ we obtain by Lemma \ref{lem-estim-P0}
\[
\nr{\Thbxo \big( (\tAAc-z)\inv - (\AAp-z)\inv \big)}_{\mathscr{L}(\LL^{\k s_2},\LL^{-\k s_1})} \lesssim \left( 1 + \abs z^{\frac {\abs{\b_x} + s_1 + \s_2 - 1}2} \right)  \left( 1 + \abs z^{\frac {\s_1 + s_2 - 1}2} \right).
\]
If $\abs{\b_x} + s_1 \geq 1$ and $s_2 \geq 1$ we can apply this inequality with $\s_1 = \s_2 = 0$ to conclude. If $\abs{\b_x} + s_1 \geq 1$ we can take $\s_2 = 0$ and $\s_1 = \abs{\b_x} + s_1 - 1 + \y$. If $s_2 \geq 1$ we can take $\s_1 = 0$ and $\s_2 = s_2 - 1 + \y$. Finally, if $\abs{\b_x} + s_1 < 1$ (then $\abs {\b_x} = 0$) and $s_2 < 1$ we choose $\s_1 \in [0,1-s_2]$ and $\s_2 \in [0,1-s_1]$ in such a way that
\[
\s_1 + \s_2 = \min(2 - s_1 - s_2, \y).
\]
This conclude the case $m = 0$.

Then we proceed by induction on $m$. With \eqref{res-id-2} we can check that 
\[
(\tAAc-z)^{-1-m} = \left( 1 - (\tAAc-z)\inv \Pc_0 \right) (\AAp-z)\inv (\tAAc-z)^{-m},
\]
and
\[
(\tAAc - z)^{-m} =  (\AAp - z)^{-m} - \sum_{k=1}^{m}  (\AAp-z)^{-k} \Pc_0 (\tAAc-z)^{k-m-1}.
\]
This gives
\begin{equation*}
(\tAAc-z)^{-1-m} = \left( 1 - (\tAAc-z)\inv \Pc_0 \right) \left((\AAp-z)^{-1-m} - \sum_{k=1}^{m}  (\AAp-z)^{-1-k} \Pc_0 (\tAAc-z)^{k-m-1}\right),
\end{equation*}
hence
\begin{equation} \label{dec-AAc-m}
\begin{aligned}
(\tAAc-z)^{-1-m} - (\AAp-z)^{-1-m} 
& = - (\tAAc-z)\inv \Pc_0 (\AAp-z)^{-1-m}\\
& - \left( 1 - (\tAAc-z)\inv \Pc_0 \right)  \sum_{k=1}^{m}  (\AAp-z)^{-1-k} \Pc_0 (\tAAc-z)^{k-m-1}.
\end{aligned}
\end{equation}
The interest of this decomposition is that we only have factors for which we can use the inductive assumption. We choose $k \in \Ii 1 m$ and estimate
\[
T_{m,k} = (\tAAc-z)\inv \Pc_0 (\AAp-z)^{-1-k} \Pc_0 (\tAAc-z)^{k-m-1}.
\]
We have 
\[
\nr{\Thbxo T_{m,k}}_{\mathscr{L}(\LL^{\k s_2},\LL^{-\k s_1})} \lesssim 
\left( 1 + \abs z^{\frac {\abs {\b_x} + s_1 + \s_2 - 1} 2} \right)
\left( 1 + \abs z^{\frac {\s_1 + \tilde \s_2} 2 -k} \right)
\left( 1 + \abs z^{\frac {\tilde \s_1 + s_2 - 1} 2 -m+k} \right),
\]
where $\s_1, \s_2,\tilde \s_1,\tilde \s_2 \in \big[0,\frac d 2)$ are such that $\k \s_1 + \k \s_2 < \rho_G$ and $\k \tilde \s_1 + \k \tilde \s_2 < \rho_G$. Then we play the same game as above, except that we have four parameters to choose.

Assume that $\abs {\b_x} + s_1 \geq 2k + 1$. Then we can take $\s_2 = 0$, $\s_1 = 2k$, $\tilde \s_2 = 0$ and $\tilde \s_1 = \abs {\b_x} + s_1 - 1 - 2k + \y$. Similarly, if $s_2 \geq 2(m+1) - 1$ we take $\tilde \s_1 = 0$, $\tilde \s_2 = 2k$, $\s_1 = 0$ and $\s_2 = s_2 - 2(m+1) + 1 + \y$.

Now assume that $\abs {\b_x} + s_1 < 2k + 1$ and $s_2 < 2(m+1) - 1$. If $\abs {\b_x} + s_1 < 1$ (then $\abs {\b_x} = 0$) then we take $\s_2 = \min(1-s_1,\y)$ and $\s_1 = \y - \s_2$. If $\abs {\b_x} + s_1 > 1$ then we take $\s_2 = 0$ and $\s_1 = \abs{\b_x} + s_1 - 1 + \y$. If $s_2 < 1$ we take $\tilde \s_1 = \min(1-s_2,\y)$ and $\tilde \s_2 = \y - \tilde \s_1$. Finally, if $s_2 > 1$ then we take $\tilde \s_1 = 0$ and $\tilde \s_2 = \s_2 - 1 + \y$. We can check that in any case we have 
\[
\nr{\Thbxo T_{m,k}}_{\mathscr{L}(\LL^{\k s_2},\LL^{-\k s_1})} \lesssim 1 + \abs z^{\frac {\abs{\b_x} + s_1 + s_2 + \y}2 - (1+m)}.
\]
The other terms in \eqref{dec-AAc-m} are estimated similarly, and the proof is complete. 
\end{proof}

\begin{remark}
With Lemma \ref{lem-estim-P0} and Lemma \ref{lem-prop-res-bt} applied with $\ao = \wo = 0$ we obtain 
\begin{equation} \label{estim-sans-Thbb}
\nr{(\tAAc-z)^{-1-m}}_{\mathscr{L}(\LL^{\k s_2},\LL^{-\k s_1})} \leq C \left( 1 + \abs z^{\frac {s_1 + s_2}2 - (1 + m)}\right).
\end{equation}
\end{remark}

It remains to add the contribution of $\Dc_0$. We begin with an estimate of the powers of $(\AAc -z)\inv$.

\begin{lemma} \label{lem-res-D0-single}
Let $s_1,s_2 \in \big[0,\frac d 2)$ and $\k > 1$. Then there exists $C \geq 0$ such that for $z \in \C_+$ with $\abs z \leq 1$ we have 
\[
\nr{(\AAc-z)^{-1-m}}_{\mathscr{L}(\LL^{\k s_2},\LL^{-\k s_1})} \leq C \left( 1 + \abs z ^{\frac {s_1 + s_2} 2 - (1+m)} \right).
\]
\end{lemma}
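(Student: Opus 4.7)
The strategy mirrors that of Lemma \ref{lemma-prop-res-P0}, treating $\AAc = \tAAc + \Dc_0$ as a decaying perturbation of $\tAAc$ and exploiting the already established estimate \eqref{estim-sans-Thbb} for $\tAAc$. The structural key is that $\Dc_0$ has zero first column, so Remark \ref{rem-colonne-droite}, applied with $\tAAc$ in place of $\AAc$, yields
\[
\Dc_0(\tAAc-z)\inv = M_0 + zM_0(\tAAc-z)\inv,
\]
where $M_0$ is multiplication by a bounded matrix whose entries decay like $\pppg x^{-\rho_a}$ (and whose first column is also zero); an analogous identity holds with $\AAc$ in place of $\tAAc$.

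For the base case $m=0$ I would iterate the resolvent identity
\[
(\AAc-z)\inv = (\tAAc-z)\inv - (\tAAc-z)\inv \Dc_0 (\AAc-z)\inv
\]
to obtain, for any $N \geq 1$,
\[
(\AAc-z)\inv = \sum_{k=0}^{N-1}(-1)^k(\tAAc-z)\inv[\Dc_0(\tAAc-z)\inv]^k + (-1)^N[(\tAAc-z)\inv\Dc_0]^N(\AAc-z)\inv.
\]
Repeated application of Remark \ref{rem-colonne-droite} rewrites each finite term as a linear combination of products of $(\tAAc-z)\inv$ and multiplication operators $M_0$, the latter each gaining up to $\rho_a$ units of weight decay. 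Combined with \eqref{estim-sans-Thbb} and a careful distribution of the weights $\k s_1$ and $\k s_2$ across the successive factors (exactly as in Lemma \ref{lemma-prop-res-P0}), each of these finitely many terms is bounded by $C(1+|z|^{(s_1+s_2)/2-1})$.

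The remainder $R_N := [(\tAAc-z)\inv \Dc_0]^N (\AAc-z)\inv$ is handled by expanding the terminal factor $\Dc_0(\AAc-z)\inv$ via Remark \ref{rem-colonne-droite} as well, and then using the crude bound $\nr{(\AAc-z)\inv}_{\LL\to\LL}\leq C/\Im(z)$ from Proposition \ref{prop-Ac-max-diss}. Provided $N$ is chosen so large that the cumulative weight gain $N\rho_a$ from the $N$ copies of $M_0$ dominates $\k(s_1+s_2)$, and so that the powers of $z$ supplied by the expansion $M_0 + zM_0(\tAAc-z)\inv$ accumulate sufficiently to absorb the $1/\Im(z)$ factor, the remainder becomes uniformly bounded on $\C_+\cap\{|z|\leq 1\}$. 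The inductive step for $m \geq 1$ follows by differentiating the resolvent identity $m$ times to get
\[
(\AAc-z)^{-1-m} = (\tAAc-z)^{-1-m} - \sum_{j+k=m}(\AAc-z)^{-1-j}\Dc_0(\tAAc-z)^{-1-k},
\]
isolating the $j=m$ contribution through the $m$-th derivative version of Remark \ref{rem-colonne-droite},
\[
\Dc_0(\AAc-z)^{-1-m} = M_0(\AAc-z)^{-m} + zM_0(\AAc-z)^{-1-m},
\]
and invoking the inductive hypothesis together with \eqref{estim-sans-Thbb}.

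The main obstacle is the uniform control of $R_N$ as $\Im(z)\to 0$: the singular factor $1/\Im(z)$ from the terminal $(\AAc-z)\inv$ has to be tamed not by an a priori estimate (which is what we are trying to prove) but by the combined effect of $N$ weight gains of size $\rho_a$ and the accompanying powers of $z$ produced by the iterated use of Remark \ref{rem-colonne-droite}. Executing this absorption requires a combinatorial bookkeeping of weights and $z$-powers across the $N$ factors analogous to, but more delicate than, the one carried out in the proof of Lemma \ref{lemma-prop-res-P0}.
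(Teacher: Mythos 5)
Your strategy of expanding the resolvent identity into a Neumann-type series with remainder $R_N := [(\tAAc-z)\inv \Dc_0]^N (\AAc-z)\inv$ runs into a genuine obstruction when you try to bound the remainder. You propose to expand the terminal $\Dc_0(\AAc-z)\inv = M_0 + zM_0(\AAc-z)\inv$ and then invoke $\nr{(\AAc-z)\inv}_{\mathscr{L}(\LL)} \leq C/\Im(z)$, expecting the accumulated powers of $z$ to tame the $1/\Im(z)$ singularity. This cannot work: $\abs z^k/\Im(z)$ is unbounded on $\C_+\cap\set{\abs z\leq 1}$ for every fixed $k$ (take $z = \e + i\e^{k+1}$ with $\e \to 0$). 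Moreover you cannot iterate Remark~\ref{rem-colonne-droite} to manufacture more powers of $z$ in front of the remaining $(\AAc-z)\inv$: after one application you are left with $zM_0(\AAc-z)\inv$, and $M_0$ has zero \emph{second} column, not zero first column, so the remark does not apply to $M_0(\AAc-z)\inv$ and the terminal resolvent cannot be expanded further. The other factors $\Dc_0(\tAAc-z)\inv$ in the chain do produce $z$'s, but those are already paired with well-controlled $(\tAAc-z)\inv$ factors and do nothing to absorb the $1/\Im(z)$ coming from the uncontrolled terminal resolvent.

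The paper's proof does not iterate into a series at all. It applies the resolvent identity once, uses Remark~\ref{rem-colonne-droite} once to convert $\Dc_0(\AAc-z)\inv$ into $M_0 + zM_0(\AAc-z)\inv$, distributes the weights, and arrives at a self-referential inequality of the schematic form
$\nr{\langle x\rangle^{-\k s_1}(\AAc-z)\inv\langle x\rangle^{-\k s_2}} \lesssim (1+\abs z^{(s_1+s_2)/2-1}) + (\abs z + \abs z^{(s_1+\s_2)/2})\,\nr{\langle x\rangle^{-\k\s_1}(\AAc-z)\inv\langle x\rangle^{-\k s_2}}$,
with $\tilde\k\s_1+\tilde\k\s_2<\rho_a$. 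The unknown quantity is kept on the right-hand side rather than replaced by the crude a priori bound. For $s_1=0$ one takes $\s_1=0$ and $\s_2>0$, so the coefficient $\abs z + \abs z^{\s_2/2}$ is strictly less than $1$ for $z$ small and the term is absorbed into the left-hand side; for $z$ bounded away from $0$ the resolvent is already uniformly bounded. Then for $s_1 > 0$ one takes $\s_1=0$ and invokes the just-established $s_1=0$ case. This bootstrap is the essential mechanism you are missing: you must not discard the unknown weighted norm in favour of $C/\Im(z)$, you must retain it and exploit the smallness of the coefficient near $z=0$.
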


\begin{proof}
The resolvent identity between $\AAc$ and $\tAAc$ reads
\begin{equation} \label{res-id}
(\AAc-z)\inv = (\tAAp-z)\inv - (\tAAc-z)\inv \Dc_0 (\AAc-z)\inv.
\end{equation}
We can apply Remark \ref{rem-colonne-droite} to the operator $\Dc_0$. Moreover its coefficients decay according to \eqref{hyp-vanishing}. Thus, if $\s_1,\s_2 \in \big[ 0, \frac d 2 \big)$ and $\tilde \k > 1$ are such that $\tilde \k \s_1 + \tilde \k \s_2 < \rho_a$, we have by Lemma \ref{lem-estim-P0} and \eqref{estim-sans-Thbb}
\begin{eqnarray} \label{estim-tAAc-sans-Thbb}
\lefteqn{\nr{\pppg x^{-\k s_1} (\tAAc -z)\inv \pppg x^{-\k s_2}}}\\
\nonumber
&& \leq \nr{\pppg x^{-\k s_1} (\AAp -z)\inv \pppg x^{-\k s_2}}\\
&& \quad + \abs z \nr{\pppg x^{-\k s_1} (\AAp-z)\inv \pppg x^{-\tilde \k \s_2}} \nr{ \pppg x^{-\tilde \k \s_1} (\tAAc -z)\inv \pppg x^{-\k s_2}}\\
\nonumber
&& \lesssim \left( 1 + \abs{z}^{\frac {s_1 + s_2} 2 - 1} \right) + \left( \abs z + \abs{z}^{\frac {s_1 + \s_2} 2} \right) \nr{ \pppg x^{-\k \s_1}(\tAAc -z)\inv \pppg x^{-\k s_2}}.
\end{eqnarray}
If $s_1 = 0$ we apply this inequality with $\s_1 = 0$ and $\s_2 > 0$. This gives the required estimate for $z$ small enough (which is enough since we know that the resolvent is uniformly bounded outside some neighborhood of 0). Then if $s_1 \leq 2$ we apply \eqref{estim-tAAc-sans-Thbb} with $\s_1 = 0$ and $\s_2 = \max(0,s_2 -2)$ and get the same conclusion. Finally if $s_1 > 2$ we simply take $\s_1 = \s_2 = 0$, which concludes the case $m=0$.

Then we proceed by induction on $m$. With \eqref{res-id} we can check that
\begin{equation} \label{res-id-3}
(\AAc - z)^{-1-m} - (\tAAc - z)^{-1-m} = - \sum_{k=0}^m  (\tAAc-z)^{-1-k} \Dc_0 (\AAc-z)^{-1-m+k}.
\end{equation}
For $m \in \N$ and $k \in \Ii 0 m$ we set
\begin{equation} \label{def-Tmk}
T_{m,k}(z) =  (\tAAc - z)^{-1-k} \Dc_0 (\AAc-z)^{-1-m+k}.
\end{equation}
If $k = m$ we obtain by Remark \ref{rem-colonne-droite} 
\begin{multline*}
\nr{\pppg x^{-\k s_1} T_{m,k}(z) \pppg x^{-\k s_2}}
\lesssim \nr{\pppg x^{-\k s_1} (\tAAc-z)^{-1-m} \pppg x^{-\k s_2}}\\
+ \abs z \nr{\pppg x^{-\k s_1} (\tAAc-z)^{-1-m} \pppg x^{-\k \s_2}} \nr{\pppg x^{-\k \s_1} (\AAc-z)^{-1}  \pppg x^{-\k s_2}},
\end{multline*}
where, again, $\s_1,\s_2 \in \big[ 0, \frac d 2 \big)$ are such that $\k \s_1 + \k\s_2 < \rho_a$. Using the inductive assumption for the last factor, \eqref{estim-sans-Thbb} for the others, and choosing $\s_1$ and $\s_2$ suitably as above, we obtain
\begin{equation} \label{estim-Tmk}
\nr{\pppg x^{-\k s_1} T_{m,k}(z) \pppg x^{-\k s_2}} \lesssim 1 + \abs z^{\frac {s_1 + s_2} 2 - (1+m)}.
\end{equation}
For $k \in \Ii 0 {m-1}$ we use Remark \ref{rem-colonne-droite} and obtain 
\begin{eqnarray} \label{dec-Tmk}
\lefteqn{\nr{\pppg x^{-\k s_1} T_{m,k} \pppg x^{-\k s_2}}}\\
\nonumber
&& \leq \nr{\pppg x^{-\k s_1} (\tAAc-z)^{-1-k} \pppg x^{-\k \tilde \s_2}} \nr{\pppg x^{-\k \tilde \s_1} (\AAc-z)^{-m+k}  \pppg x^{-\k s_2}}\\
\nonumber
&& + \abs z \nr{\pppg x^{-\k s_1} (\tAAc-z)^{-1-k} \pppg x^{-\k \s_2}} \nr{\pppg x^{-\k \s_1} (\AAc-z)^{-1-m+k}  \pppg x^{-\k s_2}},
\end{eqnarray}
where $\tilde \s_1,\tilde \s_2,\s_1,\s_2 \in \big[0,\frac d 2\big)$ are such that $\k\tilde \s_1 + \k\tilde \s_2 \leq \rho_a$ and $\k \s_1 + \k \s_2 \leq \rho_a$. We proceed as above to obtain \eqref{estim-Tmk} if $k \in \Ii 1 {m-1}$. For $k = 0$ we get 
\begin{multline*}
\nr{\pppg x^{-\k s_1} T_{m,0}(z) \pppg x^{-\k s_2}}\\
\lesssim \left(1 + \abs z^{\frac {s_1 + s_2} 2 - (1+m)} \right) + \left( \abs z + \abs z^{\frac {s_1 + \s_2}2} \right) \nr{\pppg x^{-\k \s_1} (\AAc-z)^{-1-m}  \pppg x^{-\k s_2}}.
\end{multline*}
Finally,
\begin{multline*}
\nr{\pppg x^{-\k s_1} (\AAc-z)^{-1-m} \pppg x^{-\k s_2}}\\
\lesssim \left(1 + \abs z^{\frac {s_1 + s_2} 2 - (1+m)} \right) +  \left( \abs z + \abs z^{\frac {s_1 + \s_2}2} \right) \nr{\pppg x^{-\k \s_1} (\AAc-z)^{-1-m}  \pppg x^{-\k s_2}}.
\end{multline*}
As for the case $m = 0$, we conclude with $\s_1 = 0$ and $\s_2 > 0$ if $s_1 = 0$ and then with $\s_1 = 0$ and $\s_2 = \max(0,s_2-2(m+1))$ if $s_1 \leq 2$. Then we proceed by induction on the integer part of $\frac {s_1} 2$. If $s_1 \in (2k,2(k+1)]$ for some $k \geq 1$, then we choose $\s_2 = 0$ and $\s_1 = s_1 - 2 \in (2(k-1),2k]$ to conclude the proof.
\end{proof}

Finally the following lemma will conclude the proof of Proposition \ref{prop-res-general}.

\begin{lemma} \label{lemma-prop-res-D0}
The result of Proposition \ref{prop-res-general} holds if $\b_t = 0$.
\end{lemma}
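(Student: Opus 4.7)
The plan is to imitate the structure of Lemma \ref{lemma-prop-res-P0} and Lemma \ref{lem-res-D0-single}, using $\tilde \Ac$ as an intermediate step between $\AAp$ and $\Ac$. We write
\[
\Thbxo \big((\Ac-z)^{-1-m} - (\AAp-z)^{-1-m}\big) = \Thbxo \big((\tAAc-z)^{-1-m}-(\AAp-z)^{-1-m}\big) + \Thbxo \big((\Ac-z)^{-1-m}-(\tAAc-z)^{-1-m}\big).
\]
The first term on the right-hand side is already estimated by Lemma \ref{lemma-prop-res-P0}, so it remains to bound the second. For this I would use the resolvent identity in the form \eqref{res-id-3} (with $\Ac$ and $\tAAc$ playing the roles of $\Ac$ and $\tAAc$ there):
\[
(\Ac-z)^{-1-m} - (\tAAc-z)^{-1-m} = -\sum_{k=0}^{m} T_{m,k}(z), \qquad T_{m,k}(z)=(\tAAc-z)^{-1-k}\Dc_0(\Ac-z)^{-1-m+k},
\]
exactly as in \eqref{def-Tmk}.

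To estimate each $T_{m,k}$, the key points are that $\Dc_0$ has the form treated in Remark \ref{rem-colonne-droite} and that its coefficients decay like $\pppg x^{-\rho_a}$. Applying Remark \ref{rem-colonne-droite} to the factor $\Dc_0(\Ac-z)^{-1-m+k}$ lets me replace $\Dc_0$ by an operator with entries decaying like $\pppg x^{-\rho_a}$, at the price of either lowering the power of $(\Ac-z)\inv$ by one or introducing an extra factor $|z|$. After splitting the weight $\pppg x^{-\kappa \rho_a}$ between the two factors as $\pppg x^{-\kappa \sigma_2}\pppg x^{-\kappa \sigma_1}$ (with $\kappa\sigma_1+\kappa\sigma_2<\rho_a$), I would estimate:
\begin{itemize}
\item the left factor $\pppg x^{-\kappa s_1}\Thbxo (\tAAc-z)^{-1-k}\pppg x^{-\kappa\sigma_2}$ by combining Lemma \ref{lemma-prop-res-P0} with Proposition \ref{prop-res-Rp} (writing $\tAAc = \AAp + \Pc_0$ and using $(\tAAc-z)^{-1-k} = (\AAp-z)^{-1-k} + \text{difference}$);
\item the right factor $\pppg x^{-\kappa\sigma_1}(\Ac-z)^{-1-m+k}\pppg x^{-\kappa s_2}$ by Lemma \ref{lem-res-D0-single}.
\end{itemize}
For interior indices $k\in\{1,\dots,m-1\}$ both factors carry genuine resolvent powers and the weight choice can be made as in \eqref{dec-Tmk}. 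For $k=0$ and $k=m$ one has to be slightly careful: the boundary term involves either $(\Ac-z)^{-1-m}$ or $\Thbxo(\tAAc-z)^{-1-m}$ with full weights, so one first uses Remark \ref{rem-colonne-droite} to reduce to an estimate that can be absorbed on the left-hand side for $z$ small, just as at the end of the proof of Lemma \ref{lem-res-D0-single}.

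It remains, for each $k$, to distribute the auxiliary weights $\sigma_1,\sigma_2,\tilde\sigma_1,\tilde\sigma_2 \in [0,d/2)$ so that $\kappa\sigma_1+\kappa\sigma_2<\rho_a$, $\kappa\tilde\sigma_1+\kappa\tilde\sigma_2<\rho_a$, the individual weights stay below $d/2$, and the resulting exponent of $|z|$ equals $|\beta_x|/2 + (s_1+s_2)/2 + \eta/2 - (1+m)$. This is the main technical obstacle, and it is exactly here that hypothesis \eqref{hyp-eta} is used: it guarantees enough room on each axis to spend an extra $\eta/2$ worth of decay from $\Dc_0$ beyond what is needed to compensate $s_1+s_2$. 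The case distinction is the same as at the end of the proof of Lemma \ref{lemma-prop-res-P0}: depending on the sizes of $|\beta_x|+s_1$ and $s_2$ relative to $1$ (and to $2k+1$, $2(m+1)-1$ for higher $m$), one chooses the four weights to put as much decay as possible on whichever factor is the most singular at $z=0$, and then verifies that all constraints are met thanks to \eqref{hyp-eta}. Collecting all the contributions yields the claimed estimate, which combined with Lemma \ref{lemma-prop-res-P0} completes the proof.
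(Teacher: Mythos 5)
Your proposal follows essentially the same route as the paper: split $\AAc$-vs-$\AAp$ into the $\tAAc$-vs-$\AAp$ piece handled by Lemma \ref{lemma-prop-res-P0} and the $\AAc$-vs-$\tAAc$ piece, expand the latter via the resolvent identity \eqref{res-id-3} into the terms $T_{m,k}(z)$, apply Remark \ref{rem-colonne-droite} to the factor $\Dc_0(\AAc-z)^{-1-m+k}$, estimate the left factor $\Thbxo(\tAAc-z)^{-1-k}$ by combining Proposition \ref{prop-res-Rp} with Lemma \ref{lemma-prop-res-P0} and the right factor by Lemma \ref{lem-res-D0-single}, and finally distribute the auxiliary weights using \eqref{hyp-eta}. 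One small correction: the absorption-for-small-$z$ device you propose for $k=0$ and $k=m$ is not actually needed here, in contrast to the proof of Lemma \ref{lem-res-D0-single}. There the factor $(\AAc-z)^{-1-m}$ appearing in $T_{m,0}$ was the very quantity being estimated, so the inequality was self-referential; here the quantity being estimated is the \emph{difference} $\Thbxo\big((\AAc-z)^{-1-m}-(\AAp-z)^{-1-m}\big)$, while the right factor in $T_{m,0}$ is the plain power $(\AAc-z)^{-1-m}$, already controlled by the previously established Lemma \ref{lem-res-D0-single}, and the left factor in $T_{m,m}$ is $\Thbxo(\tAAc-z)^{-1-m}$, controlled by Proposition \ref{prop-res-Rp} together with Lemma \ref{lemma-prop-res-P0}. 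So no self-reference occurs and all boundary cases close directly.
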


\begin{proof}
We start again from \eqref{res-id-3} and use the notation \eqref{def-Tmk}. We consider the case $k \in \Ii 0 {m-1}$. By an estimate analogous to \eqref{dec-Tmk} we obtain 
\begin{align*} 
\nr{\pppg x^{-\k s_1} \Thbxo T_{m,k} \pppg x^{-\k s_2}}
& \leq \left( 1 + \abs z^{\frac {\abs {\b_x} + s_1 + \tilde \s_2} 2 - (1+k)} \right) \left( 1 + \abs z^{\frac {\tilde \s_1 + s_2} 2 - (m-k)} \right)  \\
& \leq \abs z \left( 1 + \abs z^{\frac {\abs {\b_x} + s_1 + \s_2} 2 - (1+k)} \right) \left( 1 + \abs z^{\frac {\s_1 + s_2} 2 - (1 + m-k)} \right),
\end{align*}
where $\tilde \s_1,\tilde \s_2,\s_1,\s_2 \in \big[0,\frac d 2\big)$ are such that $\k\tilde \s_1 + \k\tilde \s_2 \leq \rho_a$ and $\k \s_1 + \k \s_2 \leq \rho_a$. Choosing suitably these coefficients in the same spirit as above we get the estimates for the contributions of $T_{m,k}(z)$ for $k \in \Ii 0 {m-1}$. The case $k = m$ is similar, and the proof is complete.
\end{proof}

\subsection{Energy decay}

In this final paragraph we use the resolvent estimates of Proposition \ref{prop-res-general} to prove Theorem \ref{th-diff-decay}. 
We recall from \cite{MallougRo} the following lemma. See also \cite{Dewez}.

\begin{lemma} \label{lem-res-time}
Let $\Kc$ be a Hilbert space and let $I$ be an open bounded interval of $\R$. Let $\n \geq 0$, $\nu_0 > \n$ and $C \geq 0$.
Let $\f \in C_0^\infty(I,\Kc)$ and $\p \in C^\infty(I,\C)$. Assume that for $m \in \N$ with $m \leq \n_0 + 1$ and $\t \in I$ we have  
\[
\nr{\f^{(m)}(\t)}_\Kc \leq C \left( 1 + \abs \t^{\n_0 - 1 -m}\right),
\] 
\[
\abs{\p^{(m)}(\t)} \leq C \quad \text{and} \quad \abs{\p'(\t)} \geq \frac 1 C.
\]
Then there exists $c \geq 0$ which only depends on $I$, $\n$, $\n_0$ and $C$ such that for all $t \geq 0$ we have 
\[
\nr{\int_{I} e^{-it\p(\t)} \f(\t)\, d\t}_{\Kc} \leq c \pppg t^{-\n} \exp\left(t \sup_I \Im (\p) \right).
\]
\end{lemma}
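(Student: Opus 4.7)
The plan is to prove this by standard non-stationary phase, exploiting the hypothesis $|\p'(\t)|\geq 1/C$ to trade smoothness of $\f$ for decay in $t$. First, since $|e^{-it\p(\t)}| = e^{t\,\Im\p(\t)} \leq \exp(t \sup_I \Im \p)$, the exponential factor appearing in the conclusion can be pulled out of the integral at the very beginning, and one is reduced to estimating $\nr{\int_I e^{-it\p(\t)}\f(\t)\,d\t}_\Kc$ assuming $\Im \p \leq 0$ on $I$. For bounded $t$ (say $t \leq 1$) the result is trivial: by the direct bound $\nr{\int_I e^{-it\p}\f\,d\t}_\Kc \leq \int_I \nr{\f}_\Kc\,d\t \leq C\int_I (1+|\t|^{\nu_0-1})\,d\t$, which is finite because $\nu_0 > 0$ and $I$ is bounded.

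For $t \geq 1$ I would introduce the operator $L := (-it\p'(\t))^{-1}\partial_\t$, chosen so that $L(e^{-it\p(\t)}) = e^{-it\p(\t)}$, with formal transpose $L^*g = (it)^{-1}\partial_\t\bigl(g/\p'(\t)\bigr)$. The hypotheses $|\p^{(m)}|\leq C$ and $|\p'|\geq 1/C$, combined with the Leibniz rule, ensure that $(L^*)^k$ produces only bounded coefficients, so that
\[
\nr{(L^*)^k \f(\t)}_\Kc \lesssim t^{-k}\sum_{j=0}^k \nr{\f^{(j)}(\t)}_\Kc \lesssim t^{-k}\bigl(1+|\t|^{\nu_0-1-k}\bigr),
\]
valid for every integer $k$ with $k \leq \nu_0+1$. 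Since $\nu_0 > \nu$, one can choose an integer $k$ with $\nu < k \leq \nu_0+1$. The compact support of $\f$ in $I$ kills the boundary terms from integration by parts at $\partial I$, but the integrand $(L^*)^k\f$ may fail to be integrable at $\t=0$ when $k > \nu_0$, which forces the introduction of a cutoff.

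I would therefore take a smooth cutoff $\chi_\e(\t) = \chi(\t/\e)$, equal to $1$ for $|\t|\leq \e$ and vanishing for $|\t|\geq 2\e$, for some scale $\e = \e(t)$ to be optimized, and split $\f = \chi_\e \f + (1-\chi_\e)\f$. On the first piece I estimate directly, giving a contribution of order $\int_{|\t|\leq 2\e}(1+|\t|^{\nu_0-1})\,d\t \lesssim \e + \e^{\nu_0}$. On the second piece, $(1-\chi_\e)\f$ is smooth and supported away from zero, so $k$ integrations by parts yield no boundary terms; the contributions from derivatives hitting $\chi_\e$ bring factors $\e^{-j}$ but are supported on $|\t|\sim\e$ of measure $\sim\e$, while the interior contribution is bounded by $t^{-k}\int_\e^R r^{\nu_0-1-k}\,dr \lesssim t^{-k}\e^{\nu_0-k}$. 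Choosing $\e=1/t$ works directly when $\nu\leq 1$; when $\nu > 1$ I would perform an intermediate batch of integrations by parts on the near region (which remain integrable as long as the number of such integrations is at most $\lfloor\nu_0\rfloor$) before estimating the remainder directly, and then rebalance $\e$.

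The main obstacle is not the integration by parts itself, which is routine, but the careful balancing of the direct bound near $\t=0$ against the oscillatory bound away from it: the exponent $\nu_0-1$ of the singularity of $\f$ and the $t^{-k}$ gain must be combined through the scale $\e$ to produce a uniform $\pppg t^{-\nu}$ bound for any $\nu<\nu_0$. The strict inequality $\nu<\nu_0$ is precisely what guarantees the existence of an integer $k$ with $\nu < k \leq \nu_0+1$ and the positive margin $\nu_0-\nu$ needed to absorb the logarithmic/boundary losses coming from the cutoff.
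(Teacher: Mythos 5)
The paper does not prove this lemma; it quotes it from \cite{MallougRo}, with \cite{Dewez} as a further pointer, so there is no in-text proof to compare with. Your sketch --- factor out the exponential weight by shifting $\p$ to $\p - i\sup_I\Im\p$ (which changes none of $\p'$, $\p''$, \dots), dispose of $t\leq 1$ trivially using $\nu_0>0$ and the boundedness of $I$, and for $t\geq 1$ play non-stationary phase off against a cutoff at scale $\e(t)$ around the singularity of the amplitude --- is the standard strategy and the right one. You also correctly note that both $\chi_\e\f$ and $(1-\chi_\e)\f$ remain compactly supported in $I$, so the integrations by parts produce no boundary terms, and that the $\e^{-j}$ losses from derivatives landing on $\chi_\e$ are compensated by the measure $O(\e)$ of $\set{|\t|\sim\e}$.

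Two points to tighten. First, the admissible range of $k$ is slightly overstated: since $L^*g = (it)^{-1}\partial_\t\big(g/\p'\big)$, the iterate $(L^*)^k$ involves $\p^{(j)}$ for $j\leq k+1$ --- one order higher on the phase than on the amplitude --- so the bound $\nr{(L^*)^k\f}_\Kc\lesssim t^{-k}\big(1+|\t|^{\nu_0-1-k}\big)$ requires $k\leq\nu_0$, not $k\leq\nu_0+1$. (This off-by-one is arguably already present in the hypotheses as stated; in the paper's application $\p(\t)=\t+i\th_\m(\t)$ has every derivative bounded uniformly in $\m$, so it is harmless there.) Second, the case $\nu>1$ is left at the level of a gesture. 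A bookkeeping that covers all $\nu<\nu_0$ in one stroke is: take $\e=1/t$, perform $j$ integrations by parts on $\chi_\e\f$ and $k=j+1$ on $(1-\chi_\e)\f$, where $j$ is the largest integer with $j<\nu_0$. The near piece then gives $t^{-j}\big(\e+\e^{\nu_0-j}\big)\lesssim t^{-\nu_0}$, the far piece gives $t^{-k}\e^{\nu_0-k}=t^{-\nu_0}$ (or $t^{-\nu_0}\log t$ when $\nu_0\in\Z$), and $t^{-\nu_0}\log t\lesssim t^{-\nu}$ precisely because $\nu<\nu_0$; no case distinction on $\nu$ and no rebalancing of $\e$ is then needed.
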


Now we can finish the proof of Theorem \ref{th-diff-decay}.

\begin{proof} [Proof of Theorem \ref{th-diff-decay}]
It is enough to prove the result for $\k$ close to 1, so without loss of generality we can assume that \eqref{hyp-eta} holds. By density it is enough to prove the result for $F \in \Sc \times \Sc$. Let $\m \in (0,1]$. By Proposition \ref{prop-eAAc-Ilow} it is enough to estimate the difference between $\Ilow(t)$ and $\Ilowp(t)$. We recall that $\Ilow(t)$, $\Ilowp(t)$ and $\G_\m$ were defined in \eqref{def-Ilow}, \eqref{def-Ilowp} and \eqref{def-Gamma}, respectively. We have 
\begin{multline*}
\Thbxbt (\Ilow(t) - \Ilowp(t)) F\\
= \frac 1 {2i\pi} \int_{\t} \h(\t) e^{-it(\t + i \th_\m(\t))} \Thbxbt \big((\AAc-(\t + i \th_\m(\t)))\inv - (\AAp-(\t +i\th_\m(\t)))\inv \big) F \, d\t.
\end{multline*}
We can assume that the derivatives of $\th_\m$ are uniform in $\m \in (0,1]$. Then, by Lemma \ref{lem-res-time} and the estimates of Proposition \ref{prop-res-general} (with $\eta$ replaced by $\tilde \eta > \eta$ which still satisfies \eqref{hyp-eta}) there exists $c \geq 0$ which does not depend on $F \in \Sc \times \Sc$, $\m \in (0,1]$ or $t \geq 0$ such that  
\[
\nr{\Thbxbt (\Ilow(t) - \Ilowp(t)) F}_{\LL^{-\k s_1}} \leq c e^{t\m} \pppg t ^{-\b_t - \frac {\abs {\b_x}} 2 - \frac {s_1 + s_2} 2 - \frac \eta 2} \nr{F}_{\LL^{\k s_2}}.
\]
Then we let $\m$ go to 0, and the conclusion follows.
\end{proof}

\end{document}